\documentclass{amsart}
\usepackage{epsfig}
\usepackage{graphics}
\usepackage{amsfonts}
\usepackage{amsmath}
\usepackage{amssymb,mathrsfs}
\usepackage{latexsym}

\newtheorem{theorem}{Theorem}[section]
\newtheorem{lemma}{Lemma}[section]
\newtheorem{proposition}{Proposition}[section]
\newtheorem{corollary}{Corollary}[section]

\theoremstyle{definition}

\newtheorem{remark}{Remark}[section]

\DeclareMathOperator{\supp}{Supp}
\DeclareMathOperator{\Inte}{Int}
\RequirePackage[colorlinks,citecolor=blue,urlcolor=blue]{hyperref}
\allowdisplaybreaks[4]
\numberwithin{equation}{section}

\begin{document}
\title{Multifractal analysis of complex random cascades}
\author{Julien Barral}
\address{INRIA Rocquencourt, B.P. 105, 78153 Le Chesnay Cedex, France}
\email{Julien.Barral@inria.fr}

\author{Xiong Jin}
\address{INRIA Rocquencourt, B.P. 105, 78153 Le Chesnay Cedex, France}
\email{Xiong.Jin@inria.fr}

\begin{abstract}
We achieve the multifractal analysis of a class of complex valued statistically self-similar continuous functions.  For we use multifractal formalisms associated with pointwise oscillation exponents of all orders. Our  study exhibits new phenomena in multifractal analysis of continuous functions. In particular, we find examples of statistically self-similar such functions obeying the multifractal formalism and for which the support of the singularity spectrum is the whole interval $[0,\infty]$.
\end{abstract}

\keywords{Multiplicative cascades; Continuous function-valued martingales; Multifractal formalism, singularity spectrum, $m^{\text{th}}$ order oscillations, H\"older exponent}
\subjclass[2000]{Primary: 26A30; Secondary: 28A78, 28A80}
\maketitle

\section{Introduction}\label{intro}

This paper deals with the multifractal formalism for functions and the multifractal analysis of a new class of statistically self-similar functions introduced in~\cite{BJMpartII}. This class is the natural extension to  continuous functions of the random measures introduced in \cite{M2} and considered as a fundamental example of multifractal signals model since the notion of multifractality has been explicitely formulated  \cite{hentschel,FrischParisi,HaJeKaPrSh} (see also \cite{KP,HoWa,CK,Mol,B2} for the multifractal analysis and thermodynamical interpretation of these measures).  While the measures contructed in  \cite{M2} provide a model for the energy dissipation in a turbulent fluid, the functions we consider may be used to model the temporal fluctuations of the speed measured at a given point of the fluid. Also, they provide an alternative to models of multifractal signals which use multifractal measures, either to make a multifractal time change in Fractional Brownian motions \cite{Mandfin,BacryMuzy}, or to build wavelet series \cite{ABM,BSw}.

We exhibit statistically self-similar continuous functions possessing the remarkable property to obey the multifractal formalism, and simultaneously to be nowhere locally H\"older continuous. Specifically, the support of their multifractal spectra does contain the exponent 0, and the set  of points at which the pointwise H\"older exponent is 0 is dense in the support of the function. Moreover, these spectra can also be left-sided with singularity spectra supported by the whole interval $[0,\infty]$ (see Figure~\ref{left}). These properties are new phenomena in multifractal analysis of continuous self-similar functions. Let us explain this in detail, by starting with some recalls and remarks on multifractal analysis of functions.

\medskip

Multifractal analysis is a natural framework to describe geometrically the heterogeneity in the distribution at small scales of the H\"older singularities of a given locally bounded function or signal $f:U\subset \mathbb{R}^n\to \mathbb{R}^p$ ($n,p\ge 1$). In this paper, we will work in dimension 1 with continuous functions $f: I\to \mathbb{R}$ (or $\mathbb{C}$), where $I$ is a compact interval. The most natural notion of H\"older singularity is the pointwise H\"older exponent, which finely describes the best polynomial approximation of $f$ at any point $t_0\in I$ and is defined by 
$$
h_f(t_0)=\sup\{h\ge 0:\exists P\in \mathbb{C}[t], |f(t)-f(t_0)-P(t-t_0)|=O(|t-t_0|^h), t\to t_0\}.
$$
Then, the multifractal analysis of $f$ consists in computing the Hausdorff dimension of the H\"older singularities level sets, also called iso-H\"older sets 
$$
E_f(h)=\{t\in I: h_f(t)=h\},\quad  h\ge 0.
$$ 
The mapping $h\ge 0\mapsto \dim_H E_f(h)$ is called the {\it singularity spectrum} of $f$ ($\dim_{H}$ stands for the Hausdorff dimension, whose definition is recalled at the end of this section); the support of this spectrum is the set of those $h$ such that $E_f(h)\neq\emptyset$. The function is called multifractal when at least two iso-H\"older sets are non-empty. Otherwise, it is called monofractal. 

When the function $f$ is globally H\"older continuous, it has been proved in \cite{JAFFNOTE,Jafw} that the exponent $h_f(t)$ can always be obtained through the asymptotic behavior of the wavelet coefficients of $f$ located in a neighborhood of $t$, when the wavelet is smooth enough. Then, wavelet expansions  have been used successfully to characterize the iso-H\"older sets of wide classes of functions \cite{JAFFrev,Jafsiam1,Ben,JAFFAAP,JAFFJMPA,Aubry,BSw,Durand}, sometimes directly constructed as wavelet series (expansions in Schauder's basis have also been used \cite{JafMand1}).

For most of these functions, the singularity spectrum can be obtained as the Legendre transform of a free energy function computed on the wavelet coefficients. This is the so-called multifractal formalism for functions, studied and developed rigorously in \cite{Jafsiam1,JAFFJMP,JAFFJMPA,Jafw} after being introduced by physicists \cite{hentschel,FrischParisi,HaJeKaPrSh,muzy}. It is worth noting that for those functions mentioned above which satisfy the multifractal formalism, most of the time (see \cite{Jafsiam1,JafMand1,JAFFAAP,JAFFJMPA,BSw}) the wavelet expansion reveals that it is possible to closely relate the wavelet coefficients to the distribution of some positive Borel measure $\mu$ (sometimes discrete, as it can be shown for the saturation functions in Besov spaces \cite{JAFFJMPA}) satisfying the multifractal formalism for measures \cite{BMP,Olsen,Pesin}, for which the pointwise H\"older exponent is usually defined by 
\begin{equation}\label{expomu}
h_\mu(t)=\liminf_{r\to0^+} \frac{\log (\mu(B(t,r))}{\log (r)}.
\end{equation}

\smallskip

In practice, it may happen to be difficult to extract a good enough characterization of the sets $E_f(h)$ from the function $f$ expansion in wavelet series. This  leads to seeking for other methods of $h_f(t)$ estimation, or exponents that are close to $h_f(t)$ and easier to estimate. The most natural alternative is the first order oscillation exponent of $f$ defined as
$$
 h^{(1)}_f(t)= \liminf_{r\to 0^+} \frac{\log (\sup_{t,s\in B(t,r)}|f(s)-f(t)|)}{\log (r)}.
$$
If not an integer, $h^{(1)}_f(t)$ is equal to $h_f(t)$. When the function $f$ can be written as $g\circ \theta$, where $g$ is a monofractal function of (single) H\"older exponent $\gamma$ and $\mu=\theta'$ (the derivative of $\theta$ in the distributions sense) is a positive Borel measure satisfying the multifractal formalism for measures, we  have a convenient way to obtain the singularity spectrum of $f$ associated with the exponent $ h^{(1)}_f$ from that of $\mu$ (one exploits the equality $ h^{(1)}_f(t)=\gamma \,  h^{(1)}_\theta(t)=\gamma \, h_\mu(t)$ at good points $t$). Such a representation $f=g\circ \theta$ has been shown to exist for certain classes of deterministic multifractal functions mentioned above \cite{JafMand2,MandGaussian,Sadv}. 

\smallskip

It turns out that for the functions considered in this paper, in general wavelet basis expansions are not enough tractable to yield accurate information on the iso-H\"older sets. Also, this class of functions is versatile enough to contain elements which can be naturally represented under the same form $g\circ \theta$ as above, as well as elements for which such a natural decomposition does not exist. For these functions, inspired by the work achieved in \cite{JAFFJMP,Jafw}, we are going to compute the singularity spectrum by using the $m^{\tiny{th}}$ order oscillation pointwise exponents  ($m\ge 1$) and consider associated multifractal formalisms. To our best knowledge, this approach has not been used to treat a non-trivial example before.

\medskip

We denote by  $(f^{(m)})_{m\ge 1}$ the sequence of $f$ derivatives in the distribution sense. If $J$ is a non trivial compact subinterval of $I$, for $m\ge 1$, let
$$
{\rm Osc}^{(m)}_f(J)=\sup_{[t,t+mh]\subset J}|\Delta_h^mf(t)|,
$$
where
$\Delta_h^1f(t)=f(t+h)-f(t)$ and for $m\ge 2$, $\Delta_h^mf(t)=\Delta_h^{m-1}f(t+h)-\Delta_h^{m-1}f(t)$ (notice that ${\rm Osc}^{(1)}_f(J)=\sup_{s,t\in J}|f(s)-f(t)|$). Then, the pointwise oscillation exponent of order $m\ge 1$ of $f$ at $t\in\supp(f^{(m)})$ is defined as
$$
h^{(m)}_f(t)=\liminf_{r\to 0^+}\frac{\log {\rm Osc}^{(m)}_f(B(t,r))}{\log r}.
$$
We only  consider points in $\supp(f^{(m)})$, because from the pointwise regularity point of view, $\supp(f^{(m)})$ is the only set over which we can learn non-trivial information thanks to $h_f^{(m)}$. Indeed, outside this closed set, the function $f$ is locally equal to a polynomial of degree at most $m-1$, so $f$ is $C^\infty$.

The pointwise H\"older exponent $h_f$ carries  non-trivial information at points at which $f$ is not locally equal to a polynomial, that is points in $\bigcap_{m\ge 1} \supp(f^{(m)})$. 

If $t\in \bigcap_{m\ge 1} \supp(f^{(m)})$, it is clear that the sequence $(h^{(m)}_f(t))_{m\ge 1}$ is non decreasing. In fact, $\sup_{m\ge 1} h^{(m)}_f(t)=h_f(t)$. This is a consequence of Whitney's theorem on local approximation of functions by polynomial functions \cite{Whitney,Sendov} (the result is in fact proved for bounded functions): For every $m\ge 1$, there exists a constant $C_m$ (independent of $f$) such that for any subinterval $J$ of $I$, there exists a polynomial function $P$ of degree at most $m-1$ such that 
$$
|f(x)-P(x)|\le C_m {\rm Osc}^{(m)}_f(J).
$$
This, together with the definition of $h_f$ yields the following statement, which is also established in \cite{Jafw} by using wavelet expansion when $f$ is uniformly H\"older.
\begin{proposition}\label{hinfty}If $f:I\to\mathbb{C}$ is continuous, then for $t\in  \bigcap_{m\ge 1} \supp(f^{(m)})$, $h^{(m)}_f(t)$ converges to $h_f(t)$. Moreover, if $h_f(t)<\infty$, then $h^{(m)}_f(t)=h_f(t)$ for all $m> h_f(t)$. 
\end{proposition}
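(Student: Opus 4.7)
My plan is to establish two matching bounds and combine them: the direct lower bound $h_f^{(m)}(t)\ge\min(h_f(t),m)$ (valid for every $m\ge 1$), and the Whitney-type converse $h_f(t)\ge\min(m,h_f^{(m)}(t))$. From these the proposition will follow.

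For the lower bound I would fix $h'<\min(h_f(t),m)$ and use the definition of $h_f(t)$ to select a polynomial $P$ with $|f(s)-f(t)-P(s-t)|=O(|s-t|^{h'})$. Dropping from $P$ any monomial of degree $\ge h'$ does not spoil this (the discarded tail is itself locally $O(|s-t|^{h'})$), so one may assume $\deg P\le\lfloor h'\rfloor\le m-1$. Because $\Delta_h^m$ annihilates every polynomial of degree $\le m-1$, the decomposition
\[
\Delta_h^m f(s)=\sum_{j=0}^m\binom{m}{j}(-1)^{m-j}\bigl(f(s+jh)-f(t)-P(s+jh-t)\bigr)
\]
and the bound $Cr^{h'}$ on each of the $m+1$ summands when $[s,s+mh]\subset B(t,r)$ yield ${\rm Osc}_f^{(m)}(B(t,r))\le 2^m C r^{h'}$ and hence $h_f^{(m)}(t)\ge h'$. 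Letting $h'\uparrow\min(h_f(t),m)$ closes this direction. In particular, it already proves the proposition when $h_f(t)=\infty$, since then $h_f^{(m)}(t)\ge m\to\infty=h_f(t)$.

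For the converse, which I expect to be the main obstacle, I fix $\alpha<\min(m,h_f^{(m)}(t))$ and at the dyadic radii $r_n=2^{-n}$ invoke the Whitney theorem recalled just above the proposition to produce polynomials $P_n=\sum_{k=0}^{m-1}c_{k,n}(x-t)^k$ of degree $\le m-1$ with $\|f-P_n\|_{\infty,B(t,r_n)}\le C'\,2^{-n\alpha}$ for $n$ large. The difference $P_n-P_{n+1}$ is then a polynomial of degree $\le m-1$ of sup norm $O(2^{-n\alpha})$ on $B(t,r_{n+1})$; rescaling the standard equivalence of norms on $\mathbb{C}_{m-1}[x]$ to this ball yields the crucial coefficient estimate $|c_{k,n+1}-c_{k,n}|=O(2^{n(k-\alpha)})$. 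Telescoping, for each $k<\alpha$ the sequence $(c_{k,n})_n$ is Cauchy and converges to some $c_k$ with $|c_{k,n}-c_k|=O(2^{n(k-\alpha)})$, while for each $k>\alpha$ one obtains the growth bound $|c_{k,n}|=O(2^{n(k-\alpha)})$ (with at most a logarithmic factor in the boundary case $k=\alpha$). Setting $P(x)=\sum_{k<\alpha}c_k(x-t)^k$ and, for each $x$ near $t$, choosing the scale $n$ with $|x-t|\asymp r_n$, I split
\[
f(x)-P(x)=\bigl(f(x)-P_n(x)\bigr)+\sum_{k<\alpha}(c_{k,n}-c_k)(x-t)^k+\sum_{k\ge\alpha}c_{k,n}(x-t)^k
\]
and check that each of the three pieces is $O(r_n^\alpha)=O(|x-t|^\alpha)$; the high-degree contributions, which look dangerous at first sight, are tamed by $|c_{k,n}|\,|x-t|^k=O(r_n^{\alpha-k}\cdot r_n^k)=O(r_n^\alpha)$. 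This gives $h_f(t)\ge\alpha$, and letting $\alpha\uparrow\min(m,h_f^{(m)}(t))$ yields the converse.

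Combining the two directions: when $h_f(t)<\infty$ and $m>h_f(t)$, the lower bound gives $h_f^{(m)}(t)\ge h_f(t)$, while any strict inequality $h_f^{(m)}(t)>h_f(t)$ would let me pick $\alpha\in(h_f(t),\min(m,h_f^{(m)}(t)))$ and derive $h_f(t)\ge\alpha>h_f(t)$ from the converse, a contradiction; hence $h_f^{(m)}(t)=h_f(t)$ for all such $m$. The key technical difficulty—the step I expect to be the main obstacle—is that the approximants $P_n$ typically do not converge as a whole, only their low-degree parts do, and the substance of the Whitney direction is the telescope-based bound $|c_{k,n}|=O(r_n^{\alpha-k})$ for $k\ge\alpha$, which is strictly sharper than the $O(r_n^{-k})$ coming from the raw norm equivalence and is precisely what keeps the high-degree remainder $O(|x-t|^\alpha)$.
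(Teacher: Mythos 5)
Your proof is correct and follows the route the paper sketches, namely Whitney's local polynomial approximation combined with the definition of $h_f$. The paper provides no details beyond that one-line sketch, and your two-bound decomposition---the elementary direction via the fact that $\Delta_h^m$ annihilates polynomials of degree $<m$, and the Whitney direction via the telescoping coefficient estimates that extract a single limit polynomial from the approximants on shrinking balls---supplies exactly the argument the paper leaves to the reader.
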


Now, the multifractal analysis of $f$ consists in computing singularity spectra like
\begin{equation}\label{specfunc}
h\ge 0\mapsto \dim_H E^{(m)}_f(h),
\end{equation}
where for $h\ge 0$ and $m\in\mathbb{N}_+$,
$$
E^{(m)}_f(h)=\big\{t\in \supp(f^{(m)}): h^{(m)}_f(t)=h\big \},
$$
and for $h\ge 0$, 
$$
E^{(\infty)}_f(h)=\Big\{t\in \bigcap_{n\ge 1} \supp(f^{(n)}): h^{(\infty)}_f(t)=h\Big \}, \quad \big(\text{where } h^{(\infty)}_f(t)=h_f(t)\big).
$$
Proposition~\ref{hinfty} yields 
$$E^{(\infty)}_f(h)=E^{(m)}_f(h) \quad (\forall h\ge 0,\ \forall \ m> h).
$$
Inspired by the multifractal formalisms for  measures on the line  \cite{Re,BMP,Olsen,Pesin,LN} as well as  multifractal formalism for functions in \cite{JAFFJMP,Jafw}, it is natural to consider for each $m\ge 1$ the $L^q$-spectrum of $f$ associated with the oscillations of order $m$, namely 
$$
\tau^{(m)}_{f}(q)=\liminf_{r\to 0}\frac{\log \sup \left\{\sum_i {\rm Osc}^{(m)}_f(B_i)^q\right \}}{\log (r)},
$$
where the supremum is taken over all the  families of  disjoint closed intervals $B_i$ of radius $r$ with centers in $\supp(f^{(m)})$.  For all $h\ge 0$ and $m\ge 1$, we have (Proposition~\ref{Formalism})
$$
\dim_H E^{(m)}_f(h)\le (\tau^{(m)}_f)^*(h)=\inf_{q\in\mathbb{R}}h q-\tau^{(m)}_f(q),
$$ 
and due to Proposition \ref{hinfty},
\begin{equation}\label{upperbound1}
\dim_H E^{(\infty)}_f(h)\le (\tau^{(\infty)}_f)^*(h):=\inf_{m>h}(\tau^{(m)}_f)^*(h),
\end{equation}
a negative dimension meaning that $E^{(m)}_f(h)$ is empty. We will say that {\it the multifractal formalism holds for $f$ and $m\in \mathbb{N}_+\cup\{\infty\}$ at $h\ge 0$ if $E^{(m)}_f(h)$ is not empty and $\dim_H E^{(m)}_f(h)=(\tau^{(m)}_f)^*(h)$}.  

When $m=\infty$, the exponent $h^{(m)}_f$ is naturally stable by addition of a $C^\infty$ function, and so is the validity of the associated multifractal formalism. This is not the case when $m<\infty$ (see Corollary~\ref{CW} for an illustration).

\smallskip

As we said, our approach for the multifractal formalism is inspired by the ``oscillation method" introduced in \cite{JAFFJMP,Jafw} for uniformly H\"older functions. There, quantities like  $\tau^{(m)}_f$ are computed by using balls centered at points of finer and finer regular grids, and only for $q\ge 0$. So our definition of $\tau^{(m)}_f$ is more intrinsic, though equivalent. The choice $q\ge 0$ in \cite{Jafw} corresponds to the introduction of some functions spaces related with the functions $\tau^{(m)}_f$ that provide a natural link between  wavelets and oscillations approach to the multifractal formalism when $q\ge 0$ and $f$ is uniformly H\"older. It is worth noting that thanks to this link, for any $q\ge 0$, if we define $n_q$ as the smallest integer $n$ such that $n q-1\ge \tau_f^{(n)}(q)$, then for all $n\ge n_q$, the function $\tau^{(n)}_f$ coincides on the interval $[q,\infty]$ with the {\it scaling function} $\tau^{\mathit{W}}_f$ associated with the so-called {\it wavelet leaders} in \cite{JAFFJMP,Jafw}. This implies that for $h\ge 0$ such that the multifractal formalism holds at $h$ for $m=\infty$, even though  $E^{(n)}_f(h)=E^{(\infty)}_f(h)$ for all $n\ge [h]+1$,  $\dim_H E^{(\infty)}_f(h)$ may be equal to $(\tau^{(n)}_f)^*(h)$ only for $n\gg[h]+1$ as $h$ tends to 0.

\medskip

We now introduce the functions whose multifractal analysis will be achieved in this paper. 

\medskip

We fix  an integer $b\geq 2$. For every $n\ge 0$ we define $\mathscr{A}^n=\{0,\dots,b-1\}^n$ (by convention $\mathscr{A}^0$ contains the emty word denoted $\emptyset$), $\mathscr{A}^*=\bigcup_{n\ge 0} \mathscr{A}^n$, and $\mathscr{A}^{\mathbb{N}_+}=\{0,\dots,b-1\}^{\mathbb{N}_+}$.

If $n\ge 1$, and $w=w_1\cdots w_n\in \mathscr{A}^n$ then for every $1\le k\le n$, the word $w_1\dots w_k$ is denoted $w|_{k}$, and if $k=0$ then $w|_0$ stands for $\emptyset$.  Also, if $t\in\mathscr{A}^{\mathbb{N}_+}$ and $n\ge 1$, $t|_{n}$ denotes the word $t_1\cdots t_n$ and $t|_0$ the empty word. 

We denote by $\pi$ the natural projection of $\mathscr{A}^{\mathbb{N}_+}$ onto $[0,1]$: If $t\in \mathscr{A}^{\mathbb{N}_+}$, $\pi (t)=\sum_{k=1}^\infty t_kb^{-k}$.

When $t\in [0,1]$ is not a $b$-adic point, we identify it with the element of $\mathscr{A}^{\mathbb{N}_+}$  which represent its $b$-adic expansion, namely the element of $\pi^{-1}(\{t\})$.

We consider a sequence of independent copies $(W(w))_{w\in \mathscr{A}^*}$ of a random vector $$W=(W_0,\dots,W_{b-1})$$ whose components are complex, integrable, and satisfy $\mathbb{E}(\sum_{i=0}^{b-1}W_i)=1$. Then,  we define the sequence of functions 
\begin{equation}\label{Fn'}
F_{W,n}(t)=\int_0^t b^{n}\prod_{k=1}^{n}W_{u_{k}}(u|_{k-1})\, \text{d}u.
\end{equation}

For $q\in\mathbb{R}$ let
\begin{equation}\label{varphi}
 \varphi_W(q)=-\log_b\mathbb{E}\Big (\sum_{i=0}^{b-1}\big (\mathbf{1}_{\{W_i\neq 0\}}|W_i|^q\big )\Big ).
\end{equation}
The assumption $\mathbb{E}(\sum_{i=0}^{b-1}W_i)=1$ implies that $\varphi_W(1)\le 0$ with equality if and only if $W\ge 0$, i.e., the components of $W$ are non-negative almost surely. In this case only, all the functions $F_{W,n}$ are non-decreasing almost surely.

\smallskip

The following results are established in \cite{BJMpartII}. 

\medskip

\noindent
{\bf Theorem\,A} \cite{BJMpartII} {\bf (Non-conservative case)} {\it Suppose that $\mathbb{P}\Big (\sum_{i=0}^{b-1}W_i\neq 1\Big )>0$ and  there exists $p>1$ such that $\varphi_W(p)>0$. Suppose, moreover, that either $p\in (1,2]$ or $\varphi_W(2)>0$.

\begin{enumerate}
\item
$(F_n)_{n\ge 1}$ converges uniformly, almost surely and in $L^p$ norm, as $n$ tends to $\infty$, to a function $F=F_W$, which is non decreasing if $W\ge 0$.  Moreover, the function $F$ is $\gamma$-H\"older continuous for all $\gamma$ in $(0,\max_{q\in(1,p]}\varphi_W(q)/q)$.

\item $F$ satisfies the statistical scaling invariance property:
\begin{equation}\label{sesi}
F=\sum_{i=0}^{b-1}\mathbf{1}_{[i/b,(i+1)/b]}\cdot \Big (F(i/b)+W_i\, F_i\circ S_i^{-1}\Big ),
\end{equation}
where $S_i(t)=(t+i)/b$, the random objects $W$, $F_0,\dots,F_{b-1}$ are independent, and the $F_i$ are distributed like $F$ and the equality holds almost surely.
\end{enumerate}
}

\medskip

\noindent
{\bf Theorem\,B} \cite{BJMpartII} {\bf(Conservative case)} {\it Suppose that $\mathbb{P}\Big (\sum_{i=0}^{b-1}W_i=1\Big )=1$. 

\begin{enumerate}
\item
If there exists $p>1$ such that $\varphi_W(p)>0$, then the same conclusions as in Theorem~A hold. 
\smallskip

\item {\bf (Critical case)} Suppose that $\lim_{p\to\infty} \varphi_W(p)=0$ (in particular $\varphi_W$ is increasing and $\varphi_W(p)<0$ for all $p>1$). This is equivalent to the fact that $\mathbb{P}(\forall\ 0\le i\le b-1,\ |W_i|\le 1)=1$ and $\sum_{i=0}^{b-1}\mathbb{P}(|W_i|=1)=1$.

\smallskip

Suppose also that  $\mathbb{P}(\#\{i: |W_i|=1\}=1)<1$,  and there exists $\gamma\in (0,1)$ such that, with probability 1, one of the two following properties holds for each $0\le i\le b-1$
\begin{equation}\label{critical}
\begin{cases}
\text{either }  |W_i|\le \gamma,\\
\text{or } |W_i|=1 \text{ and } \Big (\sum_{k=0}^{i-1} W_i,\sum_{k=0}^i W_i\Big ) \in \{(0,1),(1,0)\} 
\end{cases}.
\end{equation}
Then, with probability 1, $(F_n)_{n\ge 1}$ converges almost surely uniformly to a limit $F=F_W$ which is  nowhere locally uniformly H\"older and satisfies part 2. of Theorem~A. 
\end{enumerate}
}
When the components of $W$ are non-negative (resp. positive), the function $F_W$ is non-decreasing (resp. increasing) and the measure $F_W'$ is the measure considered in \cite{M2,KP}.

\medskip

In the rest of the paper, we will work with the natural and more general model of function constructed as follows. Instead of considering only one multiplicative cascade, we consider a couple $(W,L)$ of random vectors taking values in $\mathbb{C}^b\times {\mathbb{R}_+^*}^b$. We assume that both $W$ and $L$ satisfy the same property as $W$ in the previous paragraph:  $\mathbb{E}(\sum_{i=0}^{b-1}W_i)=1=\mathbb{E}(\sum_{i=0}^{b-1}L_i)$.

We consider a sequence of independent copies $(W(w),L(w))_{w\in \mathscr{A}^*}$ of $(W,L)$, and we also assume that both $W$ and $L$ satisfy the assumptions of Theorem~A or B. This yields almost surely two continuous, functions $F_W$ and $F_L$, the former being increasing. The function we consider over $[0,F_L(1)]$ is 
$$
F=F_W\circ F_L^{-1}.
$$ 

When $F_W$ is non-decreasing, the measure $F'$ has been considered in \cite{B2}, and also in \cite{AP} under the assumption that  $\sum_{i=0}^{b-1}L_i=1$ almost surely. 

If the components of $W$ and $L$ are deterministic real numbers and $\sum_{i=0}^{b-1}W_i=1=\sum_{i=0}^{b-1}L_i$, we recover the self-affine functions constructed in \cite{Bed}. The multifractal analysis of these functions has been achieved in \cite{Jafsiam1} by using their wavelet expansion (however, the endpoints of the spectrum are not investigated).  It is also possible to use the alternative approach consisting in showing that $F_W$ can be represented as a monofractal functions in multifractal time \cite{MandGaussian,Sadv}, and then consider the exponent $h_F^{(1)}$ rather than $h_F$. It turns out that such a time change also exists in the random case under restrictive assumptions on $W$, which include the deterministic case (see \cite{BJMpartII}). This is useful because, as we said, our calculations showed that in general in the random case it seems difficult to exploit the wavelet transform of $F$ to compute its singularity spectrum. Moreover, this approach could not cover all the cases since for the functions build in Theorem~B(2), there is no natural time change (see \cite{BJMpartII}). Also, these functions are nowhere locally uniformly H\"older and do not belong to any critical Besov space (specifically, their singularity spectra have an infinite slope at 0), so that there is few expectation to  characterize their pointwise H\"older exponents through their wavelet transforms. 

Using the $m^{\tiny{th}}$ order oscillation pointwise exponents provides an efficient alternative tool. We obtain the following results (for simplicity, we postpone to Section~\ref{waekassump} the discussion of an extension under weaker assumptions). We discard the obvious case where $W=L$, for which $F=\text{Id}_{[0,F_L(1)]}$ almost surely. Also, we assume that 
$$
\text{$\varphi_L>-\infty$ over $\mathbb{R}$ and $0<L_i< 1$ almost surely. }
$$

The first result concerns functions $F$ with bell-shaped singularity spectra. We find that for some of these functions, the left endpoint of their spectra is equal to $0$. This is a new phenomenon in the multifractal analysis of statistically self-similar continuous functions.      

\begin{theorem}{\bf (Bell shaped spectra)}\label{AM} Suppose that $\mathbb{P}(\sum_{i=0}^{b-1}\mathbf{1}_{\{W_i\neq 0\}}\ge 2)=1$ and $\varphi_W>-\infty$  over $\mathbb{R}$. For $q\in\mathbb{R}$, let $\tau(q)$ be the unique solution of the equation $\mathbb{E}(\sum_{i=0}^{b-1}\mathbf{1}_{\{W_i\neq 0\}}|W_i|^qL_i^{-t})=1$. The function $\tau$ is concave and analytic. With probability 1, 

\begin{enumerate}
\item  $\supp(F^{(m)})=\supp(F')$ for all $m\in\mathbb{N}_+$ and $\dim_H \supp(F')=-\tau(0)$. 

\item For all $h\ge 0$ and $m\in \mathbb{N}_+\cup\{\infty\}$, $\dim_H E^{(m)}_F(h)=(\tau^{(m)}_F)^*(h)=(\tau^{(1)}_F)^*(h)$, a negative dimension meaning that $E^{(m)}_F(h)$ is empty. Moreover, $E^{(m)}_F(h)\neq\emptyset$ if $(\tau^{(1)}_F)^*(h)=0$. In other words, for all $m\in \mathbb{N}_+\cup\{\infty\},$ $F$ obeys the multifractal formalism  at every $h\ge 0$ such that $(\tau^{(m)}_F)^*(h)\ge 0$. In addition, if $F$ is built as in Theorem~B(2) (critical case), the left endpoint of these singularity spectra is the exponent $0$, and the corresponding level set is dense, with Hausdorff dimension 0. 

\item For all $m\in\mathbb{N}_+$, $\tau^{(m)}_F=\tau$ on the interval $J=\{q\in\mathbb{R}:\tau'(q)q-\tau(q)\ge 0\}$, and if $\overline{q}=\sup(J)<\infty$ (resp. $\underline{q}:=\inf (J)>-\infty$) then $\tau^{(m)}_F(q)=\tau'(\overline{q})q$ (resp. $\tau'(\underline{q})q$) over $[\overline{q},\infty)$ (resp. $(-\infty,\underline{q}]$). \\
Moreover, if  there does not exist $H\in (0,1)$ such that for all $0\le i\le b-1$ we have $|W_i|\in \{0,L_i^H\}$ then $\tau$ is strictly concave over $J$; otherwise, $\tau(q)=qH+\tau(0)$ and $F$ is monofractal with a H\"older exponent equal to $H$. 
\end{enumerate}
\end{theorem}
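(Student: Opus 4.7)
The backbone of the proof is the $b$-adic self-similar structure inherited from $F_W$ and $F_L$. For $w\in\mathscr{A}^n$, put $L(w)=\prod_{k=1}^nL_{w_k}(w|_{k-1})$, $W(w)=\prod_{k=1}^nW_{w_k}(w|_{k-1})$, and $I_w=F_L([\pi(w),\pi(w)+b^{-n}])$. Iterating the scaling relation~\eqref{sesi} for both $F_W$ and $F_L$ shows that, on $I_w$, the function $F$ is equal, up to an additive constant, to $W(w)$ times an affine rescaling of an independent copy of $F$. Consequently, for every $m\ge 1$,
$$
{\rm Osc}^{(m)}_F(I_w)=|W(w)|\cdot {\rm Osc}^{(m)}_{F^{(w)}}\bigl([0,F_{L^{(w)}}(1)]\bigr),
$$
with analogous two-sided estimates for balls contained in $I_w$. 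The support $\supp(F^{(m)})$ therefore equals the $F_L$-image of the random limit set $K_W=\bigcap_n\bigcup\{[\pi(w),\pi(w)+b^{-n}]:W(w)\neq 0\}$, independent of $m$. The identity $\dim_H\supp(F')=-\tau(0)$ is then the standard Mandelbrot percolation dimension formula for $K_W$ (the equation $\mathbb{E}(\sum_i\mathbf{1}_{W_i\neq 0}L_i^{-\tau(0)})=1$ is precisely the defining equation of $\tau(0)$), transported to $[0,F_L(1)]$ via the bi-H\"older control on $F_L$ provided by $\varphi_L>-\infty$ and $0<L_i<1$.

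To compute $\tau^{(m)}_F$, I would cover $\supp(F^{(m)})$ at scale $r$ by a stopping family $\mathcal{F}_r=\{w:L(w)\le r<L(w|_{|w|-1})\}$ pushed forward by $F_L$, reducing the relevant sum to $\sum_{w\in\mathcal{F}_r}|W(w)|^q$. The non-negative cascade with weights $|W_i|^qL_i^{-\tau(q)}$ has expected total mass $1$ by the very definition of $\tau(q)$; a standard uniform $L^p$-integrability estimate for its martingale (valid on $\Inte(J)$, which is exactly where the auxiliary martingale has a non-degenerate limit with good moments) yields $\sum_{w\in\mathcal{F}_r}|W(w)|^q\asymp r^{\tau(q)}$ up to subexponential factors, hence $\tau^{(m)}_F(q)=\tau(q)$ on $J$. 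The linearization on the complement of $J$ is the usual phase-transition behaviour at the endpoints $\underline{q},\overline{q}$, where concentration on a single branch dominates. Combining with Proposition~\ref{Formalism} gives $\dim_H E^{(m)}_F(h)\le(\tau^{(m)}_F)^*(h)$, and concavity together with equality of $\tau^{(m)}_F$ on $J$ forces $(\tau^{(m)}_F)^*=(\tau^{(1)}_F)^*$ throughout.

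The heart of the argument is the lower bound on the spectrum. For each $q\in\Inte(J)$, I would build the Peyri\`ere--Mandelbrot measure $\mu_q$ on $\mathscr{A}^{\mathbb{N}_+}$ from the weights $\mathbf{1}_{\{W_i\neq 0\}}|W_i|^qL_i^{-\tau(q)}$, push it forward by $F_L\circ\pi$, and use branching-random-walk large deviation estimates in the spirit of~\cite{KP,Mol,B2} to show that $\mu_q$-a.e. $t$ satisfies both $\log|W(w_n(t))|/\log L(w_n(t))\to\tau'(q)$ and $\log\mu_q(I_{w_n(t)})/\log L(w_n(t))\to\tau^*(\tau'(q))$, where $w_n(t)$ is the generation-$n$ address of $t$. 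Combined with the scaling identity of paragraph~1, this gives $h^{(m)}_F(t)=\tau'(q)$ $\mu_q$-a.s. for every $m\ge 1$ (and for $m=\infty$ via Proposition~\ref{hinfty}), and Billingsley's lemma delivers $\dim_H E^{(m)}_F(\tau'(q))\ge\tau^*(\tau'(q))$. The main obstacle will be passing from the dyadic scales $L(w_n(t))$ to arbitrary radii $r$ while keeping \emph{two-sided} bounds on ${\rm Osc}^{(m)}_F(B(t,r))$: since the $W_i$ are complex, one must rule out anomalous cancellations in $\Delta_h^mF$, which I expect to do via a uniform negative-moment estimate on ${\rm Osc}^{(m)}_F([0,F_L(1)])$ obtained from the self-similar functional equation and a sub-multiplicativity argument.

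The monofractal alternative $|W_i|\in\{0,L_i^H\}$ collapses the cascade to $|W(w)|=L(w)^H$ on the survival set, whence $\tau(q)=qH+\tau(0)$ and $h^{(m)}_F\equiv H$; strict concavity of $\tau$ on $J$ in the complementary case is immediate from implicit differentiation of the defining equation, using positivity of the variance of $\log|W_i|/\log L_i$ under the tilted cascade. For the critical construction of Theorem~B(2), infinite branches along letters with $|W_{w_k}(w|_{k-1})|=1$ yield points $t$ at which ${\rm Osc}^{(m)}_F(B(t,r))$ is of unit order at infinitely many scales, forcing $h^{(m)}_F(t)=0$; density is clear from the branching structure, and the hypothesis $\mathbb{P}(\#\{i:|W_i|=1\}=1)<1$ makes the branching rate strictly sub-exponential, so this set has Hausdorff dimension $0$.
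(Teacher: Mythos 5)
Your overall architecture matches the paper's: the cascade identity $\mathrm{Osc}^{(m)}_F(I^L_w)=|Q_W(w)|\,Z_W^{(m)}(w)$, the computation of the $L^q$-spectrum by comparing with $\tau$, the tilted Mandelbrot measures $\mu_q$ with weights $\mathbf{1}_{\{W_i\neq0\}}|W_i|^qL_i^{-\tau(q)}$, the use of a Peyri\`ere-type change of measure and the mass distribution principle, and the H\"older-inequality argument for strict concavity. Those parts are sound sketches and follow the paper's route (Propositions~\ref{Lqspectrum}, \ref{muq}, \ref{keypro}). However, there are three genuine gaps.

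First, and most importantly, you do not address the endpoints of the spectrum, i.e.\ showing that $E^{(m)}_F(h)\neq\emptyset$ when $(\tau_F^{(1)})^*(h)=0$. This is precisely the content of the clause ``$E^{(m)}_F(h)\neq\emptyset$ if $(\tau^{(1)}_F)^*(h)=0$'' and is singled out by the authors as the delicate issue. Your construction of $\mu_q$ and your large-deviation estimates work on $\Inte(J)$, but the limiting measures at $q=\underline q,\overline q$ (finite) are critical and require either the derivative martingale of Liu (Theorem 2.5 in \cite{Liu}), as the paper does in Proposition~\ref{endpoints}, or a sequence $q_k\to q$ with a carefully tuned decay of $\tau(2q_k)-2\tau(q_k)$ as in Propositions~\ref{qk}--\ref{endpoints3} for $q=\pm\infty$. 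Your phrase ``the usual phase-transition behaviour'' covers the linearization of $\tau^{(m)}_F$ but not the non-emptiness of the corresponding iso-H\"older sets.

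Second, your argument for the critical case of Theorem~B(2) is incorrect as stated. You invoke ``infinite branches along letters with $|W_{w_k}(w|_{k-1})|=1$''; but under the hypotheses $\sum_i\mathbb{P}(|W_i|=1)=1$ and $\mathbb{P}(\#\{i:|W_i|=1\}=1)<1$, the tree of addresses following only $|W|=1$ letters is a \emph{critical} (non-degenerate) Galton--Watson tree, which is a.s.\ finite. There is therefore no infinite branch of the sort you describe, and the heuristic that $\mathrm{Osc}^{(m)}_F$ stays of unit order along such a branch has no base. The correct mechanism is the measure $\mu_\infty$ of Proposition~\ref{qk}, which captures points along which $|Q_W(t|_n)|$ decays \emph{subexponentially} rather than staying bounded, and then Proposition~\ref{endpoints2} delivers both $h_F^{(m)}=0$ on a set of positive $\nu_\infty$-measure and $\dim_H\nu_\infty\ge\tau^*(0)=0$.

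Third, two smaller points you flag but do not resolve. The passage from $b$-adic intervals $I^L_{w}$ to arbitrary balls $B(\widetilde t,r)$ requires two-sided control on the oscillations of the \emph{neighboring} intervals $(t|_n)^{\pm 1}$; this is why Propositions~\ref{exposant} and~\ref{keypro} carry the extra parameter $\gamma\in\{-1,0,1\}$ throughout, and skipping this leaves only the one-sided bound. Also, the assertion that $\supp(F^{(m)})$ for all $m$ equals the $F_L$-image of the Mandelbrot percolation set $K_W$ requires proving that $F$ is nowhere locally a polynomial on $\supp(F')$; this is Proposition~\ref{nonpol} and needs a separate argument using the tail $\sigma$-algebra and the hypothesis $W\neq L$, not just the observation that $K_W$ is independent of $m$.
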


\begin{figure}[htp]
\centering
\begin{minipage}[c]{\textwidth}
\centering
\includegraphics[width=0.45\textwidth]{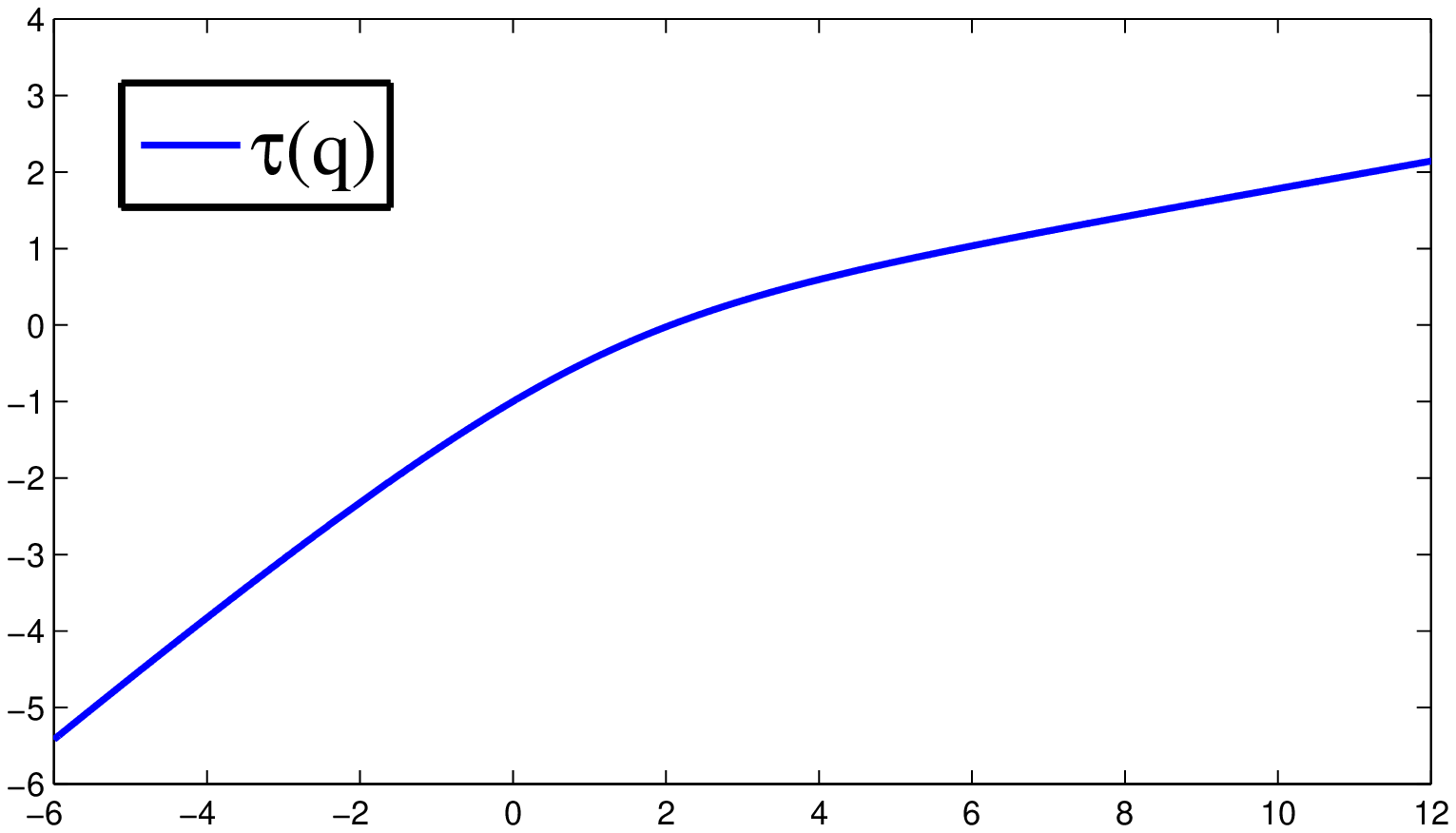}%
\includegraphics[width=0.45\textwidth]{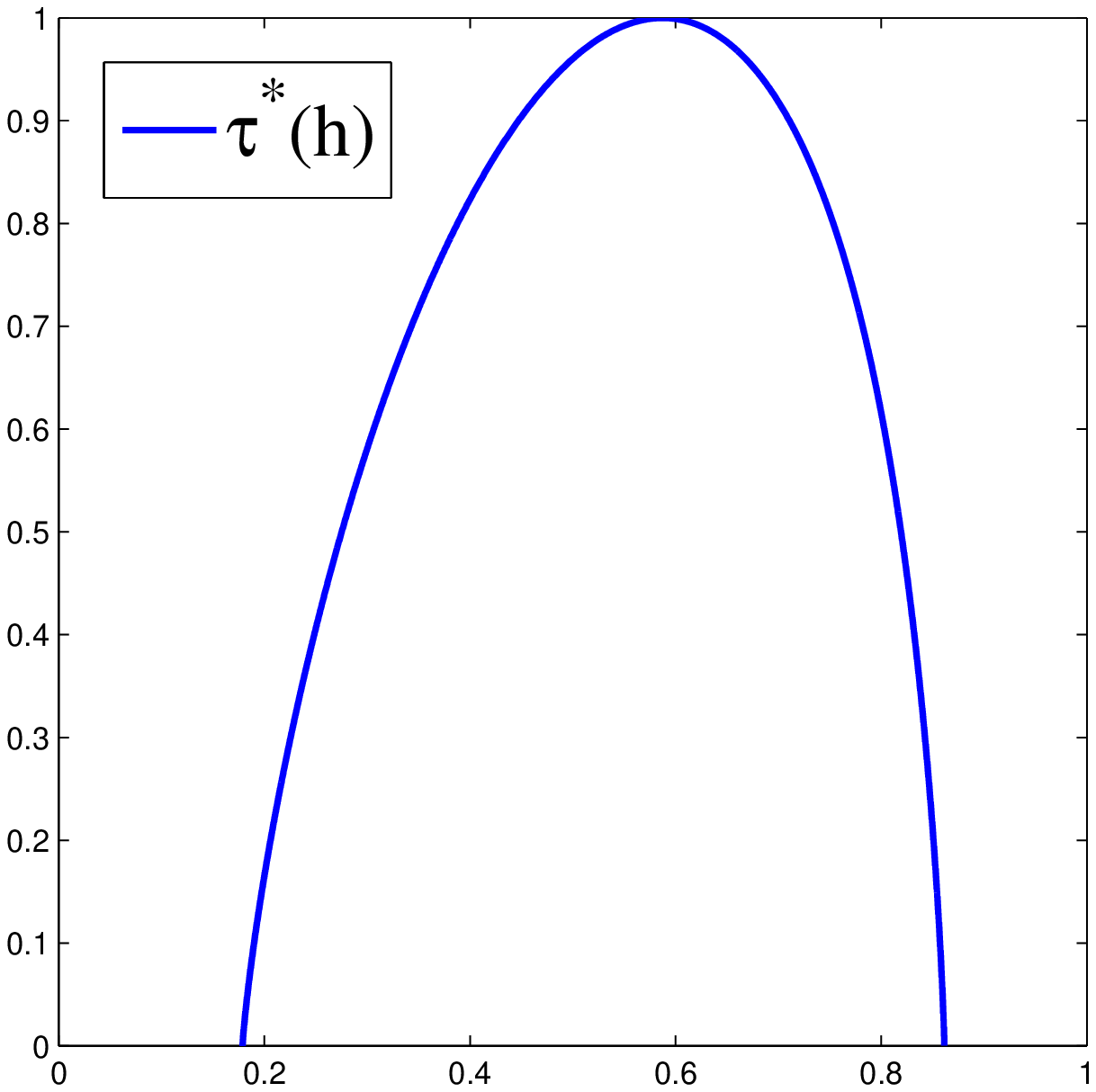}
\caption{Bell shaped spectrum in the case where the left endpoint is not $0$.}\label{bell}
\end{minipage}
\begin{minipage}[c]{\textwidth}
\centering
\includegraphics[width=0.45\textwidth]{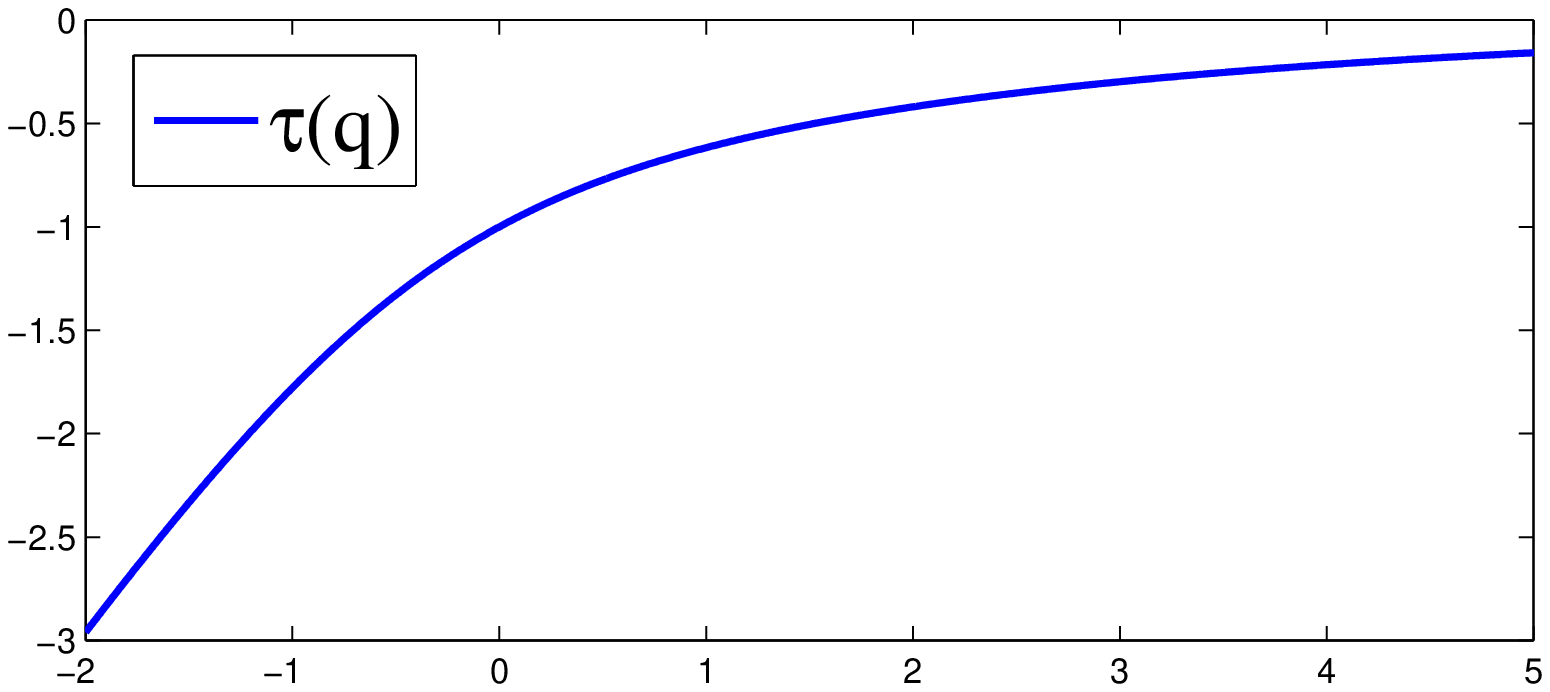}%
\includegraphics[width=0.45\textwidth]{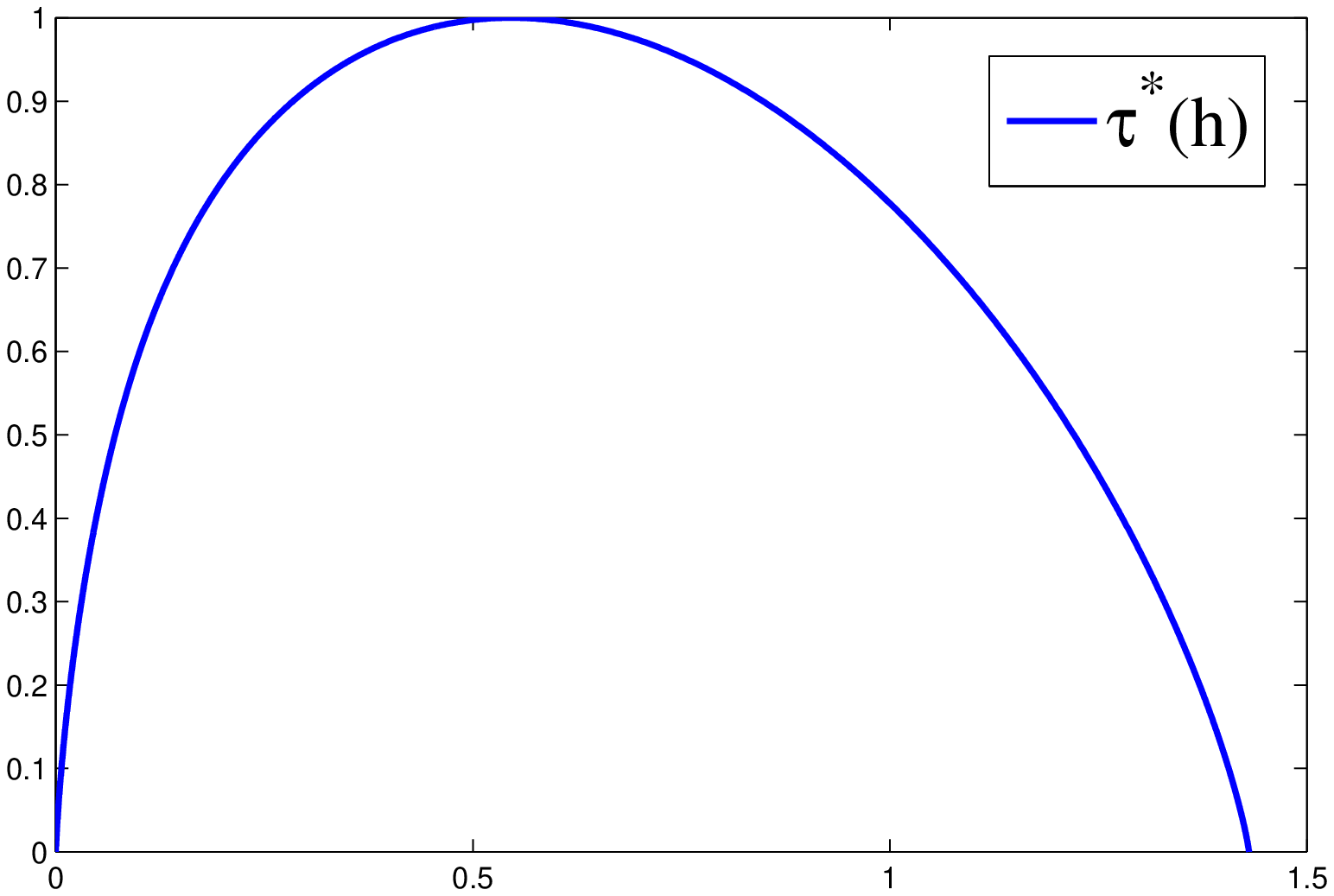}
\caption{Bell shaped spectrum in the critical case where the left endpoint is $0$.}\label{bell1}
\end{minipage}
\end{figure}

Notice that $-\tau(0)<1$ if and only if at least one component of $W$ vanishes with positive probability, and in this case the support of $F'$ is a Cantor set.

 In the next result, we get functions $F$ obeying the multifractal formalism and for which the singularity spectra are left-sided, i.e., increasing,  and with a support equal to the whole interval $[0,\infty]$. This is another new phenomenon in multifractal analysis of continuous staitistically self-similar functions. 
\begin{theorem}\label{AM2} {\bf (Left-sided spectra)} Suppose that $\mathbb{P}(\sum_{i=0}^{b-1}\mathbf{1}_{\{W_i\neq 0\}}\ge 2)=1$ and $\varphi_W(q)>-\infty$ over $\mathbb{R}_+$. For $q\in\mathbb{R}_+$, let $\tau(q)$ be defined as in Theorem~\ref{AM} . The function $\tau$ is concave, and analytic over $(0,\infty)$. \\ Suppose also that $\mathbb{E}(\sum_{i=0}^{b-1}\mathbf{1}_{\{W_i\neq 0\}}L_i\log(|W_i|))=-\infty$, i.e. $\tau'(0)=\infty$.  Finally, suppose that $\mathbb{E}\big ((\max_{0\le i\le b-1} |W_i|)^{-\varepsilon}\big )<\infty$ for some $\varepsilon>0$. 

Then, the same conclusions as in Theorem~\ref{AM} hold. Moreover, the singularity spectra are left-sided, and $h^{(m)}_F=\infty$ for all $m\in\mathbb{N}_+\cup\{\infty\}$ on a set of full dimension in $\supp(F')$. In addition, if $F$ is built as in Theorem~B(2) (critical case), the support of the spectra is $[0,\infty]$. 
\end{theorem}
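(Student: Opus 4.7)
\medskip

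\noindent\emph{Proof proposal.} The plan is to follow the blueprint of the proof of Theorem~\ref{AM} and to add three ingredients that are specific to the left-sided situation: the truncated domain of definition $\mathbb{R}_+$ for $\tau$, the identification of a full-dimensional set on which $h^{(m)}_F=\infty$, and, in the critical case, the production of exponents $h=0$. The upper bound $\tau^{(m)}_F(q)\ge \tau(q)$ for $q\in J$ (together with the linear extension beyond the rightmost exposed point of $\tau$ if $\overline{q}<\infty$) is obtained by covering $\supp(F^{(m)})$ by $b$-adic intervals of generation $n$ and applying the statistical self-similarity~\eqref{sesi}; this, combined with the general Legendre-transform inequality $\dim_H E^{(m)}_F(h)\le (\tau^{(m)}_F)^*(h)$, reduces the theorem to producing, for every $h=\tau'(q)$ with $q>0$, a set of Hausdorff dimension $q\tau'(q)-\tau(q)$ inside $E^{(m)}_F(h)$, together with a set of Hausdorff dimension $-\tau(0)$ inside $E^{(m)}_F(\infty)$.

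For the first task I would introduce, for each $q>0$, the Mandelbrot cascade built on the non-negative weights
\[
\widetilde{W}_i(q)=\mathbf{1}_{\{W_i\neq 0\}}\,|W_i|^q\,L_i^{-\tau(q)},\qquad 0\le i\le b-1,
\]
whose expected sum equals $1$ by the defining equation of $\tau$. Since $\varphi_W(q)>-\infty$ and $\varphi_L>-\infty$ on $\mathbb{R}_+$, moment methods in the spirit of Kahane--Peyri\`ere yield a non-degenerate limit measure $\mu_q$ supported in the sense of coding on $\supp(F')$. A law-of-large-numbers argument along $b$-adic branches gives, at $\mu_q$-almost every $t$, both
\[
\lim_{n\to\infty}\frac{\log \mathrm{Osc}^{(m)}_F([t|_n]_b)}{-n\log b}=\tau'(q)
\qquad\text{and}\qquad
\lim_{n\to\infty}\frac{\log \mu_q([t|_n]_b)}{-n\log b}=q\tau'(q)-\tau(q),
\]
from which one passes to radii by the standard translation from $b$-adic to Euclidean balls together with control of $F_L$ obtained from the assumptions on $L$. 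The mass distribution principle then gives $\dim_H E^{(m)}_F(\tau'(q))\ge q\tau'(q)-\tau(q)=(\tau^{(m)}_F)^*(\tau'(q))$, which matches the upper bound.

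To obtain the full-dimensional subset of $E^{(m)}_F(\infty)$, I would pick a sequence $q_n\downarrow 0^+$, observe that $\mu_{q_n}$-almost every $t$ satisfies $h^{(m)}_F(t)=\tau'(q_n)\to\infty$ and has local dimension $q_n\tau'(q_n)-\tau(q_n)$, and then combine the supports via a diagonal argument. The crucial analytic input is that $q\tau'(q)\to 0$ as $q\to 0^+$, so that $q_n\tau'(q_n)-\tau(q_n)\to -\tau(0)=\dim_H\supp(F')$. This is where the negative-moment hypothesis $\mathbb{E}\bigl((\max_i|W_i|)^{-\varepsilon}\bigr)<\infty$ enters: it allows one to extend $\tau$ (or at least control $\tau(q)-\tau(0)$) to a small neighbourhood of $0^+$ with a H\"older-type rate, forcing the required vanishing. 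In the critical case of Theorem~B(2), the nowhere locally uniformly H\"older behaviour of $F_W$ yields a dense subset of $[0,1]$ at which $h^{(m)}_{F_W}=0$; transferring through $F_L^{-1}$ (which is continuous and increasing) gives a dense subset of $\supp(F')$ on which $h^{(m)}_F=0$ for all $m\ge 1$, with Hausdorff dimension $0$, completing the identification of the support as $[0,\infty]$.

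The main obstacle is the $h=\infty$ step. The lower bound at each finite $h=\tau'(q)$ is a fairly standard cascade computation once $\mu_q$ is built, but producing a \emph{single} subset of $\supp(F')$ of Hausdorff dimension $-\tau(0)$ on which $h^{(m)}_F$ is actually infinite (and not merely arbitrarily large along a subsequence) requires a careful diagonal construction together with the precise asymptotic $q\tau'(q)\to 0$. Verifying this asymptotic under the stated hypothesis on $\max_i|W_i|$ is the technical heart of the argument and is the place where Theorem~\ref{AM2} genuinely goes beyond Theorem~\ref{AM}.
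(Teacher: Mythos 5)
Your reduction of the proof to the exponent $h=\tau'(0)=\infty$ matches the paper, as does the lower-bound construction at finite $h=\tau'(q)$, $q>0$, via the auxiliary cascades $\mu_q$, and the treatment of the left endpoint in the critical case (which in the paper goes through the $q=\infty$ endpoint analysis of Theorem~\ref{AM}). However, the proposed ``diagonal argument'' for the $h=\infty$ step is a genuine gap, and you have in fact identified the obstacle without resolving it. The measures $\mu_{q_n}$, $q_n\downarrow 0^+$, are each carried by sets on which $h^{(m)}_F=\tau'(q_n)$, a \emph{finite} value; those sets are disjoint from $E^{(m)}_F(\infty)$. No union, intersection, or diagonal combination of their supports lands inside $E^{(m)}_F(\infty)$, and the asymptotic $q\tau'(q)\to 0$ does not rescue this: it controls the limiting local dimension, not the limiting exponent.

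The paper's resolution replaces the family $\{\mu_{q_n}\}$ by the single measure $\mu_0=F'_{\widetilde W}$, the Mandelbrot cascade with non-negative weights $\widetilde W_i=\mathbf{1}_{\{W_i\neq 0\}}L_i^{-\tau(0)}$, so that $\nu_0=\mu_0\circ\pi^{-1}\circ F_L^{-1}$ is already of full dimension $-\tau(0)$ in $\supp(F')$. One then shows directly that $\underline\alpha_W^{(1),\gamma}(t)=\infty$ at $\mu_0$-a.e.\ $t$: for each truncation level $a\in(0,1)$ one introduces $W^{(a)}$ (with $|Q_W|\le Q_{W^{(a)}}$), applies the law of large numbers under $\mu_0$ to get $\lim_n\log Q_{W^{(a)}}((t|_n)^\gamma)/(-n)=h(a)$, deduces $\underline\alpha_W^{(1),\gamma}(t)\ge h(a)$ for every $a$, and lets $a\to 0$ so that $h(a)\to\infty$. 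Combined with the control of $F_L$ from Lemma~\ref{controlI} and Proposition~\ref{exposant}, this yields $h^{(m)}_F(\widetilde t)=\infty$ at $\nu_0$-a.e.\ $\widetilde t$. Finally, the hypothesis $\mathbb{E}\big((\max_i|W_i|)^{-\varepsilon}\big)<\infty$ is not used to control $\tau$ near $0^+$ as you suggest; it enters via Proposition~\ref{xmoments}(2), ensuring $\mathbb{E}\big((Z_W^{(1)})^{-p}\big)<\infty$ for small $p>0$, which is what makes $\log Z_W^{(1)}((t|_n)^\gamma)/n\to 0$ hold $\mu_0$-a.e.\ so that the factor $Z_W^{(1)}$ does not spoil the exponent estimate.
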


\begin{figure}[htp]
\centering
\begin{minipage}[c]{\textwidth}
\centering
\includegraphics[width=0.8\textwidth]{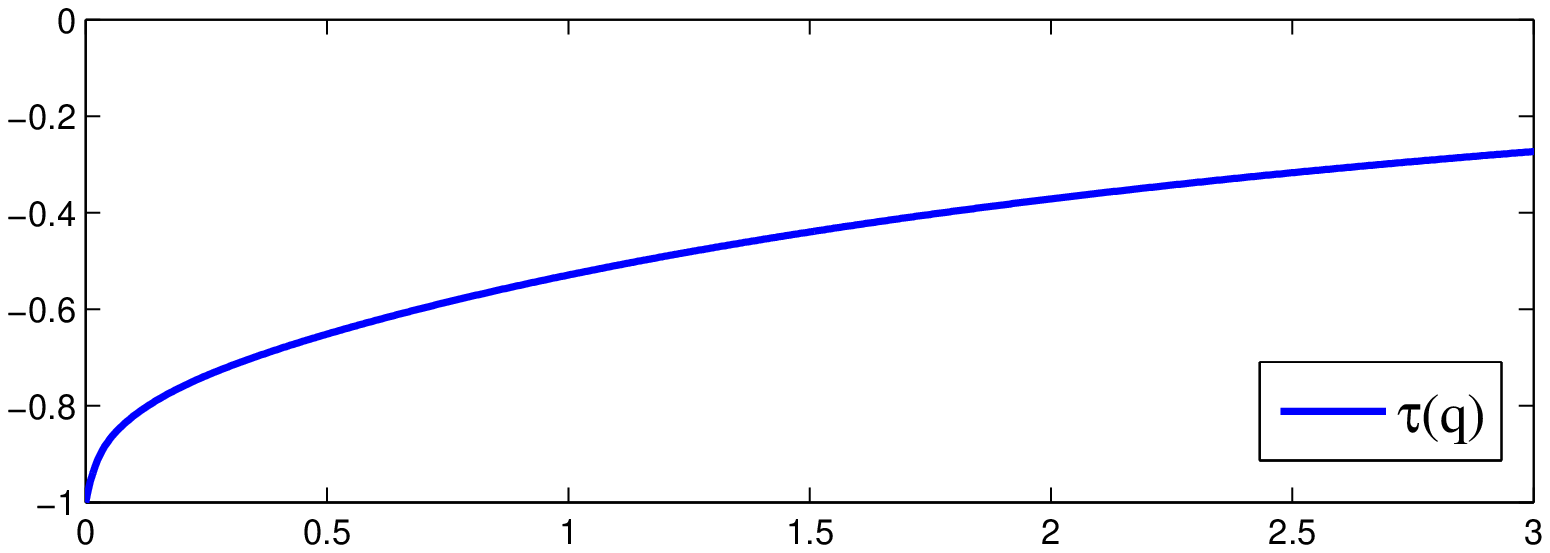}
\end{minipage}
\begin{minipage}[c]{\textwidth}
\centering
\includegraphics[width=0.8\textwidth]{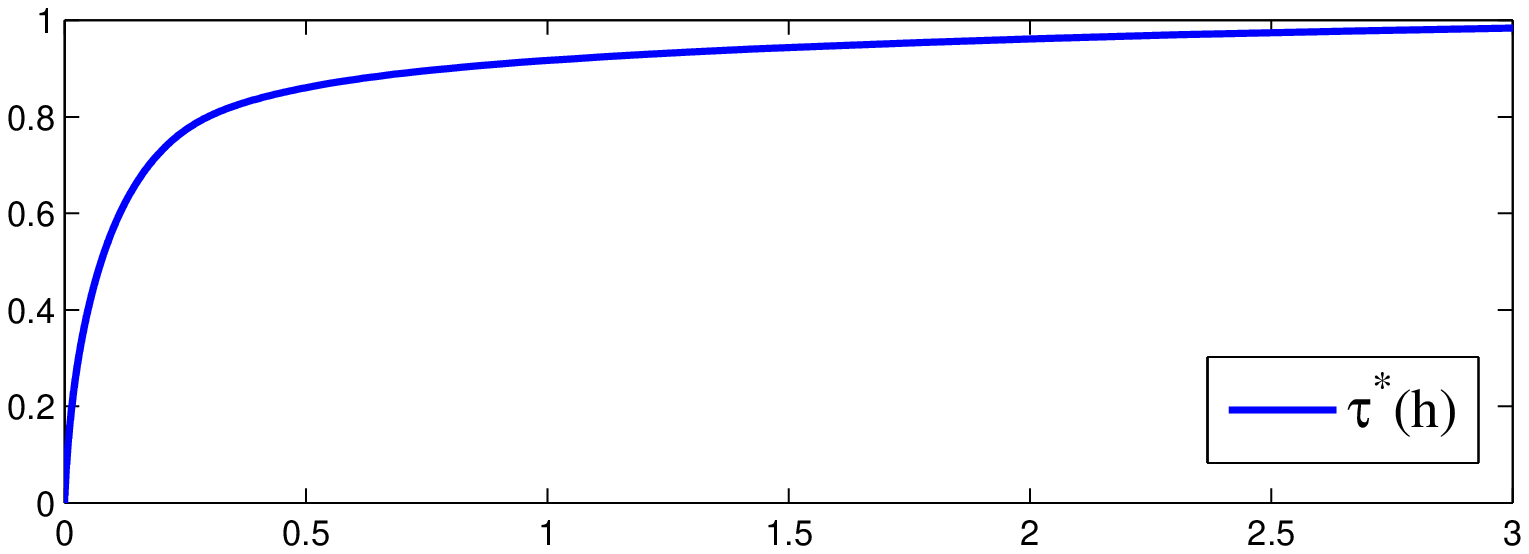}
\end{minipage}
\caption{Concave left-sided spectra with support $[0,\infty]$ in the critical case.}\label{left}
\end{figure}

\begin{remark}
\label{endpoint}
Examples of left sided spectra do exist for some other (increasing) continuous functions over $[0,1]$ possessing self-similarity properties \cite{M3,MR,MEH}, but their spectra do not contain the left endpoint 0.  

It is also worth mentioning that in some Besov spaces of continuous functions, the generic singularity spectrum is left sided, supported by a compact interval, and linear; moreover, the left-end point of this spectrum is equal to 0 for critical Besov spaces \cite{JAFFJMPA,JaffMeyer}. 

In the critical case considered in this paper (Theorem B(2)), the slope of the singularity spectra at 0 is equal to $\infty$ because of the duality between $h=\tau'(q)$ and $q=(\tau^*)'(h)$, and $h\to 0$ corresponds to $q\to\infty$.  
\end{remark}

\begin{remark} In the non-decreasing case (the components of $W$ are nonnegative), results on the multifractal analysis of the measure $\mu=F'$ have been obtained in several papers (which also deal with measures on $\mathbb{R}^d$). For the one dimensional case we are dealing with, the previous statements are substantial improvements of these results for the following reasons. 

At first, all these works only consider the first order oscillation exponent, which is sometimes computed only on the ``distorded" grid associated with the increments of $F_L$ as described above \cite{HoWa,Mol,B2}, and not in the more intrinsic way \eqref{expomu}. Moreover, in the papers which deal with the intrinsic exponent $h_\mu$, the assumptions on $W$ and $L$ are very strong: Their components must be bounded away from 0 and 1 by positive constants, and their sum must be equal to 1 almost surely \cite{AP,Fal}; moreover the result holds only for all $h\ge 0$ such that $\tau^*(h)>0$ almost surely, and not almost surely for all $h\ge 0$ such that  $\tau^*(h)> 0$. Also, the case of left sided spectra is not treated in these papers. 

Another important improvement concerns the computation of the endpoints of the singularity spectrum, which is a delicate issue; indeed it is already non-trivial to prove that the corresponding iso-H\"older sets are not empty. Our result includes the description of these endpoints, i.e. the endpoints of ${\tau_F^*}^{-1}(\mathbb{R}_+)$, without restriction on the behavior of $\tau$. This is a progress with respect to the work achieved in \cite{B2} where the case when $\overline{q}=\infty$ (resp. $\underline{q}=-\infty$ and $\lim_{q\to\infty}{\rm (resp.}\  \lim_{q\to-\infty} {\rm )} \tau'(q)q-\tau(q)=0$ was not worked out (in the present paper this is particularly important in the critical case of Theorem~B(2)), and where the H\"older exponents are computed only on  the grid naturally defined by $F_L$. Also, the new method we introduce to study the endpoints could be used to deal with the same question for the general class of random measures considered in \cite{BMS2}.\end{remark}

\begin{remark}
In the previous results, all the formalisms yield the same information.  In particular our discussion on the link between the oscillations and wavelets methods developed in \cite{Jafw} shows that when $F$ is uniformly H\"older, the multifractal formalism using wavelets also holds for such a function in the increasing part of the spectrum, without it be necessary to compute any wavelet transform. 
\end{remark}

The next result illustrates the unstability of the exponents and spectra associated with the $m^{\text{\tiny th}}$ order oscillations by addition of a $C^\infty$ function. 

\begin{corollary}\label{CW}
Let $f$ be a complex valued $C^\infty$ function over $\mathbb{R}_+$ such that for all $m\in\mathbb{N}_+$ the function $f^{(m)}$ does not vanish. Let $F$ be as in Theorem~\ref{AM2} and let $G=F+f$. The functions $F$ and $G$ have the same multifractal behavior from the pointwise H\"older exponent point of view.

For $m\in\mathbb{N}_+$, let $q_m$ be the unique real number such that $\tau(q_m)=q_mm-1$. 

With probability 1, for all $m\in\mathbb{N}_+$, we have $\tau_G^{(m)}=\tau_F^{(m)}=\tau$ over $[q_m,\infty)$, and $\tau_G^{(m)}(q)=qm-1$ for $0\le q<q_m$. Moreover, for all $m\in\mathbb{N}_+$, the multifractal formalism holds at every $h\in[0,\tau'(q_m)]$ such that ${\tau_G^{(m)}}^*(h)\ge 0$ as well as at $h=m$, and for all $h\in (\tau'(q_m),m)$ we have  $\dim_H\, E^{(m)}_G(h)=\tau^*(h)<(\tau_G^{(m)})^*(h)$. 
\label{addf}
\end{corollary}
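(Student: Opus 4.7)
The plan is to transfer everything from $F$ to $G=F+f$ via two elementary oscillation estimates, valid on every subinterval $B$ of radius $r$ inside the working interval:
\begin{equation*}
\bigl|\,{\rm Osc}^{(m)}_G(B)-{\rm Osc}^{(m)}_f(B)\,\bigr|\le {\rm Osc}^{(m)}_F(B),\qquad c\,r^m\le {\rm Osc}^{(m)}_f(B)\le C\,r^m.
\end{equation*}
The first follows by applying the triangle inequality to $\Delta^m_h G=\Delta^m_h F+\Delta^m_h f$ pointwise and then taking the supremum; the second comes from the identity $\Delta^m_h f(t)=h^m f^{(m)}(\xi)$ combined with the hypothesis that $f^{(m)}$ does not vanish. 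Splitting according to whether $h^{(m)}_F(t)<m$ or $h^{(m)}_F(t)\ge m$ yields the pointwise identity $h^{(m)}_G(t)=\min(h^{(m)}_F(t),m)$ on $\supp(G^{(m)})$ (including points outside $\supp(F^{(m)})$, where trivially $h^{(m)}_G(t)=m$). Letting $m\to\infty$ gives $h_G(t)=h_F(t)$ whenever $h_F(t)<\infty$, and both are $\infty$ otherwise, so $F$ and $G$ share the same pointwise H\"older multifractal picture.

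For the $L^q$-spectrum $\tau^{(m)}_G$ on $[0,\infty)$, the upper oscillation bound together with $(a+b)^q\le 2^q(a^q+b^q)$ and the trivial count $\lesssim r^{-1}$ of disjoint balls in a packing gives $S_G(q,r)\lesssim S_F(q,r)+r^{mq-1}$, hence $\tau^{(m)}_G(q)\ge\min(\tau(q),mq-1)$ by invoking $\tau^{(m)}_F=\tau$ from Theorem~\ref{AM2}. For the reverse inequality I split into two regimes. When $q\ge q_m$ (so $\tau(q)\le mq-1$), starting from a packing realizing the liminf defining $\tau^{(m)}_F(q)$ along a subsequence $r_n\to 0$, the balls with ${\rm Osc}^{(m)}_F(B)\ge 2Cr_n^m$ satisfy ${\rm Osc}^{(m)}_G(B)\ge{\rm Osc}^{(m)}_F(B)/2$ and contribute the full $r_n^{\tau(q)+o(1)}$, while the remaining balls contribute at most $O(r_n^{mq-1})$, negligible against $r_n^{\tau(q)+\epsilon}$. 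When $0\le q<q_m$ (so $\tau(q)>mq-1$), I use a uniform packing of $\asymp r^{-1}$ balls of radius $r$; the lower oscillation bound gives ${\rm Osc}^{(m)}_G(B)\ge cr^m/2$ on each ball where ${\rm Osc}^{(m)}_F(B)<cr^m/2$, and the remaining ``bad'' balls are bounded via $N_{\rm bad}\cdot (cr^m/2)^q\le S_F(q,r)\le r^{\tau(q)-\epsilon}$, giving $N_{\rm bad}\le r^{\tau(q)-mq-\epsilon}=o(r^{-1})$ since $\tau(q)-mq>-1$; therefore $S_G(q,r)\gtrsim r^{-1}\cdot r^{mq}=r^{mq-1}$ and $\tau^{(m)}_G(q)\le mq-1$.

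The multifractal formalism for $G$ is then a matter of combining Legendre transforms with the pointwise identity. For $h\in[0,\tau'(q_m)]$, the minimizer of $hq-\tau^{(m)}_G(q)$ lies in $\{q\ge q_m\}$ where $\tau^{(m)}_G=\tau$, giving $(\tau^{(m)}_G)^*(h)=\tau^*(h)$; since $h<m$ (apart from the endpoint where equality is checked directly), the pointwise identity forces $E^{(m)}_G(h)=E^{(m)}_F(h)$, so the formalism for $G$ follows from the one for $F$ established in Theorem~\ref{AM2}. At $h=m$, the linear piece $\tau^{(m)}_G=mq-1$ on $[0,q_m]$ gives $(\tau^{(m)}_G)^*(m)=1$, and $E^{(m)}_G(m)$ contains every $t$ with $h^{(m)}_F(t)=\infty$ as well as, if it is nonempty, the open complement of $\supp(F^{(m)})$; Theorem~\ref{AM2} ensures a full-dimension level set at $\infty$, and the complement is open, so in either case $\dim_H E^{(m)}_G(m)=1$. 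For $h\in(\tau'(q_m),m)$, the pointwise identity still gives $E^{(m)}_G(h)=E^{(m)}_F(h)$ and hence $\dim_H E^{(m)}_G(h)=\tau^*(h)$, while the corner of $\tau^{(m)}_G$ at $q_m$ produces $(\tau^{(m)}_G)^*(h)=q_m(h-m)+1>\tau^*(h)$ by strict concavity of $\tau^*$, explaining the failure of the formalism in this interval.

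The main obstacle is the lower bound on $\tau^{(m)}_G$ in the range $0\le q<q_m$: the possibility of near-cancellation ${\rm Osc}^{(m)}_F\approx{\rm Osc}^{(m)}_f\approx r^m$ on a large fraction of balls in a uniform packing must be ruled out, and the counting argument for $N_{\rm bad}$ relies on having $S_F(q,r)\le r^{\tau(q)-\epsilon}$ for all $r$ small enough (a consequence of the $\liminf$ definition, valid eventually rather than only along a subsequence), together with the strict inequality $\tau(q)>mq-1$ which is precisely what defines the threshold $q_m$.
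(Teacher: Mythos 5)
Your proposal recovers the corollary, but it takes a genuinely different route for the key reverse inequality $\tau^{(m)}_G(q)\le\min(\tau(q),mq-1)$. The paper never computes this directly from packing estimates: it obtains only the easy half $\tau^{(m)}_G(q)\ge\min(\tau(q),mq-1)$ from the oscillation subadditivity, then computes $\dim_H E^{(m)}_G(h)$ for $h\in[0,m]$ via the pointwise identity $h^{(m)}_G=\min(h^{(m)}_F,m)$ (outside a negligible exceptional set) and Theorem~\ref{AM2}, invokes Proposition~\ref{Formalism} to get $(\tau^{(m)}_G)^*(h)\ge\dim_H E^{(m)}_G(h)$, and finally uses the concavity of $(\tau^{(m)}_G)^*$ to interpolate on $[\tau'(q_m),m]$ before inverting the Legendre transform. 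You instead directly build packings that saturate the $L^q$-sums: for $q>q_m$, starting from a near-extremal packing for $F$ and discarding the ``small'' balls; for $0\le q<q_m$, taking a uniform packing and bounding the ``bad'' balls (where $\mathrm{Osc}^{(m)}_F\gtrsim r^m$) by the cardinality argument $N_{\mathrm{bad}}\lesssim S_F(q,r)/r^{mq}\lesssim r^{\tau(q)-mq-\epsilon}=o(r^{-1})$. Both arguments are correct, but each quietly handles a different fact: the paper's Legendre-duality route extracts $\tau^{(m)}_G(0)=-1$ (i.e.\ $\supp(G^{(m)})$ has full box dimension) \emph{as a consequence} of $\dim_H E^{(m)}_G(m)=1$, whereas your uniform-packing argument for $q<q_m$ \emph{assumes} this up front by invoking $\asymp r^{-1}$ disjoint balls with centers in $\supp(G^{(m)})$; a sentence justifying $\supp(G^{(m)})=[0,F_L(1)]$ would be in order. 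A second small point: your bad-ball count compares with $S_F(q,r)$, which ranges over packings centered in $\supp(F^{(m)})$, while your uniform packing is not so centered; since a bad ball satisfies $\mathrm{Osc}^{(m)}_F(B)>0$ it meets $\supp(F^{(m)})$, so one must re-center to nearby points of $\supp(F^{(m)})$ and pass to a sub-packing at bounded multiplicity (doubling the radius and thinning by a constant factor in dimension one). Neither issue is fatal, and for $q>q_m$ your split argument is clean; overall the student route is more hands-on and requires these extra packing considerations, while the paper's route leans on machinery (the spectrum lower bound and concavity) already available from Theorem~\ref{AM2}. Finally, at $q=q_m$ your direct estimate degenerates (the small-ball term is of the same order $r^{\tau(q_m)}$), so the value $\tau^{(m)}_G(q_m)=\tau(q_m)$ must be recovered by continuity/concavity, which is worth stating.
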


\begin{figure}[htb]\label{left1}
\centering
\begin{minipage}[c]{\textwidth}
\centering
\includegraphics[width=0.8\textwidth]{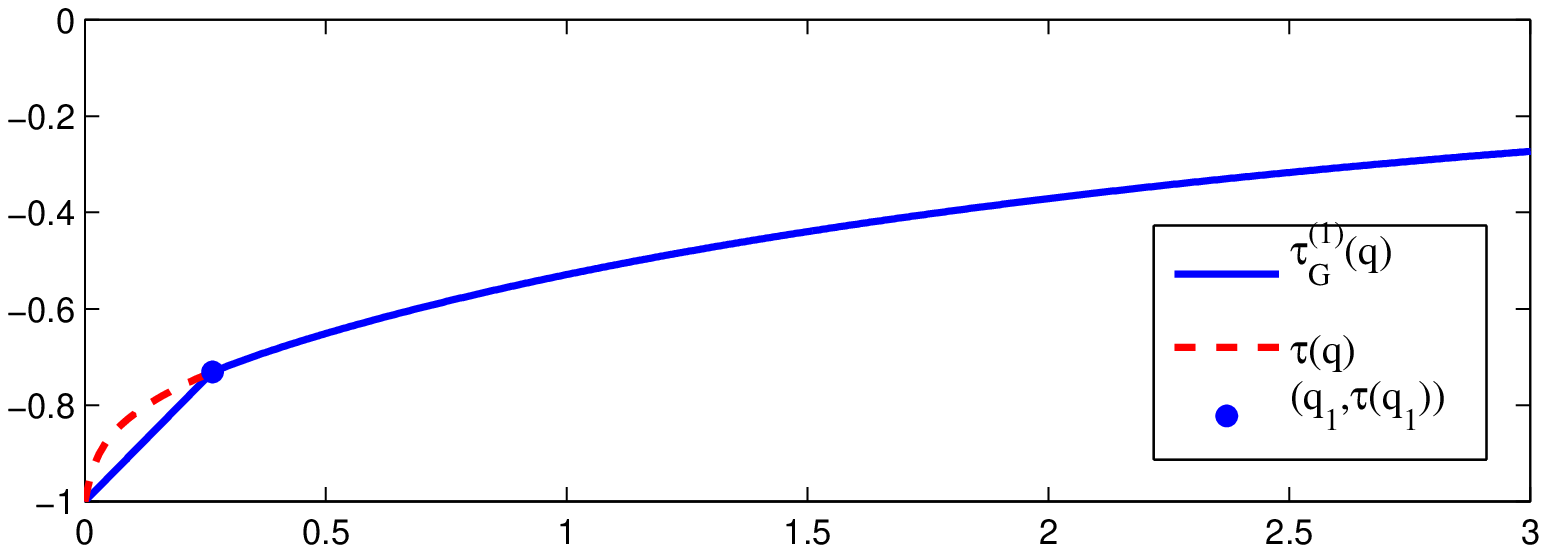}
\end{minipage}
\begin{minipage}[c]{\textwidth}
\centering
\includegraphics[width=0.8\textwidth]{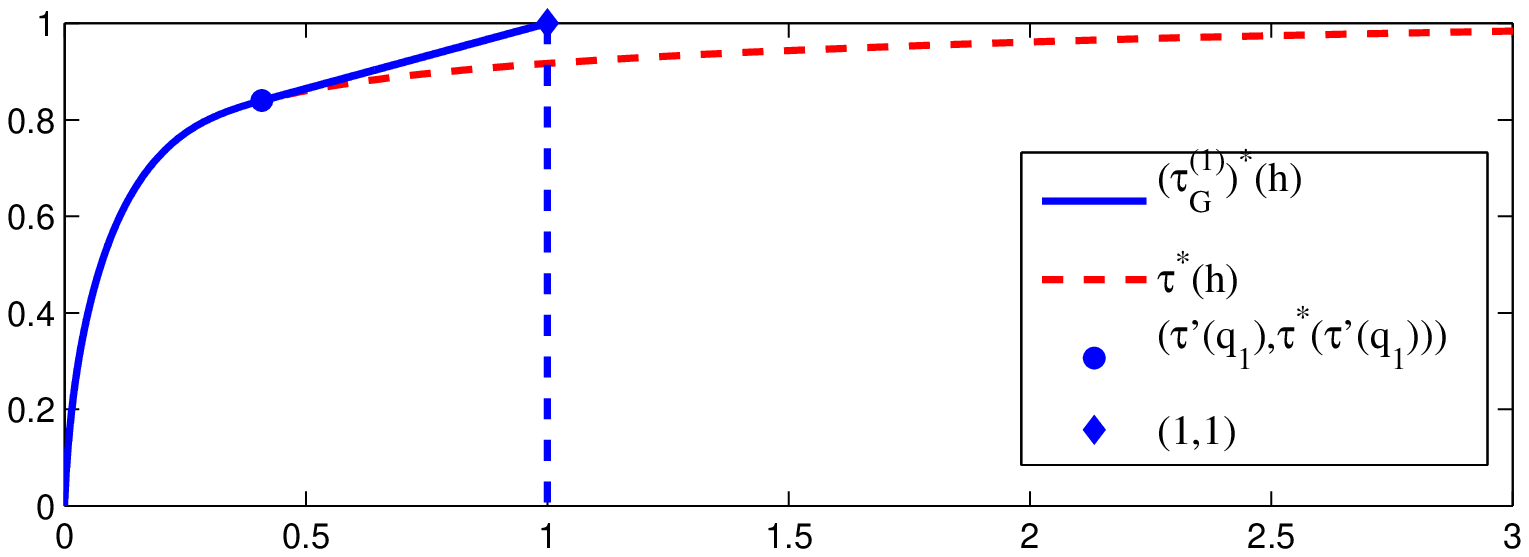}
\end{minipage}
\caption{Top: $\tau_G^{(1)}(q)=\min\{q-1,\tau(q)\}$ for $q\ge 0$. Bottom: $(\tau_G^{(1)})^*(h)=\tau^*(h)$ for $h\in [0,\tau'(q_1)]$, $(\tau_G^{(1)})^*(h)=\tau^*(\tau'(q_1))+q_1(h-\tau'(q_1))$ for $h\in (\tau'(q_1),1]$ and $(\tau_G^{(1)})^*(h)=1$ elsewhere.}
\end{figure}

We end this section with additional definitions. 

\medskip

\noindent
{\bf Definitions.} 
\medskip

\noindent
{\it The coding space.}
\medskip

The word obtained by concatenation of $u\in\mathscr{A}^*$ and $v\in\mathscr{A}^*\cup\mathscr{A}^{\mathbb{N}_+}$ is denoted $u\cdot v$ and sometimes $uv$. 
For every $w\in \mathscr{A}^*$, the cylinder with root $w$, i.e. $\{w\cdot t:t\in \mathscr{A}^{\mathbb{N}_+}\}$ is denoted $[w]$. The $\sigma$-algebra generated in $\mathscr{A}^{\mathbb{N}_+}$ by the cylinders, namely $\sigma([w]:w\in\mathscr{A}^*)$ is denoted $\mathcal{S}$. The set $\mathscr{A}^{\mathbb{N}_+}$ is endowed with the standard metric distance $d(t,s)=\inf\{b^{-n}: n\ge 0,\  \exists \  w\in\mathscr{A}^n,\ t,s\in [w]\}$. Then  the Borel $\sigma$-algebra is equal to $\mathcal{S}$. 

For every $n\ge 0$, the length of an element of $\mathscr{A}^n$ is by definition equal to $n$ and we denote it $|w|$. For $w \in \mathscr{A}^*$, we define $I_w=[t_w,t_w+b^{-|w|})$ and $I^L_w=F_L(I_w)$. We denote by $w^-$ or $w^{-1}$ (resp.  $w^+$ or $w^{+1}$)  the unique element of $\mathscr{A}^{|w|}$ such that $t_w-t_{w^-}=b^{-|w|}$ (resp. $t_{w^+}-t_w=b^{-|w|}$)  whenever $t_w\neq 0$ (resp. $t_w\neq 1-b^{-|w|}$) . We also denote $w$ by $w^0$. 

\medskip

\noindent
{\it Independent copies of $F_W$ and $F_L$, and associated quantities.}
\medskip

If $w\in\mathscr{A}^*$, $n\ge 1$ and $U\in\{W,L\}$, we denote by $F^{[w]}_{U,n}$ the function constructed as $F_{U,n}$, but with the weights $(U(w\cdot v))_{v\in\mathscr{A}^*}$. By construction, $F^{[\emptyset]}_{U,n}=F_{U,n}$, and 
$$
F^{[w]}_{U,n}(t)=\int_0^t b^{n}\prod_{k=1}^{n}U_{u_{k}}(w\cdot u| \, |w|+k-1)\, \text{d}u.
$$
We denote by $F_U^{[w]}$ the almost sure uniform limit of $(F^{[w]}_{U,n})_{n\ge 1}$.  We also define
$$
Q_U(w)=\prod_{k=1}^{n}U_{w_{k}}(w|_{k-1}).
$$

For $m\ge 1$ we denote ${\rm Osc}^{(m)}_{F_U}([0,1])$ by $Z_U^{(m)}$ and more generally ${\rm Osc}^{(m)}_{F_U^{[w]}}([0,1])$ by $Z_U^{(m)}(w)$.  Also, we denote ${\rm Osc}^{(m)}_{F_U}(I_w)$ by ${\rm O}^{(m)}_U(w)$.  By construction,  we have 
\begin{eqnarray}
\label{self-sim2}
{\rm Osc}^{(m)}_F(I^L_w)&=&{\rm Osc}^{(m)}_{F_W}(I_w)={\rm O}^{(m)}_W(w)=|Q_W(w)|Z_W^{(m)}(w),\\
\label{self-sim2'}
|I^L_w|&=&{\rm Osc}^{(1)}_{F_L}(I_w)={\rm O}^{(1)}_L(w)=Q_L(w)Z_L^{(1)}(w).
\end{eqnarray}

For $(q,t)\in\mathbb{R}^2$ let 
\begin{equation}\label{Phi}
\Phi(q,t)=\mathbb{E}\Big (\sum_{i=0}^{b-1}\mathbf{1}_{\{W_i\neq 0\}}|W_i|^qL_i^{-t}\Big )\text{ and } \Psi(q,t)=\mathbb{E}\big ({\rm Osc}_{F_w}([0,1])^q{F_L}(1)^{-t}\big ).
\end{equation}

\medskip

\noindent
{\it Hausdorff dimension.}
\medskip

If $(X,d)$ is a locally compact metric space, for $D\in\mathbb{R}$, $\delta>0$, and $E\subset X$, let 
$$
\mathcal{H}^D_\delta(E)=\inf\{\sum_{i\in I} |U_i|^D\},
$$
where the infimum is taken over the set of all the at most countable coverings $\bigcup_{i\in I}U_i$ of $E$ such that $0\le |U_i|\le \delta$, where $|U_i|$ stands for the diameter of $U_i$ and by convention $0^D=0$.  Then define 
$$
\mathcal{H}^D(E)=\lim_{\delta\searrow 0} \mathcal{H}^D_\delta(E)
$$
($\mathcal{H}^D_\delta(E)$ is by construction a non-increasing function of $\delta$). If $D\ge 0$, $\mathcal{H}^D(E)$ is called the $D$-dimensional Hausdorff measure of $E$. The Hausdorff dimension of $E$ is the number 
$$
\dim_H E=\inf\{D:\mathcal{H}^D(E)<\infty\}.
$$
It is clear that we have $\dim_H E<0$ if and only if $\dim_H E=-\infty$ and $E$ is the emptyset (see \cite{Falc,Mat} for more details). 

\medskip

We denote by $(\Omega,\mathcal{B},\mathbb{P})$ the probability space on which the random variables considered in this paper are defined. 

\medskip

Finally, if $f$ is a bounded $\mathbb{C}$-valued function over an interval $I$, then $\|f\|_\infty$ stands for $\sup_{t\in I} |f(t)|$.

\section{Proofs of Theorem~\ref{AM}, Theorem~\ref{AM2}, and Corollary~\ref{CW}}\label{mainproof}

The next three sections provide intermediate results yielding Theorem~\ref{AM}. Detailed proofs of these results are given in Section~\ref{proofsIR}. The proof of Theorem~\ref{AM2} is almost the same as that of Theorem~\ref{AM} and we outline it in Section~\ref{AM2proof}. Corollary~\ref{CW} is given in Section~\ref{proofaddf}, and Section~\ref{waekassump} provides weaker assumptions under which these result still hold, or partially hold. 
\medskip

In the next three sections we work under the assumptions of Theorem~\ref{AM}. 

\subsection{Upper bound for the singularity spectra}\label{upperbound}

Let $f$ be a measurable bounded function from $[0,1]$ to $\mathbb{R}$. 

\begin{proposition}\label{Formalism}
Let $m\ge 1$. If $\supp(f^{(m)})\neq\emptyset$ then for every $h\ge 0$ we have $\dim_H  E^{(m)}_f(h)\le (\tau^{(m)}_f)^*(h)$, a negative dimension meaning that $E_f(h)$ is empty. Also,
$$\dim_H \supp(f^{(m)})\le \overline{\dim}_B \supp(f^{(m)})=-\tau^{(m)}_f(0),$$
where $\overline{\dim}_B$ stands for the upper box dimension (see \cite{Falc} for the definition).
\end{proposition}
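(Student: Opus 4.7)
\textbf{Proof plan for Proposition~\ref{Formalism}.} The plan is the standard ``Legendre transform'' argument linking the $L^q$-spectrum to the singularity spectrum, adapted to oscillations of order $m$.

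\emph{Step 1 (the box-dimension identity).} Specialize the definition of $\tau^{(m)}_f$ to $q=0$: the supremum $\sup\sum_i {\rm Osc}^{(m)}_f(B_i)^0$ is simply the maximal number $N(r)$ of disjoint closed balls of radius $r$ with centers in $\supp(f^{(m)})$, so
$$
-\tau^{(m)}_f(0)=\limsup_{r\to 0^+}\frac{\log N(r)}{\log (1/r)}.
$$
A maximal $r$-packing doubles to a $2r$-cover, hence $N(2r)\le N_{\rm cov}(r)\le N(r)$, and the right-hand side above equals $\overline{\dim}_B \supp(f^{(m)})$. The inequality $\dim_H\le \overline{\dim}_B$ is classical.

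\emph{Step 2 (upper bound for $\dim_H E^{(m)}_f(h)$, $q>0$ side).} Fix $h\ge 0$ and $\epsilon,\eta>0$. Because $h^{(m)}_f(t)=h$ forces the liminf to be $\le h$, every $t\in E^{(m)}_f(h)$ satisfies ${\rm Osc}^{(m)}_f(B(t,2^{-n}))\ge 2^{-n(h+\epsilon)}$ for infinitely many $n$. Set
$$
B_n=\bigl\{t\in\supp(f^{(m)}):{\rm Osc}^{(m)}_f(B(t,2^{-n}))\ge 2^{-n(h+\epsilon)}\bigr\},\qquad E^{(m)}_f(h)\subset \bigcap_{n_0}\bigcup_{n\ge n_0}B_n.
$$
Let $N_n$ be the maximal cardinality of a $2^{-n}$-packing of $B_n$ by balls centered in $B_n$. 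Since $q>0$ and each such ball contributes ${\rm Osc}^{(m)}_f(B_i)^q\ge 2^{-nq(h+\epsilon)}$, the very definition of $\tau^{(m)}_f(q)$ (as a liminf) gives, for $n$ large,
$$
N_n\cdot 2^{-nq(h+\epsilon)}\le \sum_i {\rm Osc}^{(m)}_f(B_i)^q\le 2^{-n(\tau^{(m)}_f(q)-\eta)},
$$
hence $N_n\le 2^{n(q(h+\epsilon)-\tau^{(m)}_f(q)+\eta)}$. Doubling the packing yields a cover of $B_n$ by balls of diameter $2^{1-n}$, so for every $s>q(h+\epsilon)-\tau^{(m)}_f(q)+\eta$ the Hausdorff pre-measure $\mathcal{H}^s_{2^{1-n_0}}\!\bigl(\bigcup_{n\ge n_0}B_n\bigr)\le \sum_{n\ge n_0}N_n(2^{1-n})^s$ is a convergent geometric series tending to $0$ as $n_0\to\infty$. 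Letting $\epsilon,\eta\to 0$ and taking the infimum over $q>0$ gives $\dim_H E^{(m)}_f(h)\le \inf_{q>0}\bigl(qh-\tau^{(m)}_f(q)\bigr)$.

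\emph{Step 3 (the $q<0$ side, and conclusion).} For $q<0$ use the dual bound: the liminf being $\ge h$ implies ${\rm Osc}^{(m)}_f(B(t,2^{-n}))\le 2^{-n(h-\epsilon)}$ for every $t\in E^{(m)}_f(h)$ and all $n\ge n_0(t)$. A maximal packing of the level set $\{t:n_0(t)\le N\}$ by $2^{-n}$-balls ($n\ge N$) now contributes ${\rm Osc}^{(m)}_f(B_i)^q\ge 2^{-nq(h-\epsilon)}$ (the inequality flips because $q<0$), and the same argument yields $N_n\le 2^{n(q(h-\epsilon)-\tau^{(m)}_f(q)+\eta)}$. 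A countable union over $N$ together with $\epsilon,\eta\to 0$ produces $\dim_H E^{(m)}_f(h)\le \inf_{q<0}\bigl(qh-\tau^{(m)}_f(q)\bigr)$. Combining with Step~1 (which handles $q=0$) gives $\dim_H E^{(m)}_f(h)\le (\tau^{(m)}_f)^*(h)$, and if this quantity is $<0$ the preceding pre-measure estimate applied with $s=0$ forces $E^{(m)}_f(h)=\emptyset$.

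The main subtlety I expect is the $q<0$ regime: ${\rm Osc}^{(m)}_f(B_i)$ can vanish even for centers in $\supp(f^{(m)})$, so the sum defining $\tau^{(m)}_f(q)$ must be read with the convention $0^{q}=+\infty$ (equivalently, one restricts packings to balls of positive oscillation and checks this does not alter $\tau^{(m)}_f$), and one must handle the boundary case $h=0$ by noting that only the $q\le 0$ part of the Legendre transform is relevant there, so the argument degenerates into Step~1.
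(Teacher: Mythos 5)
Your proof is correct, and the overall strategy (an upper bound on Hausdorff dimension via packing estimates, turned into a Legendre transform bound, handled separately on the $q>0$ and $q<0$ sides) is the standard one that the paper also follows. The execution differs in a genuine and somewhat cleaner way: you work entirely at dyadic scales and bound the number of balls in a \emph{maximal} $2^{-n}$-packing of the bad set $B_n$ (resp.\ of the exhaustion level $S_N$), which automatically gives a finite cover after doubling the radius. The paper instead picks, for each $t\in E_M(h)$, a radius $r_{t,k_t}$ where the two-sided oscillation estimate holds simultaneously, and then invokes the Besicovitch covering theorem to split the resulting non-uniform family of balls into $N$ disjoint subfamilies; after sorting those balls into dyadic annuli it further 2-colors them to obtain genuine $2^{-n}$-packings of the support. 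Your route avoids Besicovitch altogether and is in that respect more elementary; the paper's route has the mild advantage of producing one unified cover on which both sides of the oscillation bound hold, which makes the case distinction purely a matter of which inequality to invoke. Both are equally rigorous.

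Two of the ``subtleties'' you flag at the end are in fact non-issues, and it is worth pointing this out so the argument is not weakened by unnecessary hedging. First, for any $t\in\supp(f^{(m)})$ and any $r>0$ one has ${\rm Osc}^{(m)}_f(B(t,r))>0$: if the oscillation vanished, Whitney's theorem (quoted in Section~\ref{intro}) would force $f$ to coincide with a polynomial of degree $\le m-1$ on $B(t,r)$, hence $t\notin\supp(f^{(m)})$. So the sums appearing in the definition of $\tau^{(m)}_f$ never involve $0^q$, and no convention is needed; your parenthetical ``restrict to balls of positive oscillation'' is automatic. Second, the case $h=0$ requires no special treatment: Step~2 applies verbatim with $h=0$ (the set $B_n$ is simply $\{t:\ {\rm Osc}^{(m)}_f(B(t,2^{-n}))\ge 2^{-n\epsilon}\}$), and the infimum at $h=0$ need not be taken only over $q\le 0$; whether it lies on the $q\ge 0$ or $q\le 0$ side depends on the sign of $\tau'_f(0^+)$, and your Steps~1--3 together already cover both possibilities. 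One last cosmetic remark: doubling a radius-$2^{-n}$ packing produces balls of diameter $2^{2-n}$, not $2^{1-n}$; this constant does not affect the dimension bound.
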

\begin{remark}  
When $f$ is non-decreasing and $m=1$,  the $L^q$-spectrum $\tau^{(1)}_f$ is nothing but the $L^q$-spectrum of the measure $f'$, and the inequality provided by Proposition~\ref{Formalism} is familiar from the multifractal formalism for measures. Though the proof of the inequality is similar for $m\ge 2$,  for the reader's convenience we will give a proof of Proposition~\ref{Formalism} in Section~\ref{appendix} (see also \cite{Jafw} for similar bounds).
\end{remark}

We first need the following propositions. 
\begin{proposition}\label{nonpol}
With probability 1, $\supp(F')\neq \emptyset$, and the function $F$ is nowhere locally equal to a polynomial over the support of $F'$. Consequently, $\supp(F^{(m)})=\supp(F')$ for all $m\ge 1$.
\end{proposition}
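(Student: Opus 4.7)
The proof breaks naturally into three claims: (a) $\supp(F') \ne \emptyset$ almost surely, (b) $F$ is not locally a polynomial on any open interval meeting $\supp(F')$, and (c) $\supp(F^{(m)}) = \supp(F')$ for all $m \ge 1$. For (a), since $F_L$ is a homeomorphism of $[0,1]$, this amounts to $\mathbb{P}(F_W \equiv 0) = 0$. Set $p = \mathbb{P}(F_W \equiv 0)$; the scaling relation~\eqref{sesi} identifies the event with $\bigcap_{i=0}^{b-1}\{W_i F_W^{[i]} \equiv 0\}$. Conditioning on $W$ and using the independence of the $F_W^{[i]}$ from $W$ gives $p = \mathbb{E}\bigl(\prod_{i=0}^{b-1}(\mathbf{1}_{\{W_i=0\}} + \mathbf{1}_{\{W_i \ne 0\}}\, p)\bigr)$; the hypothesis $\mathbb{P}(\sum_i \mathbf{1}_{\{W_i \ne 0\}} \ge 2) = 1$ bounds this by $p^2$, so $p \in \{0,1\}$, and $\mathbb{E}(F_W(1)) = 1$ rules out $p = 1$.

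For (b), I argue by contradiction. If with positive probability $F|_J = P$ for some open $J \subset [0, F_L(1)]$ meeting $\supp(F')$ and some polynomial $P$, pick $s_0 = F_L(t_0) \in J \cap \supp(F')$; necessarily $Q_W(t_0|_n) \ne 0$ for every $n$ (else $F$ would be constant through $s_0$ on $I^L_{t_0|_n}$, contradicting $s_0 \in \supp(F')$). Since $|I^L_{t_0|_n}| \to 0$ (using $0 < L_i < 1$), for $n$ large $I^L_w \subset J$ with $w = t_0|_n$, and the self-similarity
\[
F(s) = F(F_L(t_w)) + Q_W(w)\, F^{[w]}\!\left(\frac{s - F_L(t_w)}{Q_L(w)}\right), \quad s \in I^L_w,
\]
forces $F^{[w]}$ to be polynomial of degree $\le \deg P$ on $[0, F_L^{[w]}(1)]$. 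The bad event is therefore contained in the countable union $\bigcup_{d \ge 0}\bigcup_{w \in \mathscr{A}^*} B_d^{[w]}$, where $B_d^{[w]} := \{F^{[w]}$ is polynomial of degree $\le d$ on $[0, F_L^{[w]}(1)]\}$; each $B_d^{[w]}$ has the same probability $p_d$, so it suffices to show $p_d = 0$ for every $d$. The first-level argument of (a), applied to $\{F$ polynomial of degree $\le d\}$, yields $p_d \le p_d^2$, hence $p_d \in \{0,1\}$.

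The main obstacle is ruling out $p_d = 1$. In the critical setting of Theorem~B(2) the contradiction is direct: $F_W$ is nowhere locally uniformly H\"older, but $\{F \equiv P$ on $[0, F_L(1)]\}$ would force $F_W = P \circ F_L$, which is H\"older on every subinterval. In the non-critical setting I would iterate~\eqref{sesi}: if $F \equiv P$ with leading coefficient $a_d$, then the polynomial representing $F^{[i]}$ has leading coefficient $W_i^{-1} L_i^d a_d$, and comparing two first-level indices $i \ne j$ with $W_i, W_j \ne 0$ yields an identity $X = rY$ between iid copies of $a_d$ coming from the independent sub-trees at $i$ and $j$, with $r$ measurable with respect to the first generation; the independence of $X$ from $(r, Y)$ forces both to be a.s.\ constant, and propagating this rigidity through the lower-order coefficients together with the excluded case $W = L$ gives the contradiction. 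Part (c) is then formal: $\supp(F^{(m)}) \subset \supp(F')$ is automatic, and (b) prevents any open $V$ meeting $\supp(F')$ from being one on which $F$ is polynomial of degree $\le m-1$, so $F^{(m)} \ne 0$ as a distribution on such $V$.
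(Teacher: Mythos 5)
Your argument is correct in outline but follows a genuinely different route from the paper's. For $\supp(F')\neq\emptyset$ you use the recursion $p\le p^2$ via conditioning on $W$, while the paper reaches the same zero--one dichotomy through a tail $\sigma$-algebra; these are interchangeable. The real divergence is in the non-polynomiality step. The paper splits on whether some $W_i$ vanishes with positive probability (if so, $\supp(F')$ has empty interior and the claim is immediate), and otherwise shows $\{F\text{ is a polynomial}\}$ is a tail event, then derives rigidity from the identity $(W_i/L_i)(F^{[i]})'(F_L^{[i]}(1))=(W_{i+1}/L_{i+1})(F^{[i+1]})'(0)$ at the junctions $F_L((i+1)/b)$, forcing $(W,L)$ deterministic, and finishes with an oscillation estimate. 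You instead reduce the bad event to the countable union $\bigcup_{d,w}B_d^{[w]}$, obtain $p_d\in\{0,1\}$ by the same $p_d\le p_d^2$ recursion, and kill $p_d=1$ through rigidity of the \emph{leading Taylor coefficient} of the putative polynomial, which is cleaner because it avoids the case split on $W_i=0$ and the junction-derivative computation. A few details to tighten: the point $s_0\in J\cap\supp(F')$ should be chosen non-$b$-adic (possible since $\supp(F')$ is perfect); one must take $d^*$ minimal with $p_{d^*}=1$ so that $a_{d^*}\not\equiv 0$, whence the constant you obtain is nonzero; and the pair $(i,j)$ must be chosen with $\mathbb{P}(W_i\neq 0,W_j\neq 0)>0$, conditioning on that level-zero event before invoking independence. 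Finally, "propagating through the lower-order coefficients" is both unfinished and unnecessary: $a_{d^*}\equiv c\neq 0$ together with $a_{d^*}^{[i]}=W_i^{-1}L_i^{d^*}a_{d^*}$ already gives $W_i=L_i^{d^*}$ on $\{W_i\neq 0\}$, and then $\mathbb{E}\bigl(\sum_{i:W_i\neq 0}L_i^{d^*}\bigr)=\mathbb{E}(\sum_i W_i)=1=\mathbb{E}(\sum_i L_i)$ with $L_i\in(0,1)$ a.s.\ is impossible for $d^*\ge 2$, forces all $W_i\neq 0$ and hence $W=L$ (excluded) when $d^*=1$, and forces $\sum_i W_i\ge 2$ a.s.\ when $d^*=0$. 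Since this covers the critical case too, your separate H\"older argument for Theorem~B(2) is valid (indeed $L$ can only fall under Theorem~A or~B(1), as $L_i<1$ a.s.\ rules out B(2) for $L$) but redundant.
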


Now for $n\ge 1$, and $(q,t)\in\mathbb{R}^2$ define 
$$
\theta^{(m)}_{F,n}(q,t)=\sum_{w\in \mathscr{A}^n} {\rm Osc}^{(m)}_{F_W}(I_w)^q|I^L_w|^{-t}\text { and } \widetilde \theta^{(m)}_{F,n}(q,t)=\mathbb{E}\big (\theta^{(m)}_{F,n}(q,t)\big ),
$$
with the convention $0^q=0$. Then define
$$
\theta^{(m)}_F(q,t)=\limsup_{n\to\infty} \theta^{(m)}_{F,n}(q,t)\text { and }\widetilde \theta^{(m)}_F(q,t)=\limsup_{n\to\infty}\widetilde \theta^{(m)}_{F,n}(q,t),
$$
as well as 
$$
\tau^{(m)}_{F,b}(q)=\sup\{t\in\mathbb{R}: \theta^{(m)}_F(q,t)=0\}\text { and }\widetilde \tau^{(m)}_{F,b}(q)=\sup\{t\in\mathbb{R}: \widetilde \theta^{(m)}_{F,b}(q,t)=0\}.$$

\begin{proposition}\label{Lqspectrum}
Let $m\ge 1$. With probability 1, for all $q\in\mathbb{R}_+$ we have $\tau^{(m)}_F(q)\geq \tau^{(1)}_{F,b}(q)\ge \widetilde \tau^{(1)}_{F,b}(q)$, and for all $q\le \mathbb{R}^*_-$ we have  $\tau^{(m)}_F(q)\geq \tau^{(m)}_{F,b}(q)\ge \widetilde \tau^{(m)}_{F,b}(q)$. 

Moreover, $\widetilde \tau^{(m)}_{F,b}(q)=\tau(q)$ for all $q< \widetilde q$, where $\widetilde q=\max\{p:\tau(p)=0\}$ (by convention $\max(\emptyset)=\infty$). 
\end{proposition}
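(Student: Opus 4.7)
The plan is to treat the three comparisons $\tau^{(m)}_F\ge \tau^{(\cdot)}_{F,b}\ge \widetilde\tau^{(\cdot)}_{F,b}=\tau$ in order, with the identification $\widetilde\tau^{(m)}_{F,b}=\tau$ playing the central role.

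For the first inequality, I use Proposition~\ref{nonpol} to identify $\supp(F^{(m)})$ with $\supp(F')$, so that $\{I_w^L:w\in\mathscr{A}^n\}$ forms a partition of the support. When $q\ge 0$, I first reduce to $m=1$ via the elementary bound ${\rm Osc}^{(m)}_f(J)\le 2^{m-1}{\rm Osc}^{(1)}_f(J)$; then, given $r>0$, I select the cells $I_w^L$ whose lengths lie in $[r,br]$ and place at the center of each such cell a ball of radius $\asymp r$. The resulting family is, up to bounded multiplicity, a packing of $\supp(F')$ whose oscillations dominate those of the enclosed grid cells, yielding $\tau^{(1)}_F(q)\ge \tau^{(1)}_{F,b}(q)$. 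For $q<0$ the pointwise oscillation comparison reverses under the negative power, so I stay at order $m$: each cell $I_w^L$ of length $\asymp r$ sits inside some ball $B$ of radius $\asymp r$ centered on the support, and raising ${\rm Osc}^{(m)}_F(B)\ge {\rm Osc}^{(m)}_F(I_w^L)$ to the power $q<0$ flips the sign, so the grid sum bounds the supremum of packing sums from below, giving $\tau^{(m)}_F(q)\ge \tau^{(m)}_{F,b}(q)$.

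The two remaining claims both hinge on the factorization
\[
\widetilde\theta^{(m)}_{F,n}(q,t)=\Phi(q,t)^n\,\Psi^{(m)}(q,t),\qquad \Psi^{(m)}(q,t):=\mathbb{E}\bigl((Z_W^{(m)})^q(Z_L^{(1)})^{-t}\bigr),
\]
which follows from \eqref{self-sim2}, \eqref{self-sim2'} together with the independence of the cascade weights across generations, across letters at a fixed level, and between the weights above a vertex $w$ and the sub-cascade rooted at $w$. Since $L_i\in(0,1)$ almost surely, $\Phi(q,\cdot)$ is strictly increasing and equals $1$ exactly at $t=\tau(q)$; wherever $\Psi^{(m)}(q,t)$ is finite and positive, the geometric factor $\Phi(q,t)^n$ determines the transition of $\widetilde\theta^{(m)}_{F,n}(q,t)$ between $0$ and $\infty$, which gives $\widetilde\tau^{(m)}_{F,b}(q)=\tau(q)$. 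The passage from $\widetilde\tau^{(m)}_{F,b}$ to $\tau^{(m)}_{F,b}$ is then a standard Markov-plus-Borel--Cantelli argument: for $t<\tau(q)$ the sequence $\mathbb{E}(\theta^{(m)}_{F,n}(q,t))$ is summable, so $\theta^{(m)}_{F,n}(q,t)\to 0$ almost surely and $\tau^{(m)}_{F,b}(q)\ge t$. To upgrade this to an a.s.\ statement simultaneous in $q$, I apply it for $q$ in a countable dense subset of $(-\infty,\widetilde q)$ and then use the concavity in $q$ of both $\tau^{(m)}_{F,b}$ and $\widetilde\tau^{(m)}_{F,b}$ (inherited from the H\"older-induced log-convexity of the partition function) to extend to all $q$ in the interior of the range.

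The main obstacle is verifying that $\Psi^{(m)}(q,t)$ is finite and positive in a neighborhood of $(q,\tau(q))$ for every $q<\widetilde q$. Positive $q$-moments of $Z_W^{(m)}\le 2^m\|F_W\|_\infty$ come from the $L^p$ conclusion of Theorems~A and~B; negative $t$-moments of $Z_L^{(1)}=F_L(1)$ are obtained by applying the functional equation~\eqref{sesi} to $F_L$ together with the hypothesis $\varphi_L>-\infty$, which forces an exponentially decaying lower tail for $F_L(1)$. For negative $q$ one additionally needs lower-tail control on $Z_W^{(m)}$, which follows similarly from $\varphi_W>-\infty$. A H\"older interpolation between these families of bounds then yields finiteness of $\Psi^{(m)}(q,t)$ in a strip around the graph of $\tau$, and the restriction $q<\widetilde q$ is exactly what keeps $(q,\tau(q))$ inside this strip.
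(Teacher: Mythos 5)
Your factorization $\widetilde\theta^{(m)}_{F,n}(q,t)=\Phi(q,t)^n\Psi^{(m)}(q,t)$, the Markov/Borel--Cantelli passage from $\widetilde\tau^{(m)}_{F,b}$ to $\tau^{(m)}_{F,b}$, the use of concavity together with a countable dense set of $q$'s, and the moment-finiteness sketch for $\Psi^{(m)}$ all match the paper's route and are fine in substance. The first inequality, however, is argued in the wrong direction. To prove $\tau^{(m)}_F(q)\ge\tau^{(1)}_{F,b}(q)$ for $q\ge 0$ one must bound \emph{every} packing sum from above by a grid quantity, because, with $\log r<0$, an upper bound on $\sup\{\sum_B\mathrm{Osc}^{(m)}_F(B)^q\}$ yields a lower bound on $\tau^{(m)}_F(q)$. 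The paper does this by covering each ball $B\in\mathcal{B}_r$ with at most three grid cells $I^L_{w_k}$ of comparable length (this is where Lemma~\ref{controlI} and \eqref{ratio1} enter), obtaining $\mathrm{Osc}^{(m)}_F(B)\le 2^{m-1}\sum_k\mathrm{Osc}^{(1)}_F(I^L_{w_k})$ and then, after controlling the multiplicity with which a single cell is hit, $\sum_B\mathrm{Osc}^{(m)}_F(B)^q r^{-t}\lesssim\sum_n\theta^{(1)}_{F,n}(q,t)$. You instead place balls \emph{around} grid cells so that the balls' oscillations dominate the cells'; this produces a \emph{lower} bound on the packing supremum, which gives an \emph{upper} bound on $\tau^{(m)}_F(q)$ — the opposite of what is claimed.

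The $q<0$ step has the same quantifier problem: the correct move is, given an arbitrary packing $\mathcal{B}_r$, to select for each $B\in\mathcal{B}_r$ a cell $I^L_w\subset B$ of maximal length (the chosen cells are then distinct) and use $\mathrm{Osc}^{(m)}_F(B)^q\le\mathrm{Osc}^{(m)}_F(I^L_w)^q$ to bound the packing sum by the grid sum. Your phrasing starts from ``each cell $I^L_w$ sits inside some ball $B$'' — fixing cells and finding balls rather than fixing an arbitrary packing and finding inscribed cells — which does not yield an upper bound on $\sup\{\sum_B\mathrm{Osc}^{(m)}_F(B)^q\}$; and the asserted consequence, ``the grid sum bounds the supremum of packing sums from below,'' would again give the wrong sign after dividing by $\log r$. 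A secondary mismatch: $\theta^{(m)}_{F,n}$ sums over a single generation $n$ with weights $|I^L_w|^{-t}$, not $r^{-t}$, so picking cells by length in $[r,br]$ across several generations does not directly produce the grid quantity; the paper reconciles the two normalizations through the $\underline{a},\overline{a}$ bounds, with an error term carrying an $\varepsilon$ that is sent to $0$ at the end.
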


\noindent
{\it Proof of the upper bound for the singularity spectra.} Let $m\ge 1$. Recall that $J=\{q\in\mathbb{R}:\tau'(q)q-\tau(q)\ge 0\}$. Since $\tau$ is concave, we have $J\subset (-\infty, \widetilde q]$. Consequently, since $(\tau_F^{(m)})^*$ is concave, due to Proposition~\ref{Lqspectrum}, with probability 1, for all $h\ge 0$ we may have $(\tau_F^{(m)})^*(h)\ge 0$ only if $\tau^*(h)\ge 0$. In this case, we have  $\dim_H E_F^{(m)}(h)\leq (\tau_F^{(m)})^*(h)\le \tau^*(h)$ by Proposition~\ref{Formalism}. Also, since $0$ belongs to $J$, we have $\dim_H \supp(F')\le -\tau(0)$.

\subsection{Lower bound for the singularity spectra}
\medskip

Let $I=\overline{\{\tau'(q):q\in J\}}$. We are going to distinguish the case $h\in \Inte(I)$ and the case $h\in \partial I$.

\subsubsection{\bf The case $h\in \Inte(I)$}\label{lowerbound1}
\medskip

At first we introduce some auxiliary measures. If $q\in \Inte(J)$, $w\in\mathscr{A}^*$ and $n\ge 1$ let
$$
Q_q(w)=\mathbf{1}_{\{Q_W(w)\neq 0\}}|Q_W(w)|^qQ_L(w)^{-\tau(q)}\text{ and }
Y_{q,n}(w)=\sum_{v\in\mathscr{A}^n} Q_q(w\cdot v).
$$

\begin{proposition}\label{muq}$\ $
\begin{enumerate}
\item 
With probability 1, for all $q\in \Inte(J)$ and $w\in\mathscr{A}^*$, the sequence $Y_{q,n}(w)$ converge to a positive limit $Y_q(w)$. Moreover, for every $n\ge 1$, $\sigma(\{Q_U(w):w\in \mathscr{A}^{n-1},U\in\{W,L\}\})$ and $\sigma(\{Y_q(w):w\in \mathscr{A}^n\})$ are independent, and the random variables $Y_q(w)$, $w\in\mathscr{A}^n$, are independent copies of $Y_q(\emptyset)$, that we denote by $Y_q$.  
 
\item 
For every compact subinterval $K$ of $ \Inte(J)$, there exists $p_K>1$ such that
$$\mathbb{E}(\sup\nolimits_{q\in K}Y_q^{p_K})<\infty.$$ 

\item With probability 1, for all $q\in  \Inte(J)$, the function 
\begin{equation}\label{selfsimimeasure}
\mu_q([w])=Q_q(w)Y_q(w),\ w\in\mathscr{A}^* 
\end{equation}
defines a Borel measure on $\mathscr{A}^{\mathbb{N}_+}$. 
\end{enumerate}
\end{proposition}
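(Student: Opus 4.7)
Our plan is to reduce all three parts to the classical theory of Mandelbrot multiplicative cascades, applied to the weight family $(W_i(q))_{i=0}^{b-1}$ with $W_i(q):=\mathbf{1}_{\{W_i\neq 0\}}|W_i|^qL_i^{-\tau(q)}$. The defining identity $\Phi(q,\tau(q))=1$ is exactly the conservativity condition $\mathbb{E}(\sum_i W_i(q))=1$, so for each fixed $q\in\Inte(J)$ the standard cascade theory applies; the real work is to make the construction joint in $q$.

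For Part 1, after the factorization $Q_q(w\cdot v)=Q_q(w)\cdot(\text{subtree cascade weight at }v)$, the sequence $(Y_{q,n}(w))_n$ is a constant multiple of the non-negative $L^1$-martingale associated with the subtree rooted at $w$; it therefore converges a.s.\ to a limit $Y_q(w)$. The independence and iid-across-levels structure stated in the proposition are inherited directly from the independence of the weights $(W(u),L(u))_u$ across the tree. Positivity follows from the usual extinction argument: the functional equation $Y_q=\sum_i W_i(q)Y_q^{(i)}$ forces $p_0:=\mathbb{P}(Y_q=0)$ to satisfy $p_0=\mathbb{E}(p_0^N)$ with $N=\#\{i:W_i\neq 0\}\ge 2$ a.s.\ by hypothesis, so $p_0\in\{0,1\}$; the $L^{p_K}$ boundedness from Part 2 gives $\mathbb{E}(Y_q)>0$, hence $p_0=0$.

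For Part 2, fix a compact $K\subset\Inte(J)$. I would first show that some $p_K>1$ satisfies $\sup_{q\in K}\Phi(p_Kq,p_K\tau(q))<1$: writing $g(p,q):=p\tau(q)-\tau(pq)$ we have $g(1,q)=0$ and $\partial_p g(1,q)=\tau(q)-q\tau'(q)<0$ throughout $\Inte(J)$, so by analyticity of $\tau$ and compactness of $K$ a single $p_K>1$ works uniformly; since $\Phi$ is strictly increasing in its second variable (as $L_i<1$) and $\Phi(p_Kq,\tau(p_Kq))=1$, the desired strict inequality follows. Biggins' $L^p$ theorem for branching sums then yields $\sup_{q\in K}\mathbb{E}(Y_q^{p_K})<\infty$. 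To upgrade this to $\mathbb{E}(\sup_{q\in K}Y_q^{p_K})<\infty$, I would extend $q\mapsto Y_{q,n}$ to a random holomorphic function on a complex neighborhood $\Omega_K$ of $K$ (using $|W_i|^q=\exp(q\log|W_i|)$ on $\{W_i\neq 0\}$), dominate $\sup_K|Y_{q,n}|$ by a contour integral via Cauchy's formula, take $L^{p_K}$ norms with Minkowski's inequality, and pass to the limit $n\to\infty$ using the uniform integrability supplied by the contour bound.

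Part 3 is then essentially bookkeeping. Combining the uniform $L^{p_K}$ estimate from Part 2 with Kolmogorov's continuity criterion yields a continuous version of $q\mapsto Y_q(w)$ defined jointly for all $w\in\mathscr{A}^*$ on a single probability-$1$ event; on this event, the cascade self-similarity rewrites as the additivity $Q_q(w)Y_q(w)=\sum_i Q_q(w\cdot i)Y_q(w\cdot i)$ on cylinders, and Carathéodory's extension theorem delivers the Borel measure $\mu_q$. The main obstacle is the upgrade in Part 2 from the pointwise-in-$q$ bound $\sup_{q\in K}\mathbb{E}(Y_q^{p_K})<\infty$ to the uniform bound $\mathbb{E}(\sup_{q\in K}Y_q^{p_K})<\infty$: it is what welds the pointwise cascade theory into a single almost-sure construction of the whole family $(\mu_q)_{q\in\Inte(J)}$, and handling it via the holomorphic extension of the cascade in $q$ is where the real work sits.
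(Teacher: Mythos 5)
Your proposal tracks the paper's proof closely on its main technical point: both normalize to the conservative cascade weights $W_q$, both extend $q$ to a complex parameter and exploit holomorphy of $z\mapsto Y_{z,n}$, and both use Cauchy's formula on a contour around each $q\in K$ to convert the pointwise $L^{p_K}$ bound on the martingale increments into a bound on $\mathbb{E}\big(\sup_{q\in K}|Y_{q,n}-Y_{q,n-1}|^{p_K}\big)$ — which, as you correctly identify, is the crux. Your route to $\sup_{q\in K}\Phi(p_Kq,p_K\tau(q))<1$ via $g(p,q)=p\tau(q)-\tau(pq)$ and $\partial_p g(1,q)=\tau(q)-q\tau'(q)=-\tau^*(\tau'(q))<0$ is an equivalent rephrasing of the paper's criterion $\partial M/\partial p(q,1^+)<0$ for $M(q,p)=\Phi(pq,p\tau(q))$, and the compactness step to get a uniform $p_K$ is the same. (For Part 3, Kolmogorov's continuity criterion is not needed: the Cauchy-formula step already gives uniform convergence of the analytic functions $q\mapsto Y_{q,n}$ on compact discs, hence a jointly continuous limit.)

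There is, however, a genuine gap in Part 1. The proposition asserts that on a \emph{single} probability-one event, $Y_q(w)>0$ for \emph{all} $q\in \Inte(J)$ simultaneously. Your extinction argument — fixed points of $p_0\mapsto\mathbb{E}(p_0^N)$ with $N\ge 2$ a.s., combined with $\mathbb{E}(Y_q)=1$ — shows $\mathbb{P}(Y_q=0)=0$ for each \emph{fixed} $q$. Restricting to a countable dense subset of $K$ only gives a.s.\ positivity on that subset, and the continuity (even analyticity) of $q\mapsto Y_q$ does not rule out isolated zeros at the remaining $q$: an uncountable union of null sets need not be null. The paper closes exactly this gap with a dedicated argument. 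For a compact dyadic interval $I\subset K$, the event $E_I=\{\exists\,q\in I:\ Y_q=0\}$ is a tail event because the vanishing pattern $\{W_{q,i}=0\}=\{W_i=0\}$ does not depend on $q$; hence $\mathbb{P}(E_I)\in\{0,1\}$. If one assumes $\mathbb{P}(E_K)=1$, dyadic bisection produces a nested sequence of subintervals $I(n)$ with $\mathbb{P}(E_{I(n)})=1$ shrinking to a single point $q_0$, and continuity of $q\mapsto Y_q$ then forces $\mathbb{P}(Y_{q_0}=0)=1$, contradicting the $L^{p_K}$ convergence of the martingale at the fixed parameter $q_0$. You need this (or an equivalent argument) to upgrade your per-$q$ positivity to the uniform statement.
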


Recall the definitions given at the end of Section~\ref{intro}. 

For $m\ge 1$, $t\in\mathscr{A}^{\mathbb{N}_+}$, $U\in\{W,L\}$ and $\gamma\in\{-1,0,+1\}$ let 
$$
\underline{\alpha}_U^{(m),\gamma}(t)\ (\text{resp. }\overline{\alpha}_U^{(m),\gamma}(t))=\liminf_{n\to\infty} \ (\text{resp. }\limsup_{n\to\infty})\ -\frac{\log_b {\rm Osc}^{(m)}_{F_U}((t|_{n})^\gamma)}{n}$$
(recall that if $w\in \mathscr{A}^*$, $w^-=w^{-1}$, $w=w^0$ and $w^+=w^{+1}$ are defined in Section~\ref{intro}).
 
\smallskip
 
The next proposition follows directly from the definition of the $m^{\text{\tiny th}}$ oscillation. 
\begin{proposition}\label{exposant}
Let $t\in \mathscr{A}^{\mathbb{N}_+}$ and $\widetilde t=F_L(\pi(t))$. 
\begin{enumerate}
\item Let $r\in (0,1)$ and suppose that 
\begin{equation}\label{inclusion}
\exists\ n_r,\  n'_r\in\mathbb{N},\  I^L_{t|_{n_r}}  \subset B(\widetilde t,r ) \subset I^L_{t|_{n'_r}^-}\cup I^L_{t|_{n'_r}} \cup I^L_{t|_{n'_r}^+}.
\end{equation}
Then
\begin{equation}\label{omo1}
O_W^{(m)}(t|_{n_r})\le  O^{(m)}_F(B(\widetilde t,r)) \le 2^{m-1} O^{(1)}_F(B(\widetilde t,r)) \le  2^{m-1} \sum_{w\in \{t|_{n'_r}^-,t|_{n'_r},t|_{n'_r}^+\}}O^{(1)}_W(w).
\end{equation}

\item Suppose that \eqref{inclusion} holds for all $r>0$ small enough and $\lim_{r\to 0^+} n_r/n'_r=1$. Then,
\begin{equation}\label{exponent}
\frac{\min\{\underline \alpha_W^{(1),\gamma}(t):\gamma=-1,0,+1\}}{\overline\alpha_L^{(1),0}(t)}\le h^{(m)}_F(\widetilde t)\le \frac{\overline  \alpha_W^{(m),0}(t)}{\min \{\underline \alpha_L^{(1),\gamma}(t):\gamma=-1,0,+1\}}.
\end{equation}
\end{enumerate}
\end{proposition}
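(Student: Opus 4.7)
Plan for proving Proposition~\ref{exposant}.

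For Part~1, each of the three inequalities follows from a standard property of oscillations combined with the self-similarity identity \eqref{self-sim2}. First, since $\text{Osc}^{(m)}_F$ is monotone under inclusion of intervals, the hypothesis $I^L_{t|_{n_r}} \subset B(\widetilde t, r)$ gives $\text{Osc}^{(m)}_F(I^L_{t|_{n_r}}) \le O^{(m)}_F(B(\widetilde t, r))$, and \eqref{self-sim2} rewrites the left-hand side as $O_W^{(m)}(t|_{n_r})$. Second, the relation $\Delta_h^m f(t) = \Delta_h^{m-1}f(t+h) - \Delta_h^{m-1}f(t)$ yields by induction on $m$ the pointwise inequality $|\Delta_h^m f(t)| \le 2\,\text{Osc}^{(m-1)}_f(J)$ whenever $[t,t+mh]\subset J$, hence $\text{Osc}^{(m)}_f(J) \le 2^{m-1}\text{Osc}^{(1)}_f(J)$. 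Third, since $F$ is continuous and the three intervals $I^L_{t|_{n'_r}^\gamma}$ ($\gamma\in\{-1,0,+1\}$) are pairwise adjacent with their union covering $B(\widetilde t,r)$, the triangle inequality applied through their common boundary points gives $\text{Osc}^{(1)}_F(B(\widetilde t,r)) \le \sum_\gamma \text{Osc}^{(1)}_F(I^L_{t|_{n'_r}^\gamma}) = \sum_\gamma O^{(1)}_W(t|_{n'_r}^\gamma)$ by \eqref{self-sim2}.

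For Part~2, I would pass to the $\liminf$ as $r\to 0^+$ in the chain of inequalities from Part~1 after taking logarithms, and convert $\log r$ into quantities involving $O_L^{(1)}$ by using the two size estimates forced by the inclusion \eqref{inclusion}:
\begin{equation*}
|I^L_{t|_{n_r}}| = O_L^{(1)}(t|_{n_r}) \le 2r, \qquad 2r \le \sum_\gamma O_L^{(1)}((t|_{n'_r})^\gamma) \le 3\max_\gamma O_L^{(1)}((t|_{n'_r})^\gamma).
\end{equation*}
Taking $-\log_b$ of both and dividing by $n_r$ (respectively $n'_r$), one obtains
\begin{equation*}
\frac{-\log_b r}{n_r} \le \frac{-\log_b O_L^{(1)}(t|_{n_r})}{n_r} + o(1), \qquad \frac{-\log_b r}{n'_r} \ge \min_\gamma \frac{-\log_b O_L^{(1)}((t|_{n'_r})^\gamma)}{n'_r} + o(1).
\end{equation*}
As $r\to 0^+$, the limsup of the first right-hand side is $\overline\alpha_L^{(1),0}(t)$, while the liminf of $\min_\gamma(-\log_b O_L^{(1)}/n'_r)$ is at least $\min_\gamma\underline\alpha_L^{(1),\gamma}(t)$ (since the liminf of a finite minimum dominates the minimum of the liminfs).

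Combining these with the two extreme inequalities of Part~1, writing each ratio $\log O/\log r$ as a quotient of the form $\frac{(-\log_b O)/n}{(-\log_b r)/n}$, and using the assumption $n_r/n'_r\to 1$ to interchange the two scales, yields
\begin{equation*}
\liminf_{r\to 0^+}\frac{\log O^{(m)}_F(B(\widetilde t,r))}{\log r} \ge \frac{\min_\gamma \underline\alpha_W^{(1),\gamma}(t)}{\overline\alpha_L^{(1),0}(t)}
\end{equation*}
from the right-hand inequality of Part~1, and the matching upper bound $\overline\alpha_W^{(m),0}(t)/\min_\gamma\underline\alpha_L^{(1),\gamma}(t)$ from the left-hand inequality. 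The main obstacle is bookkeeping: one must (i) handle the fact that in the upper bound of $h_F^{(m)}$, the numerator is controlled by $\overline\alpha_W^{(m),0}$ at scale $n_r$ while the denominator is controlled by $\underline\alpha_L^{(1),\gamma}$ at scale $n'_r$, which is precisely where the hypothesis $n_r/n'_r\to 1$ intervenes, and (ii) justify the standard but non-trivial inequality $\liminf(A_r/B_r)\le \limsup A_r/\liminf B_r$ (for positive sequences with $\liminf B_r>0$) used when taking $\liminf$ of a ratio.
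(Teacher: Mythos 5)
The paper gives no proof of this proposition (it merely asserts that it "follows directly from the definition of the $m^{\text{th}}$ oscillation"), so there is no alternative argument in the paper to compare against; your proof is a careful and correct unfolding of exactly what the authors intend the reader to verify. Your Part~1 argument (monotonicity of $\mathrm{Osc}^{(m)}$ under inclusion plus \eqref{self-sim2}; the induction $\mathrm{Osc}^{(m)}_f \le 2\,\mathrm{Osc}^{(m-1)}_f$; and the chaining of first-order oscillations across three adjacent intervals) and your Part~2 bookkeeping (rewriting each ratio as $\frac{(-\log_b O)/n}{(-\log_b r)/n}$, using $|I^L_{t|_{n_r}}|\le 2r\le 3\max_\gamma|I^L_{(t|_{n'_r})^\gamma}|$, and invoking $n_r/n'_r\to 1$ together with the elementary $\liminf/\limsup$ quotient inequalities and $\liminf_r\min_\gamma=\min_\gamma\liminf_r$ for finitely many $\gamma$) are precisely the right ingredients, and you correctly flag the two scales $n_r$, $n'_r$ as the only subtle point. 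One tiny gap you could close for completeness: you should note that $n_r\to\infty$ follows from $|I^L_{t|_{n_r}}|\le 2r\to 0$, and then $n'_r\to\infty$ follows from $n_r/n'_r\to 1$; this is what justifies absorbing the additive constants into $o(1)$ and passing from the $\liminf$/$\limsup$ along $r\to 0^+$ to those along $n\to\infty$ in the definitions of $\underline\alpha$ and $\overline\alpha$.
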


Recall that for $(q,t)\in\mathbb{R}^2$ we have defined
\begin{equation*}
\Phi(q,t)= \mathbb{E}\Big (\sum_{i=0}^{b-1}\mathbf{1}_{\{W_i\neq 0\}}|W_i|^qL_i^{-t}\Big )
\end{equation*}
and $\tau(q)$ is the unique solution of $\Phi(q,\tau(q))=1$. By construction, we have 
\begin{equation}\label{tau'}
\tau'(q)=-\frac{(\partial\Phi /\partial q)(q,\tau(q))}{(\partial\Phi /\partial t)(q,\tau(q))}=\frac{\mathbb{E}\Big(\sum_{i=0}^{b-1}\mathbf{1}_{\{W_i\neq 0\}}|W_i|^qL_i^{-\tau(q)}\log (|W_i|)\Big)}{\mathbb{E}\Big(\sum_{i=0}^{b-1}\mathbf{1}_{\{W_i\neq 0\}}|W_i|^qL_i^{-\tau(q)}\log (L_i)\Big)} .
\end{equation}
\begin{proposition}\label{keypro}
With probability 1,  for all $q\in \Inte(J)$, for $\mu_q$-almost every $t\in\supp(\mu_q)$,
\begin{enumerate}
\item $\displaystyle
\lim_{n\to\infty} \frac{\log |Q_W(t|_{n})|}{-n}=-\frac{\partial \Phi}{\partial q}(q,\tau(q));\\\ \lim_{n\to\infty} \frac{\log|Q_W((t|_{n})^\gamma)|}{-n}\in\{ -\frac{\partial \Phi}{\partial q}(q,\tau(q)),+\infty\}$, for $\gamma\in\{-1,1\}$;

\smallskip 
\item $\displaystyle
\lim_{n\to\infty} \frac{\log Q_L(t|_{n})}{-n}=\lim_{n\to\infty} \frac{\log Q_L((t|_{n})^\gamma)}{-n}= \frac{\partial \Phi}{\partial t}(q,\tau(q))$, for $\gamma\in\{-1,1\}$;

\smallskip

\item 
$\displaystyle \lim_{n\rightarrow\infty}\frac{\log_b Z_U^{(m)}(t|_{n})}{n}=\lim_{n\rightarrow\infty}\frac{\log Z_U^{(m)}((t|_{n})^\gamma)}{n}=0$, for all $m\ge 1$, $U\in\{W,L\}$ and $\gamma\in\{-1,1\}$.

\smallskip

\item 
$\displaystyle \liminf_{n\rightarrow\infty}\frac{\log Y_q(t|_{n})}{-n}\ge 0$.
\end{enumerate}
\end{proposition}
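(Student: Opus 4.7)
\textbf{Proof plan for Proposition~\ref{keypro}.} My plan is to reduce each of the four assertions to a strong law of large numbers, respectively a Borel--Cantelli argument, carried out under a suitable Peyri\`ere-type probability measure. Fix $q\in\Inte(J)$ and introduce the probability measure $\mathbb{Q}_q$ on $\Omega\times\mathscr{A}^{\mathbb{N}_+}$ defined by $d\mathbb{Q}_q(\omega,t)=\mu_q^\omega(dt)\,d\mathbb{P}(\omega)$; this is a probability measure because $\mathbb{E}[Y_{q,n}(\emptyset)]=\Phi(q,\tau(q))^n=1$ and $Y_{q,n}(\emptyset)\to Y_q$ in $L^1$ by the uniform integrability provided by Proposition~\ref{muq}(2). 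Using the independence of $Y_q(w)$ from $Q_q(w)$ together with $\Phi(q,\tau(q))=1$, a standard Kahane--Peyri\`ere computation shows that under $\mathbb{Q}_q$ the sequence $\bigl(W(t|_{k-1}),L(t|_{k-1}),t_k\bigr)_{k\ge 1}$ is i.i.d.\ with tilted marginal
$$\widetilde{\mathbb{P}}_q\bigl(dW,dL,\{t_1=i\}\bigr)=\mathbf{1}_{\{W_i\neq 0\}}|W_i|^q L_i^{-\tau(q)}\,\mathbb{P}(dW,dL).$$
Parts (1) and (2) along the main path then follow from the classical SLLN applied to $\log|W_{t_k}(t|_{k-1})|$ and $\log L_{t_k}(t|_{k-1})$: their $\widetilde{\mathbb{P}}_q$-expectations are exactly $\partial_q\Phi(q,\tau(q))$ and $-\partial_t\Phi(q,\tau(q))$ respectively, and dividing by $-n$ gives the announced limits.

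To treat the neighbors $(t|_n)^\gamma$ with $\gamma\in\{-1,+1\}$, write $(t|_n)^\gamma=t|_{n-k_n^\gamma}\cdot u_n^\gamma$, where $k_n^\gamma$ is the length of the ``carry'' caused by the modification. Since under $\mathbb{Q}_q$ the $t_k$'s are i.i.d., the probability that $t_{n-k+1}=\cdots=t_n=0$ (resp.\ $=b-1$) decays geometrically in $k$, so Borel--Cantelli gives $k_n^\gamma=O(\log n)$, $\mathbb{Q}_q$-almost surely. Decomposing
$$Q_W\bigl((t|_n)^\gamma\bigr)=Q_W\bigl(t|_{n-k_n^\gamma}\bigr)\cdot\prod_{j=1}^{k_n^\gamma}W_{(u_n^\gamma)_j}\bigl(t|_{n-k_n^\gamma}\cdot u_n^\gamma|_{j-1}\bigr),$$
either one of the factors in the second product vanishes, in which case $|Q_W((t|_n)^\gamma)|=0$ and the announced value $+\infty$ is attained, or none vanishes; in that case the first factor provides the same limit as in part (1) (because $k_n^\gamma/n\to 0$), while the second is a product of $O(\log n)$ random variables independent of $t|_{n-k_n^\gamma}$ and having $\mathbb{E}[|\log|W_i||^r\mathbf{1}_{\{W_i\neq 0\}}]<\infty$ for some $r>1$ (an immediate consequence of $\varphi_W>-\infty$ on all of $\mathbb{R}$), hence of size $o(n)$. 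The same splitting handles (2) along the neighbors, with no vanishing case since $0<L_i<1$.

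For part (3), Theorems~A--B yield $\mathbb{E}\bigl[(Z_U^{(m)})^p\bigr]<\infty$ for some $p>1$, and $Z_U^{(m)}(w)$ depends only on $(W(w\cdot v),L(w\cdot v))_{v\in\mathscr{A}^*}$, hence is independent of $\sigma(W(v),L(v):|v|\le|w|-1)$ and in particular of the random choice of $t|_n$ under $\mathbb{Q}_q$. Therefore $\mathbb{E}_{\mathbb{Q}_q}\bigl[Z_U^{(m)}(t|_n)^p\bigr]=\mathbb{E}\bigl[(Z_U^{(m)})^p\bigr]$ is bounded uniformly in $n$, and Markov combined with Borel--Cantelli gives $\log Z_U^{(m)}(t|_n)=o(n)$, $\mathbb{Q}_q$-a.s.; the corresponding statement for $(t|_n)^\gamma$ follows by combining this with the carry-length bound of the previous paragraph. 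Part (4) is obtained identically using the uniform bound $\mathbb{E}[\sup_{q\in K}Y_q^{p_K}]<\infty$ from Proposition~\ref{muq}(2).

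The quantifier ``for all $q\in\Inte(J)$ with probability~$1$'' is handled by intersecting, over a countable dense $Q_0\subset\Inte(J)$, the $\mathbb{P}$-full-probability events produced above, and then extending to arbitrary $q\in\Inte(J)$ via continuity of $\tau$ and its derivatives together with the uniform moment control of Proposition~\ref{muq}(2) on each compact subset of $\Inte(J)$. The most delicate step, and the main obstacle I anticipate, is the carry-propagation analysis for $(t|_n)^\gamma$: one must simultaneously control the length $k_n^\gamma$, show that the product of the ``extra'' factors is negligible on an exponential scale in $n$ (or cleanly vanishes), and ensure that these estimates hold uniformly in $q$ on compact subsets of $\Inte(J)$.
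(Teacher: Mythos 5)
Your Peyri\`ere-measure route gives a clean and correct proof \emph{for each fixed} $q\in\Inte(J)$: the tilting computation, the SLLN for $\log|W_{t_k}(t|_{k-1})|$ and $\log L_{t_k}(t|_{k-1})$, the $O(\log n)$ carry-propagation bound, and the moment argument for $Z_U^{(m)}(t|_n)$ and $Y_q(t|_n)$ are all standard and sound. This is genuinely different from the paper's method, which avoids Peyri\`ere measures and instead estimates directly $\mathbb{E}\bigl(\sup_{q\in K}\sum_{n\ge 1}\mu_q(E_{q,n,\varepsilon}^\lambda(\gamma))\bigr)$ via a Markov inequality followed by a ``$\sup_q \le |{\cdot}|(q_0)+\int_K|\partial_q(\cdot)|\,\mathrm{d}q$'' trick (Lemma~\ref{keylemma}), and which handles the off-diagonal words $(t|_n)^\gamma$ through the explicit combinatorial decomposition~\eqref{voisin} rather than through a carry-length Borel--Cantelli bound.

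However, there is a genuine gap in your treatment of the quantifier ``with probability 1, for all $q\in\Inte(J)$.'' You propose to run the fixed-$q$ argument on a countable dense $Q_0\subset\Inte(J)$ and then ``extend by continuity of $\tau$ and its derivatives.'' This does not work: the measures $\mu_q$ for distinct $q$ are in general mutually singular (each $\mu_q$ is carried by the level set of a $q$-dependent Lyapunov exponent), so a statement that holds $\mu_q$-a.e.\ for $q\in Q_0$ gives no information whatsoever about $\mu_{q'}$-a.e.\ $t$ when $q'\notin Q_0$, no matter how close $q'$ is to $Q_0$. Continuity of the target limits $\pm\partial\Phi(q,\tau(q))$ in $q$ is irrelevant, because the exceptional $t$-set varies with $q$ and cannot be controlled without a simultaneous (uniform in $q$) estimate. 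This is precisely what the paper's $\mathbb{E}\bigl(\sup_{q\in K}\mu_q(\cdots)\bigr)$ bound achieves, exploiting the analyticity of $q\mapsto H^{\eta,\lambda}_n(q,\gamma)$ and $q\mapsto Y_q$ and the a priori bound $\mathbb{E}(\sup_{q\in K}Y_q^{p_K})<\infty$ from Proposition~\ref{muq}(2). If you wish to keep the Peyri\`ere approach, you would need an analogue of that uniformity, e.g.\ a maximal inequality in $q$ for the $\mathbb{Q}_q$-martingales or a direct bound on $\mathbb{E}\bigl(\sup_{q\in K}\mu_q(\text{bad set at level }n)\bigr)$, which essentially forces you back to the paper's computation. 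Also note that, contrary to what you flag at the end, the carry-propagation step is not the main obstacle; it is controllable as you sketch, whereas the uniformity in $q$ is where the proposal currently breaks down.
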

\smallskip


\noindent
{\it Proof of the lower bound.} Due to \eqref{selfsimimeasure} and  Proposition~\ref{keypro} (1), (2) and (4), with probability 1, for all $q\in\Inte(J)$, we have 
\begin{eqnarray*}
 \displaystyle \liminf_{n\to\infty}\frac{\log(\mu_q([t|_{n}]))}{-n}&\ge& -q\frac{\partial \Phi}{\partial q}(q,\tau(q))-\tau(q) \frac{\partial \Phi}{\partial t}(q,\tau(q))\\
&=&\big (q\tau'(q)-\tau(q)\big )\cdot \frac{\partial \Phi}{\partial t}(q,\tau(q))>0,\ \mu_q\text{-}a.e.
\end{eqnarray*}
($\frac{\partial \Phi}{\partial t}(q,\tau(q))>0$ due to our choice $L_i\in (0,1)$). Consequently, $\mu_q$ is atomless, and defining $\nu_q=\mu_q\circ \pi^{-1}\circ F_L^{-1}$, we have $\nu_q(I^L_w)=\mu_q([w])$ for all $w\in\mathscr{A}^*$. Thus, $$\liminf_{n \to\infty}\frac{\log \nu_q(I^L_n(t))}{-n}\ge q\frac{\partial \Phi}{\partial q}(q,\tau(q))+\tau(q) \frac{\partial \Phi}{\partial t}(q,\tau(q)), \ \nu_q \text{-almost everywhere},$$
where $I^L_n(t)$ is the unique interval $I^L_w$ of generation $n$ containing $t$.

Now, Proposition~\ref{keypro} (2) and (3) as well as \eqref{self-sim2'} also yield
$$\lim_{n\to\infty}\frac{\log|I_n^L(t)|}{-n}=\frac{\partial \Phi}{\partial t}(q,\tau(q))>0,\ \nu_q\text{-almost everywhere},$$
hence 
$$
\liminf_{n\to\infty}\frac{\log \nu_q(I^L_n(t))}{\log|I_n^L(t)|}\ge  q\tau'(q)-\tau(q),\ \nu_q\text{-almost everywhere}.
$$
Consequently, we can apply the mass distribution principle (\cite{Pe}, Lemma 4.3.2) and we obtain $\dim_H (\nu_q)\ge q\tau'(q)-\tau'(q)=\tau^*(\tau'(q))$. 
\smallskip

We can also deduce from Proposition~\ref{keypro} that for $\mu_q$-almost every $t$, for all $m\ge 1$,
\[
\begin{cases}
\min\{\underline \alpha_W^{(1),\gamma}(t):\gamma=-1,0,+1\}=\overline  \alpha_W^{(m)}(t)= -\frac{\partial \Phi}{\partial q}(q,\tau(q))/\log(b),\\
\min \{\underline \alpha_L^{(1),\gamma}(t):\gamma=-1,0,+1\}=\overline\alpha_L^{(1)}(t)=\frac{\partial \Phi}{\partial t}(q,\tau(q))/\log(b).
\end{cases}
\] 

These properties imply that at $\nu_q$-amost every $\widetilde t$, for $r\in (0,1)$ small enough, we can find integers $n_r$ and $n'_r$ such that \eqref{omo1} holds with $\lim_{r\to 0^+}n_r/n'_r=1$,  and we have for all $m\ge 1$ $h^{(m)}_F(t)=\tau'(q)$. Due to Proposition~\ref{hinfty}, we also have $h^{(\infty)}_F(t)=\tau'(q)$. Since $\dim_H (\nu_q)\ge \tau^*(\tau'(q))$ we have the desired lower bound for the dimensions of the sets $E^{(m)}_F(\tau'(q))$, $m\in\mathbb{N}\cup\{\infty\}$. The case $q=0$ yields $\dim_H \supp(F')\ge \dim_H E^{(1)}_F(\tau'(0))\ge -\tau(0)$.
\smallskip

Combining this with Proposition~\ref{Lqspectrum} we obtain that, with probability 1,  for all $m\in\mathbb{N}_+$, we have $(\tau^{(m)})_F^*=\tau^*$ over $\Inte(I)$. Since we also have $\tau^{(m)}_F\ge \tau^{(m)}_{F,b}\ge \tau$ over $J$, this yields $\tau^{(m)}_F=\tau^{(m)}_{F,b}=\tau$ over $J$.

\subsubsection{\bf The case $h\in \partial I$}\label{lowerbound2}
\medskip

Recall that $\underline q=\inf J$ and $\overline q =\sup J$. Let 
$$\underline h=\lim\limits_{q\to \overline q} \tau'(q),\bar h =\lim\limits_{q\to \underline q} \tau'(q), \bar d=\lim\limits_{q\to \overline q} \tau'(q)q-\tau(q), \underline d=\lim\limits_{q\to \underline q} \tau'(q)q-\tau(q).$$
Then $\partial I=\{\underline h,\bar h\}$, $\bar d=\tau^*(\underline h)$ and $\underline d=\tau^*(\bar h)$. Moreover, with probability 1, $\bar d=\tau^*(\underline h)=(\tau^{(m)}_F)^*(\underline h)$ and $\underline d=\tau^*(\bar h)=(\tau^{(m)}_F)^*(\bar h)$ for any $m\ge 1$. 

The difficulty in the study of $E_F^{(m)}(h)$ when $h\in \{\underline h,\bar h\}$ comes from the fact that there is no simple choice of a measure carried by $E_F^{(m)}(h)$ and whose Hausdorff dimension is larger than or equal to $(\tau^{(m)}_F)^*(h)$. Even, it is not obvious to construct a point belonging to $E_F^{(m)}(h)$. Neverthless such a measure can be constructed. 

\medskip

\noindent
{\bf A measure $\mu_q$ partly carried by $E_F^{(m)}(h)$, for $(q,h)\in\{(\underline q,\underline h),(\overline q,\overline h)\}$.}

\medskip

\noindent
{\bf 1. The case $q\not\in\{-\infty,\infty\}$.}

\medskip

Let $W_q(w)=\big (\mathbf{1}_{\{W_i(w)\neq 0\}}|W_i(w)|^q(w)L_i(w)^{-\tau(q)}\big )_{0\le i\le b-1}$. We have 
$$\tau^*(\tau'(q))=\varphi_{W_q}'(1)=-\mathbb{E}(\sum_{i=0}^{b-1}W_{q,i}\log_b W_{q,i})=0.$$
Moreover, $\varphi_{W_q}(p)>-\infty$ in a neighborhood of $1^+$. Consequently, it follows from Theorem 2.5 of \cite{Liu} that, with probability 1, for all $w\in\mathscr{A}^*$, the martingale 
$$
Y_{q,n}(w)=-\sum_{u\in \mathscr{A}^n}Q_q(w\cdot u)\log Q_q(w\cdot u), \text{ with } Q_q(w)=\prod_{k=1}^nW_{q,w_k}(w|_{k-1}),
$$
converges to a limit  $Y_q(w)$ ($Y_q(\emptyset)=Y_q$) as $n\to\infty$. Moreover, by construction, the branching property $
Y_q(w)=\sum_{i=0}^{b-1}W_{q,i}(w) Y_q(wi)
$ holds, the random variables $Y_q(w)$, $w\in\mathscr{A}^*$, are identically distributed,  and for $\gamma>0$ we have $\mathbb{E}(Y_q^\gamma)<\infty$ if and only if $\gamma<1$. 

We deduce from the branching property and our assumption on the probability that the components of $W$ vanish that the event $\{Y_q=0\}$ is measurable with respect to the tail $\sigma$-algebra $\bigcap_{N\ge 1}\sigma(W(w):w\in \bigcup_{n\ge N} \mathscr{A}^n)$. Consequently, $\mathbb{P}(Y_q>0)=1$ since $\mathbb{E}(Y_q)>0$, and with probability 1, the branching property makes it possible to define on $\mathscr{A}^{\mathbb{N}_+}$ a measure $\mu_q$ by the formula
\begin{equation}\label{muh}
\mu_q([w])=Q_q(w) Y_q(w).
\end{equation}
\begin{proposition}\label{endpoints}
Let $h\in\{\bar h, \underline h\}$ and $q\not\in\{-\infty,\infty\}$ such that $h=\tau'(q)$. With probability 1, there exists a Borel set $E_{h}\subset \mathscr{A}^{\mathbb{N}_+}$ of positive $\mu_{q}$-measure such that for all $t\in E_{h}$ the same conclusions as in Proposition~\ref{keypro} (1) (2) (3) hold. 
\end{proposition}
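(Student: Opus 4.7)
At the boundary parameter $q\in\{\underline q,\overline q\}\setminus\{\pm\infty\}$ the cascade $W_q$ is critical ($\varphi'_{W_q}(1)=0$): the usual Mandelbrot martingale $\sum_{w\in\mathscr{A}^n}Q_q(w)$ converges almost surely to $0$, and $\mu_q$ is built from the derivative martingale $Y_q$ of \cite{Liu}, with $\mathbb{P}(Y_q>0)=1$ but $\mathbb{E}(Y_q)=\infty$. I would extend the Peyri\`ere-measure strategy used for Proposition~\ref{keypro} to this critical setting. The main difficulty arises from the failure of the classical Peyri\`ere normalization when $Y_q$ is not integrable.

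First, I introduce the probability measure $\mathbb{Q}$ on $\Omega\times\mathscr{A}^{\mathbb{N}_+}$ defined on cylinders by
\[
\mathbb{Q}(A\times[w])=\mathbb{E}\bigl(\mathbf{1}_A\,Q_q(w)\bigr),\qquad A\in\mathcal{F}_{|w|},\ w\in\mathscr{A}^*.
\]
The critical identity $\Phi(q,\tau(q))=1$ makes these marginals consistent, so $\mathbb{Q}$ extends by Kolmogorov. Using the independence of the weights $(W(u),L(u))$ across distinct vertices, one verifies that under $\mathbb{Q}$ the sequence $\bigl(t_k,W(t|_{k-1}),L(t|_{k-1})\bigr)_{k\ge 1}$ is i.i.d.\ with size-biased common law $\sum_i W_{q,i}\,\delta_i\otimes d\mathbb{P}_{(W,L)}$. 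Combined with the identities $\mathbb{E}_{\mathbb{Q}}(\log|W_{t_1}|)=\partial_q\Phi(q,\tau(q))$ and $\mathbb{E}_{\mathbb{Q}}(\log L_{t_1})=-\partial_t\Phi(q,\tau(q))$, Kolmogorov's strong law of large numbers yields items (1) and (2) of Proposition~\ref{keypro} $\mathbb{Q}$-almost surely. Item (3), for both $t|_n$ and the neighbors $(t|_n)^{\pm 1}$, follows from a Borel--Cantelli argument using the moment bound $\mathbb{E}((Z_U^{(m)})^p)<\infty$ from Theorem~A together with the independence of the neighbor weights from the spine; the $+\infty$ alternative in (1) at neighbors corresponds to $(t|_n)^\gamma$ passing through a vanishing factor at finite depth.

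The main obstacle is the transfer from a $\mathbb{Q}$-a.s.\ statement to a positive-$\mu_q$-measure statement: since $\mathbb{E}(Y_q)=\infty$, the measures $\mathbb{Q}$ and $\mathbb{P}\otimes\mathbb{E}(d\mu_q)$ are mutually singular and the classical identity $d\mathbb{Q}=d\mathbb{P}\,d\mu_q/\mathbb{E}(Y_q)$ is unavailable. I would resolve this by a two-step branching argument. Let $E\subset\Omega\times\mathscr{A}^{\mathbb{N}_+}$ denote the event where (1)(2)(3) hold, with $E_\omega$ its $\omega$-fiber (which depends only on the cascade tail). First, by conditioning on $W(\emptyset)$ and exploiting the i.i.d.\ subtree copies $\mu_q^{[i]}$, the probability $p:=\mathbb{P}(\mu_q(E_\omega)=0)$ satisfies the extinction equation $p=\mathbb{E}(p^N)$ with $N=\#\{i:W_i\ne 0\}\ge 2$ almost surely (our standing assumption); the only fixed points in $[0,1]$ are $0$ and $1$, so $p\in\{0,1\}$. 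Second, to exclude $p=1$ I would compare $\mu_q$ with the non-degenerate measures $\mu_{q_\epsilon}$ obtained by tilting $q$ slightly toward $\Inte(J)$, to which Proposition~\ref{keypro} applies and furnishes positive-$\mu_{q_\epsilon}$-mass subsets with the required SLLN; a coupling combined with lower semicontinuity as $\epsilon\to 0$ transfers the positive-measure property to $\mu_q$. This forces $p=0$, and $E_h:=E_\omega$ is the required set.
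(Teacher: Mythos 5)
Your proposal diverges from the paper's proof, and the divergence exposes a genuine gap in the step that excludes $p=1$.

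The reduction is sound up to a point: you correctly identify that $\mathbb{E}(Y_q)=\infty$ forbids the usual Peyri\`ere normalization, the spine measure $\mathbb{Q}$ is well defined and gives an i.i.d.\ spine, and the extinction equation $p=\mathbb{E}(p^N)$ with $N\ge 2$ a.s.\ does force $p\in\{0,1\}$ because the conclusions of Proposition~\ref{keypro} are tail events in $t$. The problem is the last step. You propose to rule out $p=1$ by comparing $\mu_q$ with $\mu_{q_\epsilon}$ for $q_\epsilon\in\Inte(J)$ and appealing to coupling and lower semicontinuity. But Proposition~\ref{keypro} applied at $q_\epsilon$ gives $\mu_{q_\epsilon}$-full measure to the level set where $\lim_n -n^{-1}\log|Q_W(t|_n)|=-\partial_q\Phi(q_\epsilon,\tau(q_\epsilon))$, a \emph{different} constant from $-\partial_q\Phi(q,\tau(q))$ when $q_\epsilon\ne q$ (here $\tau$ is strictly concave unless $F$ is monofractal). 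These exponent level sets are pairwise disjoint, so the positive-$\mu_{q_\epsilon}$-mass subsets furnished by Proposition~\ref{keypro} are essentially disjoint from the set $E_\omega$ you want to charge. No coupling or limiting procedure in $\epsilon$ transfers positive $\mu_{q_\epsilon}$-mass on the wrong level set into positive $\mu_q$-mass on $E_\omega$. So the exclusion of $p=1$ is unsupported, and with it the whole transfer from $\mathbb{Q}$-a.s.\ to $\mu_q$-positive measure.

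The paper circumvents the whole transfer problem by a device you did not use. Set $E_h(N)=\{t:\forall n\ge N,\ \mu_q([t|_n])\le 1\}$. Since $\mu_q([t|_n])$ is nonincreasing in $n$, the $t$ avoiding every $E_h(N)$ are precisely the atoms of $\mu_q$ of mass at least $1$; a finite measure has finitely many such atoms, and $\mu_q$ cannot be a finite sum of point masses because $\supp(\mu_q)=\supp(F')$ has positive dimension. Hence some $E_h(N)$ has positive $\mu_q$-measure, and $E_h:=E_h(N(\omega))$ for $N(\omega)=\inf\{N:\mu_q(E_h(N))>0\}$. On $E_h$ the constraint $\mu_q([w])\le 1$ lets one replace $\mu_q([w])$ by $\mu_q([w])^\theta$ with $\theta\in(0,1)$, landing on $Y_q(w)^\theta$ which \emph{is} integrable (the known $\theta<1$ moment bound for Liu's derivative-martingale limit). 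One then estimates $\mathbb{E}\bigl(\sum_{w}\mathbf{1}_{\{Q(w)\ne0\}}Q_q(w)^{\theta}|Q(w)|^{\lambda\eta}Y_q(w)^{\theta}\,e^{n(\lambda\eta\xi(q)-\varepsilon\eta)}\bigr)$ via a second-order Taylor expansion of $\log\Phi(\theta q+\lambda\eta,\theta\tau(q))$ in $(1-\theta,\eta)$; the endpoint condition $q\tau'(q)-\tau(q)=0$ kills the linear term, giving a negative exponent for $\eta$ small and $\theta$ close to $1^-$, hence Borel--Cantelli along $E_h$. This is a direct series estimate, with no spine change of measure and no zero--one dichotomy. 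The two-parameter $(\theta,\eta)$ tilt and the crucial use of the endpoint identity are the ingredients missing from your plan.
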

Then, the same arguments as in Section~\ref{lowerbound1} yield $\nu_q(E_F^{(m)}(h))>0$, hence $E_F^{(m)}(h)$ is not empty and we get the desired lower bound $\dim_H E_F^{(m)}(h)\ge 0$ since $\tau^*(h)=0$. 

\medskip

\noindent
{\bf 2. The case $q\in\{-\infty,\infty\}$.}

\medskip

Let $(q_k)_{k\geq 0}$ be an increasing (resp. decreasing) sequence converging to $q$ if $q=\infty$ (resp. $q=-\infty$).  For every $k\ge 0$ and $w\in \mathscr{A}^k$ let
$$W_{q_k}(w)=\big(\textbf{1}_{\{W_i(w)\neq 0\}}\cdot|W_i(w)|^{q_k}L_i(w)^{-\tau(q_k)} \big)_{0\leq i \leq b-1}.$$
Then, for $n\ge 1$ and $w\in \mathscr{A}^*$~let 
$$Y_{q,n}(w)=\sum_{u\in \mathscr{A}^n} Q_q(w\cdot u), \text{ with } Q_q(w)=\prod_{k=}^{|w|}W_{q_{k-1},w_k}(w|_{k-1}),$$
and simply denote $Y_{q,n}(\emptyset)$ by $Y_{q,n}$. The sequence $(Y_{q,n}(w))_{n\ge 1}$ is a non-negative martingale of expectation 1 which converges almost surely to a limit that we denote by $Y_q(w)$ ($Y_q$ if $w=\emptyset$). Since the set $\mathscr{A}^*$ is countable, all these random variable are defined simultaneously. Moreover, the branching property $
Y_q(w)=\sum_{k=0}^{b-1}Q_{q,i}(w)Y_q(w i)$ also holds. Notice that by construction, given $k\ge 1$, the random variables $Y_q(w)$, $w\in\mathscr{A}^k$, are independent and identically distributed. 

\begin{proposition}\label{qk}
The sequence $(q_k)_{k\ge 0}$ can be chosen so that there exists $a>0$ such that for all $w\in \mathscr{A}^*$ the sequence $(Y_{q,n}(w))_{n\ge 1}$ converges in $L^2$ norm to a limit $Y_q$ and $\|Y_q(w)\|_2=O\big (b^{a|w|/\log (|w|)}\big )$. 
\end{proposition}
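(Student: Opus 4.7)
The strategy is to compute $\|Y_{q,n}(w)\|_2^2$ explicitly by exploiting the branching structure, then choose $(q_k)_{k\ge 0}$ carefully so that this second moment stays controlled as $n\to\infty$. Starting from the branching identity $Y_{q,n+1}(w)=\sum_{i=0}^{b-1} W_{q_{|w|},i}(w)\,Y_{q,n}(w\cdot i)$, the independence of $W_{q_{|w|}}(w)$ from the $b$ i.i.d.\ sub-cascades $Y_{q,n}(w\cdot i)$, and the martingale identity $\mathbb{E}(Y_{q,n}(w\cdot i))=1$, I obtain the one-step recursion
\[
\|Y_{q,n+1}(w)\|_2^2 \;=\; \phi(q_{|w|})\,\|Y_{q,n}(w\cdot 0)\|_2^2 \,+\, M(q_{|w|}),
\]
where $\phi(q):=\Phi(2q,2\tau(q))$ and $M(q):=\mathbb{E}\bigl((\sum_i W_{q,i})^2\bigr)-\phi(q)\ge 0$. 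Iterating yields the closed-form expression
\[
\|Y_{q,n}(w)\|_2^2 \;=\; \prod_{j=0}^{n-1}\phi(q_{|w|+j}) \;+\; \sum_{l=0}^{n-1}\Big(\prod_{j=0}^{l-1}\phi(q_{|w|+j})\Big) M(q_{|w|+l}).
\]

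The next step is to select $(q_k)$. Both $\phi$ and $M$ are continuous on the interior of the region where they are finite (by dominated convergence), and under the hypotheses $\mathbb{E}((\max_i|W_i|)^{-\varepsilon})<\infty$, $0<L_i<1$, and $\varphi_W>-\infty$ on $\mathbb{R}_+$, one obtains quantitative estimates on $\phi(q)-1$ and $M(q)$ as $q$ tends to the endpoint along suitable paths. I would then pick $(q_k)$ with $q_k\to q$ slowly enough that the infinite product $\prod_{j\ge 0}\phi(q_{|w|+j})$ converges and the cross-term series $\sum_{l\ge 0}\bigl(\prod_{j=0}^{l-1}\phi(q_{|w|+j})\bigr)M(q_{|w|+l})$ is finite. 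The closed-form expression is then bounded uniformly in $n$; its dependence on $|w|$ is tracked through the tails of $\log\phi(q_k)$ and $M(q_k)$, which under the prescribed rate of approach of $q_k$ to the endpoint yield the claimed order $b^{a|w|/\log|w|}$.

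$L^2$-convergence finally follows from orthogonality of the martingale differences: boundedness of $\|Y_{q,n}(w)\|_2$ in $n$ gives $\sum_n\|Y_{q,n+1}(w)-Y_{q,n}(w)\|_2^2<\infty$, hence the $L^2$-Cauchy property, and the bound on $\|Y_q(w)\|_2$ passes to the limit via Fatou.

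The main obstacle is the critical regime of Theorem~B(2): there $\varphi_{W_q}$ attains its maximum at $p=1$, forcing $\phi(q)>1$ throughout the interior of $J$, so $\log\phi(q_k)>0$ inevitably and the infinite product cannot be made less than $1$. The entire control must therefore come from making $\phi(q_k)-1\to 0$ at a summable rate along a suitably slow sequence $q_k\to q$. This choice must simultaneously be compatible with the subsequent endpoint analysis (the analogue of Proposition~\ref{keypro} at the boundary $\tau'(q)\in\{\underline h,\bar h\}$), which requires $q_k$ to approach $q$ fast enough so that $\mu_q$-typical points realize the target exponent. Reconciling these two opposing demands, and extracting the required quantitative rate for $\phi(q_k)-1$ from the moment hypothesis $\mathbb{E}((\max_i|W_i|)^{-\varepsilon})<\infty$, is the technical heart of the argument.
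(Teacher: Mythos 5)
The recursion for $\|Y_{q,n}(w)\|_2^2$ and the iterated closed form are correct, and the strategy of bounding the second moment through this closed form and then invoking $L^2$-orthogonality of martingale increments is a valid alternative to the paper's route. (The paper instead applies Proposition~\ref{VBE} with $p=2$ directly to the increments, bounding $\|Y_{q,n}(w)-Y_{q,n-1}(w)\|_2$ by a constant times $\prod_{k}\Phi(2q_{k-1},2\tau(q_{k-1}))^{1/2}$; your extra term $M(q)$ is there absorbed into the constant by a one-line Cauchy--Schwarz estimate, so the two computations ultimately control the same quantity.)

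However, your final paragraph rests on a sign error that inverts the actual situation. For every $q_k\in\Inte(J)$ --- and the whole point of replacing the boundary parameter by a sequence $(q_k)$ is to stay strictly in the interior --- one has $\varphi_{W_{q_k}}'(1)=\tau^*(\tau'(q_k))>0$, so $\varphi_{W_{q_k}}$ is strictly increasing at $p=1$, not maximized there; the degeneracy $\varphi_{W_q}'(1)=0$ occurs only in the limit. Concretely, $\varepsilon_k:=\tau(2q_k)-2\tau(q_k)>0$ whenever $q_k$ and $2q_k$ both lie in $J$ (here $\overline q=\infty$, so this always holds), and since $\Phi$ is increasing in its second variable because $L_i<1$ a.s., it follows that $\phi(q_k)=\Phi(2q_k,2\tau(q_k))<\Phi(2q_k,\tau(2q_k))=1$. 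There is therefore no product $\ge 1$ to defeat, and no genuine tension between ``two opposing demands''; the only issue is to make quantitative the rate at which $\phi(q_k)$ can be kept below $1$ as $q_k\to\infty$.

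That quantitative rate is the actual content of the proposition, and your write-up leaves it unresolved. The paper's construction is geometric: write $l(q)=hq-\tau^*(h)$ for the asymptote of the concave function $\tau$ at $+\infty$, set $l_\delta=l-\delta$, and let $q(\delta)$ be the abscissa where $\tau$ meets $l_\delta$; then put $\delta_k=1/\log(k_0+k)$ and $q_k=q(\delta_k)$. Concavity yields the two-sided bound $\tau^*(h)+\delta_k\le \varepsilon_k\le 2\delta_k$ (\eqref{AAA} and \eqref{eq}), and H\"older's inequality together with the affine growth of $\varphi_L$ at $+\infty$ gives $\phi(q_k)\le b^{-\bar a\varepsilon_k/2}$. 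In the critical case $\tau^*(h)=0$, the lower bound forces $S_n=\sum_{k<n}\varepsilon_k\gtrsim n/\log n$ (so the tail product and the cross-term series in your closed form both converge), while the upper bound caps $S_{|w|}\lesssim |w|/\log|w|$, which is exactly where the claimed $b^{a|w|/\log|w|}$ comes from. Producing an estimate of this type is what your sketch would still have to supply. Note also that the ``moment hypothesis $\mathbb{E}((\max_i|W_i|)^{-\varepsilon})<\infty$'' you cite belongs to Theorem~\ref{AM2}; Proposition~\ref{qk} is established under the standing assumptions of Theorem~\ref{AM} (in particular $\varphi_W>-\infty$ on $\mathbb{R}$), and its proof uses properties of $\tau$ and of $L$, not that particular moment bound on $W$.
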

Fix a sequence $(q_k)_{k\ge 0}$ as in the previous proposition. For the same reason as in the case $q\not\in \{-\infty,\infty\}$, we have $\mathbb{P}(Y_q>0)=1$ and with probability 1, the branching property makes it possible to define on $\mathscr{A}^{\mathbb{N}_+}$ a measure $\mu_q$ by the formula \eqref{muh}. 

\begin{proposition}\label{endpoints2}
Let $h\in\{\bar h, \underline h\}$ and $q\in\{-\infty,\infty\}$ such that $h=\lim_{J\ni q'\to q}\tau'(q)$. Let $\nu_q=\mu_q\circ\pi^{-1}\circ F_L^{-1}$. With probability 1, for every $m\ge 1$,  we have $h^{(m)}_F({t})=h$ $\nu_{q}$-almost everywhere and  $\dim_H (\nu_q)\ge \tau^*(h)$.
\end{proposition}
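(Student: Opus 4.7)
My plan is to mirror the argument of Section~\ref{lowerbound1}, but since $\mu_q$ is now an inhomogeneous Mandelbrot cascade (at generation $k$ the branching weights are $W_{q_{k-1}}$, with $q_k\to q=\pm\infty$), the standard strong laws used in Proposition~\ref{keypro} must be replaced by second-moment estimates fueled by the $L^2$ control of Proposition~\ref{qk}. The measure $\mu_q$ is a.s.\ well defined and nondegenerate thanks to this $L^2$ bound combined with the branching property $Y_q(w)=\sum_i W_{q_{|w|},i}(w)Y_q(w\cdot i)$ and the tail-measurability of $\{Y_q=0\}$. I would introduce the Peyri\`ere-type measure $d\widehat{\mathbb P}_q(\omega,t)=d\mu_q(t)\,d\mathbb P(\omega)$; under $\widehat{\mathbb P}_q$, the conditional distribution of $t_k$ given the past is proportional to $W_{q_{k-1},t_k}(t|_{k-1})$, which is the natural tool for accessing the $\mu_q$-typical behavior.

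The core step is to show that for $\mu_q$-a.e.\ $t$,
\[
\lim_{n\to\infty}\frac{\log|Q_W(t|_n)|}{n}=A_q,\qquad \lim_{n\to\infty}\frac{\log Q_L(t|_n)}{n}=B_q,
\]
where $A_q$ and $B_q$ are limits (finite for $B_q>0$, possibly $-\infty$ for $A_q$ when $q=+\infty$) of the one-step expectations $\mathbb E(\sum_i W_{q_{k-1},i}\log|W_i|)$ and $\mathbb E(\sum_i W_{q_{k-1},i}\log L_i)$, and such that $A_q/B_q=\lim_k\tau'(q_{k-1})=h$. I would bound the probability of deviation of $\sum_{k=1}^n\log|W_{t_k}(t|_{k-1})|$ from $nA_q$ by Chebyshev, using the $L^2$ norm bound $\|Y_q(w)\|_2=O(b^{a|w|/\log|w|})$ of Proposition~\ref{qk} to transfer second-moment calculations from $\mathbb P$ to $\widehat{\mathbb P}_q$; a Borel--Cantelli argument along subsequences $n_j$ then gives the claimed convergence, and the analogous arguments for the neighbors $(t|_n)^{\pm 1}$ yield the same exponents. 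The multiplicative auxiliary factors $Z_W^{(m)}(w)$, $Z_L^{(1)}(w)$ appearing in \eqref{self-sim2}--\eqref{self-sim2'} are $o(b^{\varepsilon n})$ along $\mu_q$-typical paths, by a similar second-moment argument based on their uniform integrability from Theorem~A.

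Once these asymptotics are in hand, the neighbor-inclusion hypothesis \eqref{inclusion} of Proposition~\ref{exposant} with $n_r/n'_r\to 1$ holds for $\nu_q$-a.e.\ $\widetilde t=F_L(\pi(t))$, and part (2) of that proposition gives $h^{(m)}_F(\widetilde t)=A_q/B_q=h$ for all $m\ge 1$ simultaneously. For the dimension lower bound, write $\log\mu_q([t|_n])=\log Q_q(t|_n)+\log Y_q(t|_n)$, control $\log Y_q(t|_n)=o(n)$ by Chebyshev and Borel--Cantelli from Proposition~\ref{qk}, and use the limits above to obtain
\[
\liminf_{n\to\infty}\frac{\log\mu_q([t|_n])}{\log|I^L_{t|_n}|}\ge \frac{qA_q-\tau(q_{k-1})B_q}{B_q}\longrightarrow h q-\tau(q)=\tau^*(h),
\]
where the limit is taken along the defining subsequence $q_k\to q$. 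The mass distribution principle applied to $\nu_q$ then delivers $\dim_H\nu_q\ge \tau^*(h)$.

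The main obstacle is precisely the inhomogeneity forced by $q=\pm\infty$: as $k\to\infty$, the variances of the $W_{q_{k-1}}$-weights and of $Y_q(w)$ blow up, so a direct strong law is unavailable. The subsequence $(q_k)$ must be chosen delicately so that the $L^2$ growth $b^{a|w|/\log|w|}$ of $\|Y_q(w)\|_2$ beats the Chebyshev sums along the subsequences used in Borel--Cantelli, and so that the one-step means $\mathbb E(\sum_i W_{q_{k-1},i}\log|W_i|)/\mathbb E(\sum_i W_{q_{k-1},i}\log L_i)=\tau'(q_{k-1})$ converge to $h$ at a rate compatible with the Ces\`aro averaging; combining these two constraints is the technical heart of the argument.
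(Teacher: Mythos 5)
Your high-level strategy---replace the strong laws used in Section~\ref{lowerbound1} by Chebyshev inequalities fueled by the $L^2$ control of Proposition~\ref{qk}, pass to the Peyri\`ere measure, and conclude via Proposition~\ref{exposant} and the mass distribution principle---matches the paper's skeleton, and the treatment of $Q_W(t|_n)$, $Q_L(t|_n)$, the $Z$-factors, and $Y_q$ along $\mu_q$-typical paths is on the right track.

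The genuine gap is the sentence ``the analogous arguments for the neighbors $(t|_n)^{\pm 1}$ yield the same exponents.'' In the finite-$q$ case, the neighbor estimate (Lemma~\ref{keylemma}, cases $\gamma=\pm 1$) hinges on splitting $w$ and $w^\gamma$ at their last common ancestor and summing a geometric series $\sum_m \bigl[\mathbb{E}(W_{q,0})\mathbb{E}(|W_{b-1}|^{\lambda\eta})/\Phi(q+\lambda\eta,\tau(q))\bigr]^m$; this decay relies on a ratio that is bounded away from $1$ uniformly, which is a consequence of time-homogeneity. In the inhomogeneous cascade $\mu_q$ with weights $W_{q_{k-1}}$ and $q_k\to\pm\infty$, that ratio is replaced by a product of $k$-dependent factors whose one-step means $\mathbb{E}(W_{q_k,i})$ can approach $1$ (or $0$) as $k\to\infty$, so the geometric-sum bound does not carry over by ``the same argument.'' The paper does not even attempt to prove the analogue of Proposition~\ref{keypro}(1) for $Q_W((t|_n)^\gamma)$ in this regime. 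Instead it establishes two structural facts that you omit: Proposition~\ref{endpoints3}(2), that the length $N_n(t)$ of the terminal run of $0$'s or $(b{-}1)$'s is $o(n)$ $\mathcal{Q}_q$-a.s.\ (proved by a martingale/Kronecker argument under the Peyri\`ere measure), and Proposition~\ref{endpoints3}(3), a lower bound on the tail product $W_{n,p}(t|_n)$ over the range $p\ge(1-\alpha)n$. These two facts let one compare $Q_W((t|_n)^\gamma)$ to $Q_W(t|_{n-N_n(t)})$ times a controlled tail, and that is how the neighbor contribution to ${\rm Osc}^{(m)}_F(B(\widetilde t,r))$ is actually bounded. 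Without this decomposition your argument does not reach the conclusion.

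A smaller but real issue: your displayed bound
\[
\liminf_{n\to\infty}\frac{\log\mu_q([t|_n])}{\log|I^L_{t|_n}|}\ \ge\ \frac{qA_q-\tau(q_{k-1})B_q}{B_q}\ \longrightarrow\ hq-\tau(q)=\tau^*(h)
\]
mixes the formal symbol $q\in\{\pm\infty\}$ with finite quantities; $qA_q$ and $\tau(q)$ are undefined. The paper avoids this by staying with the sequence: it bounds $\log\Phi\bigl(q_k(1+\eta_n),\tau(q_k)+(\tau(q_k)+d-\varepsilon)\eta_n\bigr)$ by $-\widetilde\xi(q_k)\bigl(\tau^*(\tau'(q_k))-d+\varepsilon\bigr)\eta_n+O(\eta_n^2)\le -\underline{a}\varepsilon\eta_n+O(\eta_n^2)$, never forming $hq-\tau(q)$ at $q=\pm\infty$. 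Also note that the one-step means $\mathbb{E}\bigl(\sum_i W_{q_{k-1},i}\log|W_i|\bigr)$ and $\mathbb{E}\bigl(\sum_i W_{q_{k-1},i}\log L_i\bigr)$ need not converge individually (only their ratio $\tau'(q_{k-1})$ does, and $\widetilde\xi(q_k)$ is merely bounded in $[\underline a\log b,\overline a\log b]$), so the ``limits $A_q$, $B_q$'' you posit may not exist; what survives is the Ces\`aro-type statement $\log|Q_W(t|_n)|/\log Q_L(t|_n)\to h$, which is exactly the form Proposition~\ref{endpoints3}(1) takes.
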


\begin{remark}
In the case $q\not\in\{-\infty,\infty\}$, it is possible to construct $\mu_q$ as in the case $q\in\{-\infty,\infty\}$ by using a sequence $(J\ni q_k)_{k\ge 0}$ converging to $q$. This avoids to require to Theorem 2.5 of \cite{Liu} which is a strong result. Nevertheless, we are able to use this alternative only if $\varphi_W(q)>-\infty$ for some $q<-1$. This is the case under the assumptions of Theorem~\ref{AM}, but this does not always hold under the weaker assumptions provided by Section~\ref{waekassump}. 
\end{remark}

\subsection{The $L^q$-spectra of $F$}\label{Lqspec}

We have seen at the end of Section ~\ref{lowerbound1} that, with probability 1, for all $m\in\mathbb{N}$, $\tau^{(m)}_F(q)=\tau^{(m)}_{F,b}(q)=\tau(q)$ over $J=[\underline q, \overline q]$. It remains to show that $\tau^{(m)}_F$ is differentiable at $\overline q$ (resp. $\underline q$) and linear over $[\overline q, \infty)$ (resp. $(-\infty, \underline q]$) if $\overline q<\infty$ (resp. $\underline q>-\infty$). We treat the case $\overline q<\infty$ and leave the case $\underline q>-\infty$ to the reader. 

At first we notice that the equality $\tau^{(m)}_F=\tau^{(m)}_{F,b}=\tau$ over $J$ implies that $(\tau^{(m)}_F)'(\overline q^-)=\tau^{(m)}_{F,b}(\overline q)/\overline q=\tau(\overline q)/\overline q=\underline h$. Also, by concavity of $\tau^{(m)}_F$, we have $\tau^{(m)}_F(q)\le \tau^{(m)}_F(\overline q)+(\tau^{(m)}_F)'(\overline q^-)(q-\overline q)=  \tau(\overline q)+\tau'(\overline q)(q-\overline q)=\underline hq$. To get the other inequality, and so the differentiability of $\tau^{(m)}_F$ at $\overline q$, we use a simple idea inspired by the work achieved in \cite{Mol} which focuses on $\tau^{(1)}_{F,b}$ in the case when the components of $W$ are non-negative and $L=(1/b,\dots,1/b)$. If $q\ge \overline q$ and $t\in\mathbb{R}$, we have 
$$
\sum_{w\in\mathscr{A}^n} {\rm Osc}^{(m)}_{F_W}(I_w)^q|I^L_w|^{-t}\le \big[ \sum_{w\in\mathscr{A}^n}  {\rm Osc}^{(m)}_{F_W}(I_w)^{\overline q}|I^L_w|^{-\overline q t/q}\big ]^{q/\overline q},
$$
 because $q/\overline q\ge 1$. Consequently, by definition we have $\tau^{(m)}_{F,b}(q)\ge (q/\overline q)\cdot \tau^{(m)}_{F,b}(\overline q)=q\underline h$.  This, together with Proposition~\ref{Lqspectrum}, yields $\tau^{(m)}_F(q)\ge \tau^{(m)}_{F,b}(q)\ge \underline h q$ for $q\ge \overline q$.

\smallskip

It remains to discuss the strict concavity of $\tau$ over $J$. Suppose $\tau$ is affine over a non trivial sub-interval $J'$ of $J$. The analyticity of $\tau$ implies that it is affine over $J$ (in fact over $\mathbb{R}$ under our assumptions), which is equivalent to saying that for all $q,q'\in J$ and $\lambda\in [0,1]$ we have 
\begin{equation}\label{Phi'}
\Phi\big (\lambda q+(1-\lambda)q', \lambda \tau(q)+(1-\lambda)\tau(q')\big )=0,
\end{equation}
where $\Phi$ is defined in \eqref{Phi}. Let $\lambda\in (0,1)$ and $q\neq q'\in J$. Applying the H\"older inequality to $\sum_{i=0}^{b-1}\mathbf{1}_{\{W_i\neq 0\}}|W_i|^{\lambda q}L_i^{-\lambda \tau(q)}|W_i|^{(1-\lambda)q'}L_i^{-(1-\lambda)\tau(q')}$ shows that, in order to have \eqref{Phi'} it is necessary and sufficient that there exists $C$ such that
$$\mathbf{1}_{\{W_i\neq 0\}}|W_i|^{q}L_i^{-\tau(q)}=C\mathbf{1}_{\{W_i\neq 0\}}|W_i|^{q'}L_i^{-\tau(q')}$$
almost surely. Thus, there exists $H>0$, the slope of $\tau$, such that $|W_i|=L_i^H$ for all $i$, conditionally on $W_i\neq 0$. If the components of $W$ are non-negative almost surely, by construction this implies $\mathbb{E}(\sum_{i=0}^{b-1}L_i^H)=1$, hence $H=1$ and $W=L$, the situation we have discarded. Otherwise, we have $\mathbb{E}(\sum_{i=0}^{b-1}L_i^H)>1$ hence $H\in (0,1)$.

\subsection{Proof of Theorem \ref{AM2}}\label{AM2proof}
We only have to deal with the exponent $h=\tau'(0)=\infty$. The rest of the study is similar to that achieved in the previous sections. 

For $w\in\mathscr{A}^*$ let $\widetilde W(w)=\big (\mathbf{1}_{\{W_i(w)\neq 0\}}L_i(w)^{-\tau(0)}\big )_{0\le i\le b-1}$. By construction the components of $\widetilde W$ are non negative, we have $\varphi_{\widetilde W}(1)=0$, and $\varphi_{\widetilde W}'(1)>0$. Consequently, the Mandelbrot measure on $\mathscr{A}^{\mathbb{N}_+}$ defined as $\mu_0=F_{\widetilde W}'$ (with the notations of Theorems A and B) is positive with probability 1. Moreover, it follows from  the study achieved in \cite{B2} that $\dim_H \nu_0=-\tau(0)$, where $\nu_0=\mu_0\circ \pi^{-1}\circ F_L^{-1}$.

Now, for $a\in (0,1)$, we define $W^{(a)}=(|W_i|\wedge a)_{0\le i\le b-1}$.  We have $h=\lim_{a\to0} h(a)$, where $h(a)=-\mathbb{E}\big (\sum_{i=0}^{b-1}\mathbf{1}_{\{W_i\neq 0\}}L_i\log (W^{(a)}_i)\big )$. By using the same techniques as in Section \ref{proofsIR} we can prove that, with probability 1, for $\mu_0$-almost every $t$, we have $$
\displaystyle
\lim_{n\to\infty} \frac{\log|Q_{W^{(a)}}((t|_{n})^\gamma)|}{-n}=h(a), \ \gamma\in\{-1,0,1\}.
$$
Also, due to our assumptions and Proposition~\ref{xmoments}, we have 
$
\displaystyle \lim_{n\rightarrow\infty}\frac{\log Z_W^{(1)}((t|_{n})^\gamma)}{n}=0,
$
for all $\gamma\in\{-1,0,1\}$.
Consequently, for $\mu_0$-almost every $t$, $\min(\underline\alpha_W^{(1),\gamma}(t):\gamma\in\{-1,0,1\})\ge h(a)$. Since this holds for every $a\in(0,1)$, letting $a$ tend to 0 yields $\min(\underline\alpha_W^{(1),\gamma}(t):\gamma\in\{-1,0,1\})=\infty$ for $\mu_0$-almost every $t$. Since there exists $\overline a>0$ such that $\overline\alpha_L^{(1),0}(t)\le \overline a$ for all $t$ (see Lemma~\ref{controlI}) we conclude thanks to Proposition~\ref{exposant} that for $\nu_0$-almost every $t$ we have $h_F^{(1)}(t)=\infty$. 

\subsection{Proof of Corollary \ref{CW}}\label{proofaddf}

Fix $1\le m\in \mathbb{N}$. Recall that $q_m$ is the unique real number such that $\tau(q_m)=q_mm-1$.

Let $C>0$ such that $\mathrm{Osc}_{f}^{(m)}(B)\le C|B|^m$ for all subintervals $B$ of $[0,F_L(1)]$. 

For $r>0$ let $\mathcal{B}_r$ be a family of disjoint closed intervals $B$ of $[0,F_L(1)]$ of  radius $r$ with centers in $\supp(F^{(m)})$. For any $q\in\mathbb{R}_+$ we have
\begin{eqnarray*}
\sum_{B\in\mathcal{B}_r} \mathrm{Osc}_{F+f}^{(m)}(B)^q\cdot r^{-t} &\le& 2^q\sum_{B\in\mathcal{B}_r} \big (\mathrm{Osc}_{F}^{(m)}(B)^q+\mathrm{Osc}_{f}^{(m)}(B)^q\big )r^{-t} \\
&\le& (2C)^q\cdot \Big(\sum_{B\in\mathcal{B}_r} \mathrm{Osc}_{F}^{(m)}(B)^qr^{-t} +\sum_{B\in\mathcal{B}_r}r^{qm-t}\Big).
\end{eqnarray*}
By the definition of $\tau_G^{(m)}(q)$ this yields $\tau_{G}^{(m)}(q)\ge \min (\tau_F^{(m)}(q),qm-1)$ so $(\tau_{G}^{(m)})^*(h)\le \tau^*(h)$ for $h\in [0, \tau'(q_m)]$ (we have used the equality $\tau_F^{(m)}=\tau$) and $(\tau_{G}^{(m)})^*(h)=1$ for $h>m$. 

On the other hand, since we assumed that $f^{(m)}$ does not vanish, we deduce from Theorem~\ref{AM2} that for any $t\in [0,F_L(1)]$ we have $h^{(m)}_G(t)=h^{(m)}_F(t)$ if $h^{(m)}_F(t)< m$ and $h^{(m)}_G(t)=m$ if $h^{(m)}_F(t)>m$. Thus 
\begin{equation*}
\big(\tau_{G}^{(m)}\big)^*(h)\ge \dim_H E_G^{(m)}(h)=\left\{
\begin{array}{ll}
\tau^*(h), & \text{if } h\in[0,m);\\
1, & \text{if } h=m.
\end{array}
\right.
\end{equation*}
This implies that $(\tau_{G}^{(m)})^*$ is equal to $\tau^*$ over $[0, \tau'(q_m)]$ and equal to $h\mapsto \tau^*( \tau'(q_m))+q_m(h-\tau'(q_m))$ over $[\tau'(q_m),m]$. Taking the inverse Legendre transform implies that $\tau_{G}^{(m)}(q)= \min (\tau_F^{(m)}(q),qm-1)$ for all $q\ge 0$. 

\subsection{Weaker assumptions}\label{waekassump}

\subsubsection*{Theorem \ref{AM}} If we only assume that $\varphi_W>-\infty$ in a neighborhood $\widetilde{J}$ of $[0,1]$, then the multifractal formalisms holds for $F$ at each $h=\tau'(q)$ for all $q\in \widetilde{J}\cap J$. Also, the functions $\tau^{(m)}_F$ and $\tau$ coincide over $\widetilde{J}\cap J$. If, moreover, there exists $q_0\in\widetilde{J}$ such that $\tau^*(\tau'(q_0))=0$ then either $q_0>0$ and $\tau^{(m)}_F(q)=\tau'(q_0)q/q_0$ over $[q_0,\infty)$ or $q_0<0$ and $\tau^{(m)}_F(q)=\tau'(q_0)q/q_0$ over $(-\infty,q_0]$.

\medskip

\subsubsection*{Theorem \ref{AM2}} The same discussion as for Theorem \ref{AM} holds, except that $\widetilde{J}$ is a neighbor of $[0,1]$ in $\mathbb{R}_+$.

\section{Proofs of the intermediate results of Section \ref{mainproof}}\label{proofsIR}

\subsection{Proofs of the results of Section \ref{upperbound}}$\ $\\

\noindent
{\it Proof of Proposition~\ref{Formalism}.} This is a consequence of Proposition~\ref{Appendix}.

\medskip

\noindent
{\it Proof of Proposition~\ref{nonpol}.} The result could be obtained after achieving the multifractal analysis using the first order oscillation exponent. Nevertheless we find valuable to have a proof only based on the the functional equation satisfied by the process~$F$. 

We assumed that $\mathbb{P}(\sum_{i=0}^{b-1}\textbf{1}_{\{W_i\neq 0\}}\geq 2)=1$. Consequently, it follows from the definition of $F_W$ that the event $\{Z_W^{(1)}=0\}$ is measurable with respect to the tail $\sigma$-algebra $\bigcap_{n\ge 0}\sigma(\{W(w):w\in \cup_{p\ge n}\mathscr{A}^p\})$ which contains only sets of probability 0 or 1. Since $\mathbb{E}(F_W(1))=1$, we have $Z_W^{(1)}>0$ with positive probability, hence almost surely. So $\supp(F')\neq\emptyset$ almost surely. 

Now we prove that $F$ is nowhere locally equal to a polynomial function over the support $\supp(F')$.

At first, suppose that there exists $0\le i\le b-1$ such that $\mathbb{P}(W_i=0)>0$. Then, with probability 1, the interior of $\supp(F')$ is empty, since for every $w\in\mathscr{A}^*$ the probability that there exists $v\in\mathscr{A}^*$ such that $W_i(w\cdot v)=0$ is equal to 1. Thus $F$ is nowhere locally equal to a polynomial function over $\supp(F')$. 

Now suppose that the components of $W$ do not vanish and that there is a positive probability that there exists an interval $I^L_w$ over which $F$ is equal to a polynomial. Equivalently, $F^{[w]}=F_W^{[w]}\circ (F_L^{[w]})^{-1}$ is a polynomial function. Due to the statistical self-similarity of the construction, the probability that $F$ be itself a polynomial function is positive. Moreover, $F$ is almost surely the uniform limit of the sequence $(F_n=F_{W,n}\circ F_{L,n}^{-1})_{n\ge 1}$. The functions $F_n$ are piecewise linear, and because we assumed $W\neq L$ and the vectors $(W(w),L(w))$, $w\in\mathscr{A}^*$, are independent, with probability 1, for every $w\in\mathscr{A}^*$, there are infinitely many $n$ such that the restriction of $F_n$ to $I^L_w$ is not linear, thus non differentiable. Consequently, the event $\{F \text{ is a polynomial}\}$ is measurable with respect to the tail $\sigma$-algebra $\bigcap_{n\ge 0}\sigma(\{W(w),L(w):w\in \cup_{p\ge n}\mathscr{A}^p\})$, so it has a probability equal to 1. For $0\le i\le b-2$, let $x_i=F_L (i/b)$. By construction, we have $(W_i/L_i)(F^{[i]})'\big (F^{[i]}_L(1)\big )=F'(x_{i+1}^-)=F'(x_{i+1}^+)=(W_{i+1}/L_{i+1})(F^{[i+1]})'(0)$.  Due to the independence between $(W,L)$, $F^{[i]}$ and $F^{[i+1]}$, we see that all the terms in the previous equality must be deterministic, except if ${F^{[i]}}'\big (F^{[i]}_L(1)\big )=(F^{[i+1]})'(0)=0$ almost surely. In this later case, by statistical self-similarity we also have $F'\big (F_L(1)\big )\equiv F'(0)\equiv 0$, and by induction over $n\ge 0$ we see that $F'$ vanishes at all the endpoints of the intervals $I^L_w$, $w\in\mathscr{A}^n$. Thus $F'\equiv 0$ and $F$ is constant.  This is in contradiction with $F(0)=0$ and $\mathbb{E}\big (F(F_L(1))\big )=\mathbb{E}\big (F_W(1))=1$. Consequently, $(W,L)$ must be deterministic. Since we supposed that $W\neq  L$, the assumption $\sum_{i=0}^{b-1}W_i=1=\sum_{i=0}^{b-1}L_i$ implies that $|W_i|>L_i$ for some $0\le i\le b-1$. Let us write $|W_i|=L_i^{H}$ with $H<1$ (recall that $L_i<1$). Then, denoting by $i^{\cdot n}$ the word consisting in $n$ letters $i$, we have  ${\rm Osc}^{(1)}(F,I^L_{i^{\cdot n}})\ge |W_i|^n=|I^L_{i^{\cdot n}}|^H$ so $F$ is not $C^1$. This is a new contradiction, hence $F$ is nowhere locally equal to a polynomial function.
 
\medskip

\noindent
{\it{Proof of Proposition $\ref{Lqspectrum}$}.} We first establish the inequalities $\tau^{(m)}_F\ge \tau^{(1)}_{F,b}$ over $\mathbb{R}_+$ and $\tau^{(m)}_F\ge \tau^{(m)}_{F,b}$ over $\mathbb{R}^*_-$. By applying Theorem~2.3 in~\cite{BJMpartII} to $L$ we immediately have the following lemma:

\begin{lemma}\label{controlI}
There exist $\underline{a},\overline{a}>0$ such that, with probability 1, there exists $n_0\in\mathbb{N}$ such that for $n\ge n_0$, $ b^{-n \overline{a}}\le \inf_{w\in\mathscr{A}^n}|I^L_w|\le \sup_{w\in\mathscr{A}^n}|I^L_w|\le b^{-n \underline{a}}$. Moreover, with probability 1, for every $\varepsilon>0$, there exists $n_\varepsilon$ such that
\begin{equation}\label{ratio1}
\forall\, n\ge n_\varepsilon, \ b^{-n\varepsilon} \le \inf_{w\in\mathscr{A}^n} \inf_{0\le i\le b-1}\frac{|I^L_{wi}|}{|I^L_{w}|}\le 1.
\end{equation}
\end{lemma}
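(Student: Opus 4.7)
The plan is to reduce the lemma to moment bounds on the multiplicative weights $Q_L(w)$ and on the values $Z_L^{(1)}(w)=F_L^{[w]}(1)$, and then conclude by Markov + Borel--Cantelli performed over $\mathscr{A}^n$. First, I recall from \eqref{self-sim2'} that
\[
|I^L_w|=Q_L(w)\,Z_L^{(1)}(w).
\]
Since $\varphi_L>-\infty$ on all of $\mathbb{R}$, each $L_i$ has finite moments of every real order, positive and negative; also, by Theorem~A applied to $F_L$, $F_L(1)$ has positive moments of some order $p>1$, and its negative moments of all orders follow from the statistical self-similarity \eqref{sesi} combined with the finiteness of $\mathbb{E}(\sum_i L_i^{-q})$ for all $q>0$. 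These will be the only ingredients I need.

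For the upper bound $\sup_{w\in\mathscr{A}^n}|I^L_w|\le b^{-n\underline a}$, I would invoke Theorem~A once more: there exists $\gamma>0$ such that $F_L$ is uniformly $\gamma$-H\"older continuous on $[0,1]$. Applied to the interval $I_w$ of length $b^{-n}$, this gives $|I^L_w|\le C\,b^{-n\gamma}$ uniformly in $w$; taking $\underline a<\gamma$ absorbs the multiplicative constant for $n$ large enough.

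For the lower bound and for \eqref{ratio1}, the scheme is the same at each scale. For any $t>0$ and $q>0$, Markov's inequality applied to $Q_L(w)^{-q}$ gives
\[
\mathbb{P}\bigl(Q_L(w)\le b^{-nt}\bigr)\le b^{-ntq}\,\mathbb{E}\bigl(Q_L(w)^{-q}\bigr)=b^{-n(tq-\varphi_L(-q))}.
\]
Union-bounding over the $b^n$ words of length $n$ yields a bound $b^{n(1-(tq-\varphi_L(-q)))}$, which is summable in $n$ once $tq-\varphi_L(-q)>1$: picking $q$ large enough first, then $t$ large, Borel--Cantelli produces $\overline a>0$ such that almost surely $\min_{w\in\mathscr{A}^n}Q_L(w)\ge b^{-n\overline a}$ for all large $n$. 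Exactly the same argument applied to $F_L^{[w]}(1)$ in place of $Q_L(w)$, using its finite positive and negative moments of all orders, gives for every fixed $\varepsilon>0$
\[
b^{-n\varepsilon}\le\min_{w\in\mathscr{A}^n}Z_L^{(1)}(w)\le\max_{w\in\mathscr{A}^n}Z_L^{(1)}(w)\le b^{n\varepsilon}
\]
eventually. Combining the two estimates yields the first display of the lemma (with $\overline a$ enlarged by $\varepsilon$). For \eqref{ratio1}, I would decompose
\[
\frac{|I^L_{wi}|}{|I^L_w|}=L_i(w)\,\frac{Z_L^{(1)}(wi)}{Z_L^{(1)}(w)},
\]
bound $\min_{w\in\mathscr{A}^n,\,i}L_i(w)\ge b^{-n\varepsilon/3}$ by the same Markov + Borel--Cantelli argument using the negative moments of $L_i$, and use the $Z_L^{(1)}$ estimates at generations $n$ and $n+1$.

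The main (and really only) obstacle is the bookkeeping of the Borel--Cantelli step: every bound must beat the factor $b^n$ coming from the union bound over $\mathscr{A}^n$, so one genuinely needs negative moments of $L_i$ and of $F_L(1)$ of arbitrarily large order. The assumption $\varphi_L>-\infty$ on all of $\mathbb{R}$ supplies precisely this, and is why the conclusion is ``immediate'' from the corresponding statement in~\cite{BJMpartII}.
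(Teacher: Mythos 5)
Your strategy---factorize $|I^L_w|=Q_L(w)\,Z_L^{(1)}(w)$ via \eqref{self-sim2'} and beat the $b^n$ union bound with moments of high enough order---is sound and yields a self-contained first-principles proof, whereas the paper simply quotes Theorem~2.3 of \cite{BJMpartII}. Two points need fixing before it is airtight. First, the identity $\mathbb{E}\big(Q_L(w)^{-q}\big)=b^{n\varphi_L(-q)}$ is wrong on both counts: by independence $\mathbb{E}\big(Q_L(w)^{-q}\big)=\prod_{k=1}^n\mathbb{E}\big(L_{w_k}^{-q}\big)\le\big(\mathbb{E}\sum_{i}L_i^{-q}\big)^n=b^{-n\varphi_L(-q)}$, so it is only an inequality and the sign in the exponent is reversed. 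Since $\varphi_L(-q)<0$, the correct Borel--Cantelli condition $tq+\varphi_L(-q)>1$ is \emph{harder} than the one you wrote; it still holds for any fixed $q>0$ once $t$ is taken large, but your heuristic of ``picking $q$ large first'' rests on the sign error. Second, and more substantively, proving \eqref{ratio1} for \emph{every} $\varepsilon>0$ forces you to use $\mathbb{E}\big((Z_L^{(1)})^p\big)<\infty$ for arbitrarily large $p$ (the summability threshold $p>1/\varepsilon$ diverges as $\varepsilon\to0$), yet you only justified a single positive moment via Theorem~A. This gap is closable: since $0<L_i<1$ almost surely and $\mathbb{E}(\sum_iL_i)=1$, one has $\varphi_L(q)>0$ for every $q>1$, so Proposition~\ref{xmoments}(1) applied to $L$ yields all positive moments of $Z_L^{(1)}$; this must be stated explicitly, as Theorem~A alone supplies only one exponent. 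With these repairs your proof is complete and, being self-contained, arguably more informative than the paper's citation.
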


For $\mathbb{P}$-almost every $\omega\in\Omega$, we fix $\varepsilon>0$, $n_0$ and $n_\varepsilon$ as in Lemma~\ref{controlI}.

Let $n'_\varepsilon=\max (n_0,n_\varepsilon)$. Fix $0<r\le \min_{w\in \mathscr{A}^{\mathbb{N}_+}_{n'_\varepsilon+1}} |I^L_w|$.

Let $\mathcal{B}_r$ be a family of disjoint closed intervals $B$ of radius $r$ with centers in $\supp(F')$. If  $B \in \mathcal{B}_r$, by construction we can find three disjoint intervals $I^L_{w_k}$, $k=1,2,3$, with $|w_k|\ge n'_\varepsilon+1$ such that $B\subset I^L_{w_1}\cup I^L_{w_2}\cup I^L_{w_3}$ and $r\le |I^L_{w_k}|\le r b^{|w_k|\varepsilon}$. Also, $|I^L_{w_k}|\le b^{-|w_k| \underline{a}}$ so $ b^{|w_k|\varepsilon}\le r^{-\varepsilon/\underline{a}}$. Thus  $r\le |I^L_{w_k}|\le r ^{1-\varepsilon/\underline{a}}$. 

We have $ O^{(m)}_F( B)\le 2^{m-1}O^{(1)}_F( B)\le 2^{m-1} \sum_{k=1}^3 O^{(1)}_F(I^L_{w_k})$, so for $q\ge 0$ and $t\in\mathbb{R}$ we have 
$$
O^{(m)}_F( B)^q r^{-t} f(t,r) \leq  2^{(m-1)q}3^q \cdot \sum_{k=1}^3 O^{(1)}_F(I^L_{w_k})^q |I^L_{w_k}|^{-t},
$$
with $f(t,r)=1$ if $t<0$ and $f(t,r)=r ^{t\varepsilon/\underline{a}}$ otherwise. Moreover, each so selected interval $I^L_{w_k}$ meets at most  $1+r ^{-\varepsilon/\underline{a}}$ elements of $\mathcal{B}_r$. Consequently,
\begin{equation}\label{comptau}
\sum_{B\in\mathcal{B}_r} O^{(m)}_F( B)^q r^{-t} f(t,r)\le 2^{(m-1)q}3^q (1+r ^{-\varepsilon/\underline{a}}) \sum_{n\ge  n'_\varepsilon+1}\theta^{(1)}_{F,b,n}(q,t).
\end{equation}
Suppose that $\tau^{(1)}_{F,b}(q)>-\infty$; otherwise there is nothing to prove. Due to the existence of $\underline{a}$, by definition of $\tau^{(1)}_F(q)$, if $t<\tau^{(1)}_{F,b}(q)$ then we have   $\sum_{n\ge  n'_\varepsilon+1}\theta^{(1)}_{F,n}(q,t)<\infty$. Then, it follows from \eqref{comptau} and the definition of $\tau^{(m)}_F(q)$ that $ \tau^{(m)}_F(q)\ge t-(1+|t|)\varepsilon/\underline{a}$. Since $\varepsilon$ is arbitrary, we get $\tau^{(m)}_F(q)\ge \tau^{(1)}_{F,b}(q)$. 

\smallskip

On the other hand, for each $B\in \mathcal{B}_r$ there exists $I^L_w$ of maximal length included in $B$. We have $2r b^{-|w|\varepsilon}\le |I^L_w|\le 2r $. This yields $2r\le b^{|w|(\varepsilon-\underline{a})}$ so $(2r)^{\varepsilon/(\underline{a}-\varepsilon)}\le b^{-|w|\varepsilon}$ whenever $\underline{a}>\varepsilon$. consequently, for $\varepsilon$ small enough, we have $(2r)^{1+ \varepsilon/(\underline{a}-\varepsilon)}\le |I^L_w|\le 2r $. Thus, if $q<0$ we have 
$$
O^{(m)}_F( B)^q (2r)^{-t} f(t,r)\le O^{(m)}_F( I^L_w)^q |I^L_{w}|^{-t},
$$
where $f(t,r)=1$ if $t\ge 0$ and $f(t,r)= (2r)^{-t\varepsilon/(\underline{a}-\varepsilon)}$ otherwise.
Since the elements of $\mathcal{B}_r$ are pairwise disjoint, this implies
\begin{equation}\label{comptau'} \sum_{B\in \mathcal{B}_r} O^{(m)}_F(B)^q (2r)^{-t} f(t,r)\leq \sum_{n\ge  n'_\varepsilon+1}\theta^{(m)}_{F,n}(q,t)
\end{equation}
and the same arguments as when $q\ge 0$  yield $\tau^{(m)}_F(q)\ge \tau^{(m)}_{F,b}(q)$.
\smallskip

To see that, with probability 1, $\tau^{(m)}_{F,b}\ge \widetilde \tau^{(m)}_{F,b}$, due to the concavity of  $\tau^{(m)}_{F,b}$ and $\tau^{(m)}_{F,b}$, it is enough to show that given $q\in\mathbb{R}$, we have $\tau^{(m)}_{F,b}(q)\ge \widetilde \tau^{(m)}_{F,b}(q)$. 

Let $(q,t)\in\mathbb{R}^2$, and suppose that $q< \max\{p:\varphi_W(p)=0\}$. Due to Proposition~\ref{xmoments} we have $\psi(q,t)<\infty$. By using (\ref{self-sim2}) we get $\widetilde \theta^{(m)}_{F,n}(q,t)=\Phi(q,t)^n\Psi(q,t)$ for al $n\ge 1$. This yields $\widetilde\tau^{(m)}_{F,b}(q)=\tau(q)$. Also, if $t<\tau(q)$
then $\Phi(q,t)<1$ and $\sum_{n\ge 1} \widetilde \theta^{(m)}_{F,n}(q,t)<\infty$ so $\sum_{n\ge 1}  \theta^{(m)}_{F,n}(q,t)<\infty$ almost surely. This yields $t<\tau^{(m)}_{F,b}(q)$. Since $t$ is arbitrary we get $\tau^{(m)}_{F,b}(q)\ge \widetilde\tau^{(m)}_{F,b}(q)$. 

To finish the proof, we notice that by construction, we have $\tau(p)=0$ if and only if $\varphi_W(p)=0$. 
\subsection{Proofs of the results of Section \ref{lowerbound1}}$\ $\\

\noindent
{\it Proof of Proposition~\ref{muq}.}  
This proof could be deduced from those of Lemma 4 and Corollary 5 of \cite{B2}. For reader's convenience, we provide it.  

\medskip

\noindent
$\bullet$ {\it Proof of (1) and (2).}  For $q\in  \Inte(J)$ and $w\in\mathscr{A}^*$ let 
$$
W_q(w)=\big (\mathbf{1}_{\{W_0\neq 0\}}|W_0|^qL_0^{-\tau(q)},\dots, \mathbf{1}_{\{W_{b-1}\neq 0\}}|W_{b-1}|^qL_{b-1}^{-\tau(q)}\big )(w).
$$ 
The function $ \Phi$ can be extended to an analytic function in a complex neighborhood of $J\times \mathbb{C}$ by 
$$
 \Phi(z,t)=\mathbb{E}\Big (\sum_{i=0}^{b-1} \mathbf{1}_{\{W_i\neq 0\}}|W_i|^zL_i^{-t}\Big).
$$
For each $q\in \Inte(J)$ we have $\frac{\partial  \Phi (q,\tau(q))}{\partial t}=- \mathbb{E}\Big (\sum_{i=0}^{b-1} W_{q,i}\log (L_i)\Big )>0$ and $\Phi(q,\tau(q))=1$, so there exists a neighborhood $V_q$ of $q$ in $\mathbb{C}$ such that for each $z\in V_q$ there exists a unique $\tau(z)$ such that $ \Phi(z,\tau(z))=1$. Moreover, the mapping $z\mapsto \tau(z)$ is analytic. We define 
$$
W_z(w)=\big (\mathbf{1}_{\{W_0\neq 0\}}|W_0|^zL_0^{-\tau(z)},\dots, \mathbf{1}_{\{W_{b-1}\neq 0\}}|W_{b-1}|^zL_{b-1}^{-\tau(z)}\big )
$$
as well as the mapping 
$$
(z,p)\in V_q\times [1,\infty)\mapsto M(z,p)= \sum_{i=0}^{b-1}\mathbb{E}(|W_{z,i}|^p).
$$ 

The property $\tau^*(\tau'(q))>0$ is equivalent to $\displaystyle \frac{\partial M}{\partial p}(q,1^+)<0$ , so there exists $p_q>1$ and a open neighborhood $ V'_q$ of $q$ in $J$ such that $\sup_{q'\in V'_q}M(q',p)<1$ for all $p\in (1,p_q]$ (because $p\mapsto  M(q',p)$ is convex and $M(q',1)=1$). Now, we fix $K$ a non-trivial compact subinterval of $\Inte(J)$. It is covered by a finite number of such $V'_{q_i}$ so that if $V'_K= \bigcup_i V'_{q_i}$ we have $\sup_{q\in V'_K} M(q,p_K)<1$, where $p_K=\inf_i p_{q_i}$.  By a comparable procedure we can now find a complex neighborhood $V_K$ of $V_K$ such $\sup_{z\in V_K} M(z,p_K)<1$.

To prove the almost sure simultaneous convergence of the martingales $(Y_{q,n}(w))_{n\ge 1}$, $q\in K$, we are going to use the argument developed to get Theorem 2 in \cite{Biggins}. 

For $z\in V_K$ and $w\in\mathscr{A}^*$ let 
$$
Y_{z,n}(w)=\sum_{v\in\mathscr{A}^n} \prod_{k=1}^n W_{z,v_{k}}(w\cdot v||w|+k-1)
$$
and denote $Y_{z,n}(\emptyset)$ by $Y_{z,n}$. Applying Proposition~\ref{VBE} to $\{V(w)=W_z(w)\}_{w\in\mathscr{A}^*}$ yields  for $n\ge 1$ 
$$
 \mathbb{E}(|Y_{z,n}-Y_{z,n-1}|^{p_K})\le C_{p_K}  M(z,p_K)^{n}\le C_{p_K} (\sup_{z\in V_K}M(z,p_K))^{n}, 
$$
where $Y_{z.0}=1$. Since, with probability 1, the functions $z\in V\mapsto Y_{z,n}$, $n\ge 0$, are analytic, if we fix a closed disc $D(z_0, 2\rho)$ included in $V$, the Cauchy formula yields 
$\sup_{z\in D(z_0,\rho)} |Y_{z,n}-Y_{z,n-1}|\le \rho^{-1}\int_{\partial D(z_0,2\rho)}|Y_{u,n}-Y_{u,n-1}|\,| \text{d}u |/2\pi$, so by using Jensen's inequality an then Fubini's Theorem we get
\begin{eqnarray*}
\mathbb{E}\big (\sup_{z\in D(z_0,\rho)} |Y_{z,n}-Y_{z,n-1}|^{p_K}\big )&\le&  2^{p_K}\int_{0}^{2\pi}\mathbb{E}(|Y_{z_0+2\rho e^{it},n}-Y_{z_0+2\rho e^{it},n-1}|^{p_K})\, \frac{\text{d}t}{2\pi} \\
&\le& 2 ^{p_K}C_{p_K} (\sup_{z\in V}\Phi(z,p_K))^{n}.
\end{eqnarray*}
This implies that, with probability 1,  $z\mapsto Y_{z,n}$ converges uniformly over the compact $D(z_0,\rho)$ to a limit $Y_z$. This also implies that $\|\sup_{z\in D(z_0,\rho)} Y_z\|_{p_K}<\infty$. Since $K$ can be covered by finitely many such discs, we get both the simultaneous convergence of $(Y_{q,n})_{n\ge 1}$ to $Y_q$ for all $q\in K$ and (2). Moreover, since $\Inte(J)$ can be covered by a countable increasing union of compact subintervals, we get the simultaneous convergence for all $q\in \Inte(J)$. The same holds simultaneously for all the functions $q\in \Inte(J)\mapsto Y_{q,n}(w)$, $w\in\mathscr{A}^*$, because $\mathscr{A}^*$ is countable. 

To finish the proof of (1) we need to establish that, with probability 1, $q\in K\mapsto Y_q$ does not vanish.  Up to an affine transform, we can suppose that $K=[0,1]$. If $I$ is a closed dyadic subinterval of $[0,1]$, we denote by $E_I$ the event $\{\exists \ q\in I:\ Y_q=0\}$, and by $I_0$ and $I_1$ its two sons. At first, we note that since for each fixed $q\in K$ a component of $W_q$ vanishes if and only the same component of $W$ vanishes too, each $E_I$ is a tail event. Consequently, if $I$ is a closed dyadic subinterval of $[0,1]$ and $\mathbb{P}(E_I)=1$, then $\mathbb{P}(E_{I_j})=1$ for some $j\in \{0,1\}$. Suppose that $\mathbb{P}(E_{[0,1]})=1$. The previous remark yields a decreasing sequence $(I(n))_{n\ge 0}$ of nested closed dyadic intervals such that $\mathbb{P}(E_{I(n)})=1$. Let $q_0$ be the unique element of $\bigcap I(n)$. Since $q\mapsto Y_q$ is continuous, we have $\mathbb{P}(Y_{q_0}=0)=1$. This contradicts the fact that the martingale $(Y_{q_0,n})_{n\ge 1}$ converges to $Y_{q_0}$ in $L^{p_K}$ norm.

\smallskip

\noindent
$\bullet$ {\it Proof of (3).} This is a simple consequence of the fact that by construction we have for all $n\ge 1$ and $w\in\mathscr{A}^*$ the branching property
$$
Y_{q,n+1}(w)=\sum_{k=0}^{b-1}W_{q,i}(w)\,Y_{q,n}(w\cdot i).
$$

\medskip

\noindent
{\it Proof of Proposition~\ref{keypro}.} (1) We simply denote $Q_W$ by $Q$ and we define
$$
\xi(q)=-\frac{\partial \Phi}{\partial q} (q,\tau(q))=-\mathbb{E}\Big(\sum_{i=0}^{b-1}\mathbf{1}_{\{W_i\neq 0\}}|W_i|^qL_i^{-\tau(q)}\log (|W_i|)\Big). 
$$
If $\varepsilon>0$, $n\geq 1$ and $\gamma\in \{-1,0,1\}$ we define
$$E_{q,n,\varepsilon}^1(\gamma)=\{t\in \mathscr{A}^{\mathbb{N}_+}: Q((t|_{n})^\gamma)\neq 0 \text{ and } e^{n(\xi(q)-\varepsilon)}|Q((t|_{n})^\gamma)|\geq 1\},$$
$$E_{q,n,\varepsilon}^{-1}(\gamma)=\{t\in \mathscr{A}^{\mathbb{N}_+}: Q((t|_{n})^\gamma)\neq 0 \text{ and } e^{n(\xi(q)+\varepsilon)}|Q((t|_{n})^\gamma)|\leq 1\}.$$

Our goal is to prove that for any compact subinterval $K$ of $\Inte(J)$ and $\varepsilon>0$, 
\begin{equation}\label{finite}
\mathbb{E}\Big (\sup_{q\in K}\sum_{n\geq 1}\mu_q(E_{q,n,\varepsilon}^\lambda(\gamma)\Big )<\infty 
\end{equation}
for all $\lambda\in \{-1,1\}$ and $\gamma\in\{-1,0,1\}$. Then, with probability 1, for all $q\in K$, $\lambda\in \{-1,1\}$ and $\gamma\in\{-1,0,1\}$, the series $\sum_{n\geq 1}\mu_q(E_{q,n,\varepsilon}^\lambda(\gamma))$ is finite. Since $\Inte(J)$ can be written as a countable union of compact subintervals, this holds in fact for all $q\in \Inte(J)$. Consequently, from the Borel-Cantelli lemma applied to $\mu_q/\|\mu_q\|$ we deduce that, with probability 1, for all $q\in \Inte(J)$, for $\mu_q$-almost every $t\in\mathscr{A}^{\mathbb{N}_+}$, there exists $N\ge 1$ such that for all $n\geq N$ and $\gamma\in \{-1,0,1\}$
$$
\text{ either }Q((t|_{n})^\gamma)=0\text{ or }|Q((t|_{n})^\gamma)|\in[e^{-n(\xi(q)+\varepsilon)},e^{-n(\xi(q)-\varepsilon)}].$$
Notice that when $t\in \supp(\mu_q)$, we have $Q((t|_{n})^0)=Q(t|_{n})\neq 0$. Consequently, with probability 1, for all $q\in K$, for $\mu_q$-almost every $t$,
\begin{eqnarray*}
&&\lim_{n\to\infty} \frac{\log |Q(t|_{n})|}{-n}\in [\xi(q)-\varepsilon,\xi(q)+\varepsilon],\\
&& \lim_{n\to\infty} \frac{\log |Q((t|_{n})^\gamma)|}{-n}\in\{+\infty\}\cup [ \xi(q)-\varepsilon,\xi(q)+\varepsilon]\text{ for } \gamma\in \{-1,1\}.
\end{eqnarray*}
Since this holds for a sequence of positive $\varepsilon$ tending to 0,  we have the desired result.

\smallskip

Now we prove ($\ref{finite}$).  Fix $K$, a non-trivial compact subinterval of $\Inte(J)$. For $\eta\geq 0$, $q\in K$ and $\gamma\in \{-1,0,1\}$,  by using a Markov inequality we get 
\begin{eqnarray*}\mu_q(E_{q,n,\varepsilon}^1(\gamma))&\leq &\sum_{w\in \mathscr{A}^n} \mu_q([w])  \textbf{1}_{\{Q( w^\gamma\neq 0\}} \Big(e^{n(\xi(q)-\varepsilon)}\cdot|Q(w^\gamma)|\Big)^\eta,\\
\mu_q(E_{q,n,\varepsilon}^{-1}(\gamma))&\leq& \sum_{w\in \mathscr{A}^n} \mu_q([w]) \textbf{1}_{\{Q(w^\gamma)\neq 0\}} \Big(e^{n(\xi(q)+\varepsilon)}\cdot|Q(w^\gamma)|\Big)^{-\eta}.
\end{eqnarray*}

Since $\mu_q([w])=\textbf{1}_{\{Q_(w)\neq 0\}} |Q_q(w)|Y_q(w)$, for $\lambda\in\{-1,1\}$ and $\gamma\in \{-1,0,1\}$ we get
$$\mu_q(E_{q,n,\varepsilon}^\lambda)\leq \sum_{w\in \mathscr{A}^n} e^{n(\lambda\eta\xi(q)-\varepsilon\eta)} \textbf{1}_{\{Q(w),Q(w^\gamma)\neq 0\}} Q_q(w)|Q(w^\gamma)|^{\lambda\eta} Y_q(w).$$

Now define
\begin{equation}\label{H}
H^{\eta,\lambda}_{n}(q,\gamma)=\sum_{w\in \mathscr{A}^n} e^{n(\lambda\eta\xi(q)-\varepsilon\eta)} \textbf{1}_{\{Q(w),Q(w^\gamma)\neq 0\}} Q_q(w)|Q(w^\gamma)|^{\lambda\eta}.
\end{equation}

Write $K=[q_0,q_1]$.  It follows from the independence between $\big \{H^{\eta,\lambda}_{n}(q,\gamma)\big\}_{q\in K}$ and $\{Y_q(w)\}_{w\in \mathscr{A}^n,q\in K}$ that for $n\geq 1$
\begin{eqnarray*}
\mathbb{E}(\sup_{q\in K}\mu_q(E_{q,n,\varepsilon}^\lambda(\gamma)) &\leq&\mathbb{E}(\sup_{q\in K} Y_q)\mathbb{E}\big (\sup_{q\in K}H^{\eta,\lambda}_{n}(q,\gamma)\big )\\
&\le & \mathbb{E}(\sup_{q\in K} Y_q)\Big( \mathbb{E}\big (H^{\eta,\lambda}_{n}(q_0,\gamma)\big )  + \int_{q_0}^{q_1} \mathbb{E}\big (\big|\frac{\text{d}}{\text{d}q}H^{\eta,\lambda}_{n}(q,\gamma)\big|\big ) \text{d} q\Big).
\end{eqnarray*}
\begin{lemma}\label{keylemma}
Let $\lambda\in\{-1,1\}$ and $\gamma\in \{-1,0,1\}$. There exist constants $C$, $\delta>0$ and $\eta_*>0$ such that for any $q\in K$, $\eta\in (0,\eta_*)$, $\lambda\in\{-1,1\}$ and $n\geq 1$,
$$\max\Big \{\mathbb{E}\big (H^{\eta,\lambda}_{n}(q,\gamma)\big ),\mathbb{E}\big (\big|\frac{\text{d}}{\text{d}q} H^{\eta,\lambda}_{n}(q,\gamma)\big|\big )\Big \}\leq C n e^{-n\delta}.$$
\end{lemma}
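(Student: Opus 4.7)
The strategy is to compute $\mathbb{E}(H^{\eta,\lambda}_n(q,\gamma))$ directly via the branching independence of the cascade, treating the trivial case $\gamma=0$ first and then handling $\gamma=\pm 1$ by a decomposition along the maximal common prefix of $w$ and $w^{\gamma}$; the derivative in $q$ is then reduced to the same estimate after a small shift in the parameter.

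When $\gamma=0$ one has $w^{0}=w$, so $Q_q(w)\lvert Q(w)\rvert^{\lambda\eta}=\mathbf{1}_{\{Q(w)\neq 0\}}\lvert Q(w)\rvert^{q+\lambda\eta}Q_L(w)^{-\tau(q)}$ and the multiplicative structure gives $\mathbb{E}(H^{\eta,\lambda}_n(q,0))=e^{n(\lambda\eta\xi(q)-\varepsilon\eta)}\Phi(q+\lambda\eta,\tau(q))^n$. Using $\Phi(q,\tau(q))=1$ and $\partial_q\Phi(q,\tau(q))=-\xi(q)$, a second order Taylor expansion in $\eta$ shows that $\lambda\eta\xi(q)+\log\Phi(q+\lambda\eta,\tau(q))=O(\eta^2)$ uniformly in $q\in K$, and shrinking $\eta_{*}$ so that this remainder is less than $\varepsilon\eta/2$ gives the bound $e^{-n\varepsilon\eta/2}$. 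When $\gamma=\pm 1$ one partitions $w\in\mathscr{A}^n$ by the last level $k$ at which $w$ and $w^{+}$ agree: $w=v\cdot i\cdot(b-1)^{\cdot(n-k)}$ and $w^{+}=v\cdot(i+1)\cdot 0^{\cdot(n-k)}$ with $v\in\mathscr{A}^{k-1}$ and $0\le i\le b-2$. The random weights along the common prefix $v$, the ``sibling'' pair at level $k-1$, and the two disjoint tails below level $k$ are mutually independent, so the expectation splits and yields
\[\mathbb{E}(H^{\eta,\lambda}_n(q,+1))=e^{n(\lambda\eta\xi(q)-\varepsilon\eta)}\,\alpha(q,\eta)\sum_{k=1}^{n}\Phi(q+\lambda\eta,\tau(q))^{k-1}\beta(q,\eta)^{n-k},\]
where $\alpha(q,\eta)=\sum_{i=0}^{b-2}\mathbb{E}\bigl(\mathbf{1}_{\{W_i,W_{i+1}\neq 0\}}\lvert W_i\rvert^q L_i^{-\tau(q)}\lvert W_{i+1}\rvert^{\lambda\eta}\bigr)$ and $\beta(q,\eta)=\mathbb{E}\bigl(\mathbf{1}_{\{W_{b-1}\neq 0\}}\lvert W_{b-1}\rvert^q L_{b-1}^{-\tau(q)}\bigr)\cdot\mathbb{E}\bigl(\mathbf{1}_{\{W_0\neq 0\}}\lvert W_0\rvert^{\lambda\eta}\bigr)$. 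The hypothesis $\mathbb{P}(\sum_i\mathbf{1}_{\{W_i\neq 0\}}\ge 2)=1$ forces every individual term of $\Phi(q,\tau(q))=1$ to be strictly less than $1$, hence $\beta(q,\eta)\le\beta_0<1$ uniformly on $K\times(0,\eta_{*})$; combining this with the $\gamma=0$ cancellation above produces the claimed bound $Cne^{-n\delta}$, and $\gamma=-1$ is symmetric.

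For the derivative, $\tfrac{d}{dq}H^{\eta,\lambda}_n(q,\gamma)$ splits into $n\lambda\eta\xi'(q)H^{\eta,\lambda}_n(q,\gamma)$, which inherits the previous bound up to a factor $n$, and a sum over $w$ weighted by $S_q(w)=\sum_{j=1}^{n}\bigl[\log\lvert W_{w_j}(w|_{j-1})\rvert-\tau'(q)\log L_{w_j}(w|_{j-1})\bigr]$. The elementary inequality $\lvert\log x\rvert\le \kappa^{-1}(x^{\kappa}+x^{-\kappa})$, valid for $x>0$ and any $\kappa>0$, dominates each summand of $S_q(w)$ by a single-level perturbation $\lvert W_{w_j}\rvert^{\pm\kappa}L_{w_j}^{\mp\kappa\tau'(q)}$ inserted into $Q_q(w)$. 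Summing over $w$ and $j$ reproduces, up to a global factor $n$, quantities of the form $\mathbb{E}(H^{\eta,\lambda}_n(q\pm\kappa,\gamma))$ with shifted argument, to which the first part of the proof applies provided $\kappa>0$ is chosen small enough that $\Phi(q+\lambda\eta\pm\kappa,\tau(q))<1$ uniformly on $K\times(0,\eta_{*})$; this is ensured by analyticity of $\Phi$ and compactness of $K$.

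The main obstacle is the $\gamma=\pm 1$ case: carrying out the prefix decomposition cleanly and, above all, proving that the sibling factor $\alpha(q,\eta)$ is uniformly bounded on $K\times(0,\eta_{*})$ despite the statistical dependence between $W_i$ and $W_{i+1}$ inside the same vector $W$. Controlling that dependence requires H\"older's inequality together with the full real-line assumption $\varphi_W>-\infty$ (and the analogue for $\varphi_L$) to bound $\mathbb{E}(\lvert W_i\rvert^{pq}L_i^{-p\tau(q)})$ and $\mathbb{E}(\lvert W_{i+1}\rvert^{p'\lambda\eta})$ simultaneously. This same integrability strength is precisely what sustains the logarithmic perturbation trick used for the derivative.
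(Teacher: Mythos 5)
Your proof is correct and follows essentially the same route as the paper: a direct one-level computation for $\gamma=0$, a decomposition by the common prefix length for $\gamma=\pm 1$ (the paper examines $w^-$, you examine $w^+$; these are symmetric), and control of the resulting geometric sum. The paper packages the uniform boundedness of that geometric sum in terms of the ratio $\mathbb{E}(W_{q,0})\mathbb{E}(|W_{b-1}|^{\lambda\eta})/\Phi(q+\lambda\eta,\tau(q))$, which equals $\mathbb{E}(W_{q,0})<1$ at $\eta=0$ and stays bounded away from $1$ by continuity and compactness of $K$; you instead argue that the tail factor $\beta(q,\eta)$ is uniformly below $1$, which is equivalent once one notes that $\Phi(q+\lambda\eta,\tau(q))\to 1$ as $\eta\to 0$. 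Both use the hypothesis $\mathbb{P}(\sum_i\mathbf{1}_{\{W_i\neq0\}}\ge 2)=1$ at exactly the same point.

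On the derivative term the paper only states that ``simple computations'' give the extra factor $n$, so your account is the more explicit one. Your mechanism (bound each log by $\kappa^{-1}(x^\kappa+x^{-\kappa})$ and sum over the $n$ levels) is sound, but your statement that this ``reproduces quantities of the form $\mathbb{E}(H^{\eta,\lambda}_n(q\pm\kappa,\gamma))$'' is not literally right: perturbing at a single level $j$ changes only that level's factor to $\Phi(q\pm\kappa+\lambda\eta,\tau(q))$ and leaves the other $n-1$ factors at $\Phi(q+\lambda\eta,\tau(q))$, whereas $\mathbb{E}(H^{\eta,\lambda}_n(q\pm\kappa,\gamma))$ would shift both the exponent and $\tau$ at every level. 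This is harmless — the single perturbed factor is bounded uniformly and the remaining $n-1$ unchanged factors still produce the $e^{-n\delta_\eta}$ decay, and summing over $j$ gives the prefactor $n$ — but the phrasing should be corrected to refer to a one-level perturbation rather than a global shift of $q$. Also, for $\lambda=-1$ one should keep the indicator in $\mathbb{E}(\mathbf{1}_{\{W_0\neq 0\}}|W_0|^{-\eta})$ (as you do in your $\beta$), a detail the paper elides; the integrability needed to make this finite and continuous near $\eta=0$ is supplied by $\varphi_W>-\infty$, as you correctly note.
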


Then ($\ref{finite}$) comes from the fact that $ \mathbb{E}(\sup_{q\in K} Y_q)<\infty$ (see Proposition~\ref{muq}(2)). 

\medskip

\noindent
{\it{Proof of Lemma $\ref{keylemma}$}.} Recall that $\xi(q)=-\frac{\partial\Phi}{\partial q} (q,-\tau(q))$. Since $\Phi$ is twice continuously differentiable, we can chose $\eta_0>0$ such that for $\eta\in (0,\eta_0)$,
\begin{equation}\label{delta}
\delta_\eta=\inf_{q\in K}\varepsilon\eta-\lambda\eta \xi(q)-\log \big (\Phi(q+\lambda\eta,-\tau(q))\Big )>0.
\end{equation}

We now distinguish the cases $\gamma=0$ and $\gamma\in \{-1,1\}$. 

\smallskip

\noindent
$\bullet$ {\it The case $\gamma=0$.}  Straighforward computations using the definition of $H^{\eta,\lambda}_{n}(q,0)$ and taking into account the independence in the $b$-adic cascade construction yield a constant $C_K$ such that for all $q\in K$ and $n\ge 1$
\begin{eqnarray*}
\label{H0}\mathbb{E}\big (H^{\eta,\lambda}_{n}(q,0)\big )& =&\Phi(q+\lambda\eta,-\tau(q))^n e^{n(\lambda\eta \xi(q)-\varepsilon\eta)}\le e^{-n\delta_\eta}\\
\label{dH0}\mathbb{E}\big (\big|\frac{\text{d}}{\text{d}q} H^{\eta,\lambda}_{n}(q,0)\big|\big )&\le & C_K n \Phi(q+\lambda\eta,-\tau(q))^n e^{n(\lambda\eta \xi(q)-\varepsilon\eta )}\le C_Kne^{-n\delta_\eta}.
\end{eqnarray*}

\smallskip

\noindent
$\bullet$ {\it The case $\gamma=-1$.} For $n\ge 1$ we have 
\begin{equation}\label{voisin}
\bigcup_{w\in \mathscr{A}^n} (w^-,w) =\bigcup_{m=0}^{n-1} \bigcup_{u\in A^m}\bigcup_{i=0}^{b-2} (u\cdot i\cdot g_{n-1-m},u\cdot (i+1)\cdot d_{n-1-m})
\end{equation}
where $g_n$ (resp. $d_n$) is the word consisting of $n$ times the letter $b-1$ (resp. $0$). If $w=u \cdot (i+1) \cdot d_{n-1-m}$ and $w^-=u \cdot i\cdot g_{n-1-m}$ with $u\in \mathscr{A}^{\mathbb{N}_+}_m$ and $Q(w)Q(w^-)\neq 0$ then
\begin{equation*}\label{H-1}
\quad  Q_q(w)|Q(w^-)|^{\lambda\eta}=Q_q(u)Q(u)^{\lambda\eta}W_{q,i}(u)|W_{i+1}(u)|^{\lambda\eta}\prod_{k=m+1}^{n-1}W_{q,0}(w|_{k})|W_{b-1}(w^-|_{k})|^{\lambda\eta}.
 \end{equation*}
Again, simple computations yield $C_K>0$ such that for all $q\in K$, $n\ge 1$ and $\eta\in(0,\eta_0)$ we have  
$$
\max\big (\mathbb{E}(H^{\eta,\lambda}_{n}(q,\gamma), \mathbb{E}\big (\big|\frac{\text{d}}{\text{d}q} H^{\eta,\lambda}_{n}(q,\gamma)\big|\big )\big )\le C_K n\big (\Phi(q+\lambda\eta,\tau(q))e^{\lambda\eta \xi(q)-\varepsilon\eta}\big )^nS_n(q,\eta),
$$
where
$$
S_n(q,\eta)=\sum_{m=0}^{n-1}\Big [\frac{\mathbb{E}(W_{q,0})\mathbb{E}(|W_{b-1}|^{\lambda\eta})}{\Phi(q+\lambda\eta,\tau(q))}\Big ]^{m}.
$$

Due to (\ref{delta}), it is now enough to show that $S_n(q,\eta)$ is uniformly bounded with respect to $n$, $q\in K$ and $\eta$ if $\eta_0$ is small enough. This is due to the fact that the mapping $(q,r)\mapsto  \mathbb{E}(W_{q,0})\mathbb{E}(|W_{b-1}|^{r})/\Phi(q+r,\tau(q))$ is continuous in a neighborhood of $J\times \{0\}$ and by definition of $W_q$ and $\Phi$ it takes values less than 1 at points of the form $(q,0)$.  
 \smallskip
 
 \noindent
$\bullet$ {\it The case $\gamma=1$.} It uses the same ideas as the case $\gamma=-1$. 

\medskip

\noindent
(2) The proof is similar to the proof of (1). The only difference is that the components of $L$ are positive so the limit of $\frac{\log Q_L((t|_{n})^\gamma)}{-n}$ cannot be infinite. 
\medskip

\noindent
(3) We denote $Z_U^{(m)}$ by $Z$. Fix $K$ a non-trivial compact subinterval of $\Inte(J)$, $\lambda\in\{1,-1\}$ and  $\gamma\in\{-1,0,1\}$. For $a>1$ and $n\geq 0$ let
$$E_{n,a}^{\lambda}(\gamma)=\{t\in\mathscr{A}^{\mathbb{N}_+}: (Z(t|_{n})^\gamma))^{\lambda}>a^n\}.
$$

It is enough that we show that 
\begin{equation}\label{BC}
\mathbb{E}(\sup_{q\in K}\sum_{n\geq 0} \mu_q(E_{n,a}^{\lambda})(\gamma)) <\infty.
\end{equation}
Indeed, this implies that, with probability 1, for all $q\in K$, for $\mu_q$-almost every $t$, if $n$ is large enough then

$$-\log a \leq \liminf_{n\rightarrow\infty}\frac{\log Z(({t|_{n}})^\gamma)}{n}\leq \limsup_{n\rightarrow\infty}\frac{\log Z((t|_{n})^{\gamma})}{n}\leq \log a.$$
Since this holds for a sequence of numbers $a$ tending to 1, we have the conclusion. 

We have
\begin{eqnarray*}
\sup_{q\in K} \mu_q(E_{n,a}^{\lambda}(\gamma)) &=& \sup_{q\in K} \sum_{w\in \mathscr{A}^n} \textbf{1}_{\{Z(w^\gamma)^{\lambda}>a^n\}}\cdot \mu_q([w])\\
&=&\sup_{q\in K} \sum_{w\in \mathscr{A}^n} Q_{q,n}(w) \cdot   \textbf{1}_{\{Z(w^\gamma)^{\lambda}>a^n\}}\cdot Y_{q}(w).
\end{eqnarray*}
By using the independence between $\sigma(Q(w): w\in \mathscr{A}^n)$ and $\sigma(Z(w^\gamma),Y_q(w): w\in \mathscr{A}^n, q\in K)$, as well as the equidistribution of the random variables $ \textbf{1}_{\{Z(w^\gamma)^{\lambda}>a^n\}}\cdot Y_{q}(w)$, we get 

$$
\mathbb{E}\big (\sup_{q\in K} \mu_q(E_{n,a}^{\lambda}(\gamma))\big )\le \mathbb{E}\big (\textbf{1}_{\{Z(w_0^\gamma)^{\lambda}>a^n\}}\cdot \sup_{q\in K} Y_{q}(w_0)\big ) \mathbb{E}\big (\sup_{q\in K} H^{0,0}_n(q,\gamma)\big ),$$
where $H^{0,0}_n(q,\gamma)$ is defined as in (\ref{H}) and $w_0$ is any element of $\mathscr{A}^n$ such that $w_0^{\gamma}$ is defined. We learn from our computations in proving (1) that there exists a positive number $C_K$ such that $\mathbb{E}\big (\sup_{q\in K} H^{0,0}_n(q,\gamma)\big )\le C(1+|K|) n$. Moreover, the H\"older inequality yields 
\begin{eqnarray*}
\mathbb{E}\big (\textbf{1}_{\{Z(w_0^\gamma)^{\lambda}>a^n\}}\cdot \sup_{q\in K} Y_{q}(w_0)\big ) &\le& \| \sup_{q\in K} Y_{q}\|_{p_K}\mathbb{P}(Z^{\lambda\varepsilon}>a^{\varepsilon n})^{1-1/p_K}\\
&\le& \| \sup_{q\in K} Y_{q}\|_{p_K} [\mathbb{E}(Z^{\lambda\varepsilon })]^{1-1/p_K} a^{-n\varepsilon  (1-1/p_K)},
\end{eqnarray*}
where $\varepsilon>0$ is chosen such that  $\mathbb{E}(Z^{\lambda\varepsilon})<\infty$ (this is possible thanks to Proposition~\ref{xmoments}). Finally, $\mathbb{E}\big (\sup_{q\in K} \mu_q(E_{n,a}^{\lambda}(\gamma))\big )=O(n a^{-n\varepsilon (1-1/p_K)})$ (with $a>1$), hence (\ref{BC}) holds. 

\medskip

\noindent
(4) Fix $K$ a non-trivial compact subinterval of $\Inte(J)$. For $a>1$, $n\geq 0$ and $q\in K$ let
$$
F_{n,a}^{\lambda}(q)=\{t\in\mathscr{A}^{\mathbb{N}_+}: Y_q(t|_{n})>a^n\}.
$$
For $\eta>0$, we have 
\begin{eqnarray*}
\sup_{q\in K} \mu_q(F_{n,a}(q)) &=& \sup_{q\in K} \sum_{w\in \mathscr{A}^n} \textbf{1}_{\{Y_q(w)>a^n\}}\cdot \mu_q([w])\\
&\le &\sup_{q\in K} Q_{q,n}(w)|^q \cdot a^{-n\eta} Y_q(w)^{1+\eta}. 
\end{eqnarray*}
Consequently, taking $\eta=p_K-1$ and using the same kind of estimations as in the proof of (3) we obtain

$$
\mathbb{E}\big (\sup_{q\in K} \mu_q(F_{n,a}(q))\big )\le  a^{-n(p_K-1)}\mathbb{E}\big (\sup_{q\in K} H^{0,0}_n(q,\gamma)\big ) \mathbb{E}(\sup_{q\in K}Y_q^{p_K})=O(n a^{-n(p_K-1)}),
$$
hence the result.

\subsection{Proofs of results of Section \ref{lowerbound2}}
We only deal with the case $h=\underline h$, the case  $h=\bar h$ being similar. Then $q=\overline q>0$. 

\subsubsection{The case $q<\infty$}$\ $\\

\noindent
{\it Proof of Proposition~\ref{endpoints}.} At first, we specify a subset $E_{h}$ of $E_F(h)$ of positive $\mu_{q}$-measure. For $N\ge 1$, let $E_h(N)=\{t\in \mathscr{A}^{\mathbb{N}_+}: \forall n\geq N, \mu_{q}([t|_{n}])\leq 1\}$. With probability 1, there exists $N\ge 1$ such that $\mu_q(E_h(N))>0$, otherwise, $\mu_q$ is concentrated on a finite number of singletons with a positive probability, which is impossible since by construction $\supp(\mu_q)=\supp(F')$ and $\dim\, \supp(F')=-\varphi(0)>0$ almost surely. Thus, on a measurable set of probability 1, we can define the measurable function $N(\omega)=\inf\{N: \mu_q(E_h(N))>0\}$. Then we set $E_h(\omega)=E_h(N(\omega))$. 

\smallskip

\noindent
(1)  We denote $Q_W$ by $Q$ and $-\partial \Phi/\partial q(q,\tau(q))$ by $\xi(q)$. For $\varepsilon>0$, $\gamma\in\{-1,0,1\}$ and $n\geq 1$ we define
\begin{eqnarray*}
E_{n,\varepsilon}^1(\gamma)&=&\{t\in E_{ h}: Q(t|_{n})\neq 0 \text{ and } e^{n(\xi(q)-\varepsilon)}|Q((t|_{n})^\gamma)|\geq 1\},\\
E_{n,\varepsilon}^{-1}(\gamma)&=&\{t\in E_{ h}: Q(t|_{n})\neq 0 \text{ and } e^{n(\xi(q)+\varepsilon)}|Q((t|_{n})^\gamma)|)\leq 1\}\}.
\end{eqnarray*}

The result will follow if we show that for any $\varepsilon>0$ and $\lambda\in \{-1,0,1\}$, with probability 1,
\begin{eqnarray}
\label{S1}\sum_{n\ge 1}\mu_q(E_{n,\varepsilon}^\lambda(\gamma))<\infty.
\end{eqnarray}

We deal with the case $\gamma=0$.  Let $ \theta$ and $ \eta$ be two numbers in $(0,1]$, that will be specified later.  
By using a Markov inequality and the definition of $\mu_h$ we can get
\begin{eqnarray*}
\mu_q(E_{n,\varepsilon}^\lambda(0))&\le & \sum_{\substack{w\in\mathscr{A}^n\\ \mu_q([w])\le 1}} \mu_{q}([w])\textbf{1}_{\{Q(w)\neq 0\}} \big(e^{n(\xi(q)-\lambda\varepsilon)}\cdot Q(w)\big )^{\lambda\eta}\\
&\le &  \sum_{\substack{w\in\mathscr{A}^n\\ \mu_q([w])\le 1}} \mu_{q}([w])^{\theta}\textbf{1}_{\{Q(w)\neq 0\}} \big(e^{n(\xi(q)-\lambda\varepsilon)}\cdot Q(w)\big )^{\lambda\eta}= S_{n,\varepsilon}^{\lambda}(\theta, \eta),
\end{eqnarray*}
where
$$
S_{n,\varepsilon}^{\lambda}( \theta, \eta)=\sum_{w\in \mathscr{A}^n}e^{n(\lambda\eta\xi(q)-\varepsilon\eta)} \textbf{1}_{\{Q(w)\neq 0\}}Q_q(w)^{\theta}Q(w)^{\lambda\eta}Y_q^{\theta}.
$$
Consequently, (\ref{S1}) will follow if we show that 
\begin{equation}\label{S2}
\sum_{n\ge 1}\mathbb{E}(S_{n,\varepsilon}^{\lambda}(\theta,\eta))<\infty.
\end{equation} 
We have 
$$
\mathbb{E}(S_{n,\varepsilon}^{\lambda}( \theta, \eta))=\mathbb{E}(Y_q^{\theta})e^{n(\lambda\eta\xi(q)-\varepsilon\eta)} \Phi(\theta q+\lambda\eta,\theta \tau(q))^n.
$$
Let $\widetilde \xi(q)=\frac{\partial \Phi}{\partial t}(q,\tau(q))$. By definition of $\xi(q)$, $\widetilde \xi(q)$ and $-\tau(q)$ we have 
\begin{eqnarray*}
&&\lambda\eta\xi(q)+\log \Phi(\theta q+\lambda\eta,\theta\tau(q))\\
&=&-\xi(q)q(\theta-1)+ \widetilde \xi(q) (\theta-1) \tau(q)+O( [q(\theta-1)+\lambda\eta]^2)
\end{eqnarray*}
as $(\theta,\eta)\to (1^-,0)$. Moreover, we have
$$-\xi(q)q+ \widetilde \xi(q) \tau(q)= -\widetilde \xi(q)(\tau'(q)q-\tau(q)) = 0.$$
It follows that if we fix $\eta$ small enough and $\theta$ close enough to $1^-$ we have
$$e^{n(\lambda\eta\xi(q)-\varepsilon\eta)} \Phi(\theta q+\lambda\eta,\theta\tau(q))^n\le e^{-n\varepsilon\eta/2}.$$
Since $\mathbb{E}(Y_q^{\theta})<\infty$, we get \eqref{S2}.

In the case $\gamma=-1$, we leave the reader check that like in the proof of Proposition~\ref{keypro}(1)  we can find a constant $C>0$ such that for $\theta,\eta\in (0,1]$ we have  
\begin{eqnarray*}
&&\mu_q(E_{n,\varepsilon}^\lambda(-1)) \\
&\le& C\cdot \mathbb{E}(Y_q^{\theta})e^{n(\lambda\eta\xi(q)-\varepsilon\eta)} \Phi(\theta q+\lambda\eta,\theta\tau(q))^n \sum_{m=0}^{n-1}\Big [\frac{\mathbb{E}(W_{q,0}^\theta)\mathbb{E}(|W_{b-1}|^{\lambda\eta})}{\Phi(\theta q+\lambda\eta,\theta\tau(q))}\Big ]^{m}.
\end{eqnarray*}

\smallskip

\noindent
(2) The proof is similar to that of (1).

\smallskip

\noindent
(3) We denote $Z^{(m)}_U$ by $Z$. Let $\theta,\eta\in (0,1]$. For $n\ge 1$, $\gamma\in\{-1,0,1\}$, $\lambda\in\{-1,1\}$ and $\varepsilon>0$ let $E^\lambda_{n,\varepsilon}(\gamma)= \{t\in E_h: Z((t|_{n})^\gamma)^\lambda> e^{n\varepsilon}\}.$
We have 
\begin{eqnarray*}
\mu_q\big (E^\lambda_{n,\varepsilon}(\gamma))\}\big )
\le \sum_{\substack{w\in\mathscr{A}^n\\ \mu_q([w])\le 1}} \mu_q([w])^\theta e^{-n\varepsilon\eta} Z((t|_{n})^\gamma)^{\lambda\eta}.
\end{eqnarray*}
Thus, 
\begin{eqnarray*}
\mathbb{E}\Big (\mu_q\big (E^\lambda_{n,\varepsilon}(\gamma)\big )\Big )
&\le & e^{-n\varepsilon\eta} \Phi(\theta q,\theta\tau(q))^n \mathbb{E}( Y_q(w)^\theta Z(w^\gamma)^{\lambda\eta}).
\end{eqnarray*}
If we choose $\theta =1-\eta^{2/3}$, then by using H\"older's inequality we get,  
$$
\mathbb{E}(Y_q^{\theta}Z(w)^{\lambda\eta})\le \|Y_q^{\theta}\|_{ 1+\eta^{3/4}}\big (\mathbb{E}(Z^{\lambda (1+\eta^{3/4})\eta^{1/4})}\big )^{\eta^{3/4}/(1+\eta^{3/4})}<\infty$$
 since $\mathbb{E}(Y_q^\theta)<\infty$ for $\theta \in (0,1)$ and $\mathbb{E}(Z^{\lambda \beta})<\infty$ if $|\beta|$ is small enough (see Proposition~\ref{xmoments}(2)). Moreover, by definition of $\tau(q)$, since the $L_i$ are smaller than 1, we have $\Phi(\theta q,\theta\tau(q))
<1$. Consequently, $\sum_{n\ge 1} \mathbb{E}\big (\mu (E^\lambda_{n,\varepsilon}(\gamma) )\big ) <\infty$. We conclude as in the proof of Proposition~\ref{keypro}(3). 
  
\medskip

\noindent
\subsubsection{The case $q=\infty$}$\ $\\

\noindent
{\it Proof of Proposition~\ref{qk}.} First we have the following lemma. 
\begin{lemma}\label{nondegeneration}
If $\displaystyle \sum_{n\geq 1} \prod_{k=0}^{n-1}\Phi\big (2 q_k,2\tau(q_k) \big )^{1/2}<\infty$ then  for every $w\in\mathscr{A}^*$, $Y_{q,n}(w)$ converges to $Y_q(w)$ in $L^1$ norm as $n\to \infty$; in particular $\mathbb{E}(Y_q(w))=1$. 
\end{lemma}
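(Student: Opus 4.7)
The plan is to prove the $L^1$ convergence of the non-negative martingale $(Y_{q,n}(w))_{n\ge 1}$ by showing that $\sum_{n}\|Y_{q,n+1}(w)-Y_{q,n}(w)\|_2<\infty$ under the summability hypothesis; since $\|\cdot\|_1\le \|\cdot\|_2$, this forces $(Y_{q,n}(w))_n$ to be Cauchy in $L^1$, hence to converge in $L^1$ to its a.s.\ limit $Y_q(w)$, giving $\mathbb{E}(Y_q(w))=\lim_n \mathbb{E}(Y_{q,n}(w))=1$.

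The main step is to bound a single martingale increment. Grouping the index set $\mathscr{A}^{n+1}$ according to the last letter and using the multiplicative structure of $Q_q$, one writes
\[
Y_{q,n+1}(w)-Y_{q,n}(w)=\sum_{v\in\mathscr{A}^n} P_q(v;w)\,\bigl(S_n(v;w)-1\bigr),
\]
where $P_q(v;w)$ is a path weight depending only on the cascade variables $W(w\cdot v|_{k})$ for $0\le k\le n-1$, and $S_n(v;w)=\sum_{i=0}^{b-1} W_{q_{|w|+n},i}(w\cdot v)$ depends only on $W(w\cdot v)$. Since the variables $\{W(u):u\in\mathscr{A}^*\}$ are independent across distinct nodes and $\mathbb{E}(S_n(v;w)-1)=\Phi(q_{|w|+n},\tau(q_{|w|+n}))-1=0$, all cross terms in the expansion of $\mathbb{E}((Y_{q,n+1}(w)-Y_{q,n}(w))^2)$ vanish, leaving the diagonal sum
\[
\mathbb{E}\bigl((Y_{q,n+1}(w)-Y_{q,n}(w))^2\bigr)=\sum_{v\in\mathscr{A}^n}\mathbb{E}(P_q(v;w)^2)\,\mathbb{E}\bigl((S_n(v;w)-1)^2\bigr).
\]

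Using independence across distinct nodes, $\sum_{v\in\mathscr{A}^n}\mathbb{E}(P_q(v;w)^2)$ factors into the telescoping product $\prod_{k=|w|}^{|w|+n-1}\Phi(2q_k,2\tau(q_k))$, while Cauchy--Schwarz gives $\mathbb{E}((S_n(v;w)-1)^2)\le \mathbb{E}(S_n(v;w)^2)\le b\,\Phi(2q_{|w|+n},2\tau(q_{|w|+n}))$. Combining these two estimates yields
\[
\|Y_{q,n+1}(w)-Y_{q,n}(w)\|_2\le b^{1/2}\prod_{k=|w|}^{|w|+n}\Phi(2q_k,2\tau(q_k))^{1/2}.
\]
The hypothesis $\sum_{n\ge 1}\prod_{k=0}^{n-1}\Phi(2q_k,2\tau(q_k))^{1/2}<\infty$, combined with positivity of each factor $\Phi(2q_k,2\tau(q_k))$, implies convergence of any shifted tail series, in particular of $\sum_n\prod_{k=|w|}^{|w|+n}\Phi(2q_k,2\tau(q_k))^{1/2}$, which closes the argument.

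The only delicate point is the independence bookkeeping in the vanishing of the cross terms: for $v\neq v'$ in $\mathscr{A}^n$ one must verify that $S_n(v;w)$ is independent of the triple $(P_q(v;w),P_q(v';w),S_n(v';w))$, so that the mean-zero property of $S_n(v;w)-1$ can be factored out. This is immediate once one observes that $S_n(v;w)$ depends only on the weight at node $w\cdot v$, which is distinct from every node involved in that triple. Everything else is a routine cascade computation.
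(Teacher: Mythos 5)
Your proof is correct. The core estimate is essentially what the paper obtains, but via Proposition~A.2 (\ref{VBE}) applied as a black box with $p=2$: your inline computation — splitting the martingale increment over the node $w\cdot v$, killing the cross terms by independence and the mean-zero property of $S_n(v;w)-1$, then telescoping the second moments into $\prod_k\Phi(2q_k,2\tau(q_k))$ — is exactly the content of that proposition in the $L^2$ case, so that part is a reproof rather than a different idea. Where you genuinely diverge is in the treatment of $w\neq\emptyset$. You redo the $L^2$ estimate starting at depth $|w|$ and observe that the shifted tail $\sum_n\prod_{k=|w|}^{|w|+n}\Phi(2q_k,2\tau(q_k))^{1/2}$ is still finite because each factor is strictly positive (which holds here since $\Phi(q,\tau(q))=1$ and the $W_i$ are not all a.s.\ zero) and the full series is assumed finite. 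The paper instead proves $L^1$ convergence only for $w=\emptyset$, then extends to arbitrary $w$ by a contradiction argument: using the branching identity $Y_q=\sum_{w\in\mathscr{A}^m}Q_q(w)Y_q(w)$, the identical distribution of the $Y_q(w)$ with $|w|=m$, and the fact that a non-negative martingale of mean $1$ fails to converge in $L^1$ precisely when its limit has mean $<1$, they show that non-convergence for one $w$ would force $\mathbb{E}(Y_q)<1$, contradicting the $w=\emptyset$ case. Both arguments are sound; yours is more uniform and avoids the contradiction step, while the paper's is shorter given that Proposition~A.2 and the branching identity are already available.
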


\begin{proof} An application of Proposition~\ref{VBE} to $Y_{q,n}-Y_{q,n-1}$ and $p=2$ yields 
\begin{equation}\label{L1bound}
\sum_{n\geq 1} \|Y_{q,n}-Y_{q,n-1}\|_{2}\leq C \sum_{n\geq 1}\prod_{k=0}^{n-1}\Phi\big (2 q_k,2\tau(q_k) \big )^{1/2}.
\end{equation}
where we set $Y_{q,0}=1$ and $C$ is the supremum of the constants $C_p$ invoked in Proposition~\ref{VBE}. 
Then, since $\|Y_{q.n}-Y_{q,n-1}\|_1\le \|Y_{q,n}-Y_{q,n-1}\|_{2}$ we have the conclusion for $w=\emptyset$. Now, if $m\ge 1$ we have 
\begin{equation}\label{identn}
Y_q=\sum_{w\in\mathscr{A}^{\mathbb{N}_+}_m} Q_q(w)\,Y_q(w),
\end{equation}
where the random variables $Y_q(w)$, $w\in\mathscr{A}^m$, are identically distributed, as well as the discrete processes $(Y_{q,n}(w))_{n\ge 1}$ converging to them. Consequently, if $Y_q(w)$ is not the limit of $(Y_{q,n}(w))_{n\ge 1}$ in $L^1$ for some $w\in \mathscr{A}^m$, then $\mathbb{E}(Y_q(w))<1$ and the same holds for all $w\in\mathscr{A}^m$. In particular, (\ref{identn}) yields $\mathbb{E}(Y_q)<1$, which is in contradiction with the convergence in $L^1$ norm of $(Y_{q,n})_{n\ge 1}$. 

\end{proof}

Now we specify the sequence $(q_k)_{k\ge 0}$.  We discard the obvious case where $\tau$ is affine and assume that $\tau$ is strictly concave.
\medskip

The graph of the function $\tau$ has the asymptote line $l(q)=h q- \tau^*(h)$ with $\tau^*(h)\in [0,1)$. For $\delta \geq 0$, let $l_\delta(q)= l(q) -\delta$. We deduce from the strict concavity of $\tau$  that for any $\delta\in (0,-\tau(0)-\tau^*(h)]$ there is a unique $q(\delta)>0$ such that $ l_\delta(q(\delta))=\tau(q(\delta))$. Moreover, $\delta\mapsto q(\delta)$ is continuous and strictly decreasing, and  $q(\delta)\to \infty$ as $\delta\to 0$. Fix $k_0$ such that $\frac{1}{\log k_0}\in (0,-\tau(0)-\tau^*(h))$, and for $k\ge 0$ let  $\delta_k=1/\log (k_0+k)$. Then choose $q_k=q(\delta_k)$ for $k\ge 0$. By using the definition of $l_\delta$ and $\delta(\cdot)$ as well as the concavity of $\tau$, we obtain for all $k\ge 0$ 
\begin{equation}\label{AAA}
\varepsilon_k=\tau(2q_{k})-2\tau(q_{k}) =\tau(2q_k)-2l_{\delta_k}(q_k)\ge l_{\delta_k} (2q_k)- 2l_{\delta_k}(q_k)=\tau^*(h)+\delta_k.
\end{equation}
We also have for any conjugate pair $(\alpha,\alpha')$ such that $1/\alpha+1/\alpha'=1$
\begin{eqnarray*}
\Phi\big (2 q_k, 2\tau(q_k) \big )&=&\mathbb{E}\Big (\sum_{i=0}^{b-1}W_{2q_k,i}L_i^{\varepsilon_k}\Big )\\
&\le &\Big (\mathbb{E}\Big (\sum_{i=0}^{b-1}W_{2q_k,i}^{\alpha'}\Big )\Big )^{1/\alpha'}\Big (\mathbb{E}\Big (\sum_{i=0}^{b-1}L_i^{\varepsilon_k\alpha}\Big)\Big )^{1/\alpha}.
\end{eqnarray*}
Our assumption that $\tau$ as an asymptote at $\infty$ implies that $\tau(q)/q$ is increasing, so $\mathbb{E}\Big (\sum_{i=0}^{b-1}W_{q',i}^{\alpha'}\Big )=\Phi(\alpha' q',\alpha'\tau(q'))\le \Phi(\alpha' q',\tau(\alpha' q'))=1$ for all $q'>0$ and $\alpha'>1$. Also, the fact that the $L_i$ belong to $(0,1)$ implies that $\varphi_L(q')\sim \bar{a}q'+c$ at $\infty$ with $\bar{a}>0$, so by choosing $\alpha$ large enough we ensure  $ \mathbb{E}\Big (\sum_{i=0}^{b-1}L_i^{\varepsilon_k\alpha}\Big)=b^{-\varphi_L(\varepsilon_k\alpha)}\le b^{-\bar{a} \varepsilon_k\alpha/2}$. Thus $\Phi\big (2 q_k,2\tau(q_k) \big )\le b^{-\bar{a} \varepsilon_k/2}$. Consequently, for all $n\ge 1$, we have 
$$
\prod_{k=0}^{n-1}\Phi\big (2 q_k,2\tau(q_k) \big )^{1/2}\le b^{-\bar{a}S_n/4},
$$
where $S_n=\displaystyle \sum_{k=0}^{n-1}\varepsilon_k$. We have either $\tau^*(h)>0$ and $S_n\ge n\tau^*(h)/2$ or $\tau^*(h)=0$ and there exists $n_0\ge 1$ such that $S_n \ge n/(4\log (n))$ for $n\ge n_0$. In both cases the conclusion of Lemma~\ref{nondegeneration} holds.

The same arguments as those used in the proof of Lemma~\ref{nondegeneration} show that for $w\in \mathscr{A}^*$ and $n\ge 1$ we have (with $Y_0(w)=1$)
\begin{equation}\label{IY}\|Y_{q,n}(w)-Y_{q,n-1}(w)\|_{2} \leq C\,  b^{-n \bar{a} (S_{|w|+n}-S_{|w|})/4}.
\end{equation}

If $\tau^*(h)>0$, by using (\ref{AAA}) we get $\|Y_{q,n}(w)-Y_{q,n-1}(w)\|_{2} \leq C b^{-n \bar{a}\tau^*(h)/4}$, an upper bound which does not depend on $w$, and finally $\sup_{w\in\mathscr{A}^*}\|Y_q(w)\|_2<\infty$.

\smallskip 

If $\tau^*(h)=0$, we have $\|Y_{q,n}(w)-Y_{q,n-1}(w)\|_{2}\le C \,  b^{\bar{a}S_{|w|}/4}\cdot b^{-\bar{a}S_{|w|+n}/4}$. Thus, 
\begin{eqnarray*}
\|Y_q(w)\|_2 &\le&  1+\sum_{n\ge 1}\|Y_{q,n}(w)-Y_{q,n-1}(w)\|_{2} \\
&\le& 1+C \cdot b^{\bar{a}S_{|w|}/4} \sum_{n\ge 1} b^{-\bar{a}S_{|w|+n}/4}\le 1+ C \cdot  b^{\bar{a}S_{|w|}/4} L,
\end{eqnarray*}
where $L=\sum_{n\ge 1} b^{-\bar{a}S_{n}/4}$ is finite because $S_n\ge n/4 \log (n)$ for $n$ large enough. 

Now, we notice that we also have by concavity of $\tau$ 
\begin{equation}\label{eq}
\tau(2q_{k})-2\tau(q_{k})=\tau (2q_k)-2l_{\delta_k}(q_k) \le  l(2q_k)-2l_{\delta_k}(q_k)=2\delta_k.
\end{equation}
Due to our choice for $\delta_k$, this implies that for $|w|$ large enough we have $S_{|w|}\le \frac{2|w|}{\log |w|}$. Finally, there exists $C'>0$ such that for all $w\in \mathscr{A}^*$ we have $\|Y_q(w)\|_2\le C' b^{\frac{\bar{a}|w|}{2\log |w|}}$.

\medskip

\noindent
{\it Proof of Proposition~\ref{endpoints2}.} We first prove the following proposition

\begin{proposition}\label{endpoints3}
Let $h\in\{\bar h, \underline h\}$ and $q\in\{-\infty,\infty\}$ such that $h=\lim_{J\ni q'\to q}\tau'(q)$. \begin{enumerate}
\item Let $m\ge 1$, $\gamma_1,\gamma_2\in\{-1,0,1\}$ and $\delta_1,\delta_2\in\{0,1\}$. With probability 1, for $\mu_{q}$-almost every $t\in \supp(\mu_q)$, $\displaystyle
\lim_{n\to\infty} \frac{\log |Q_W(t|_{n})| \cdot Z^{(m)}_W(t|_{n}^{\gamma_1})^{\delta_1}}{\log Q_L(t|_{n})\cdot Z_L^{(1)}(t|_{n}^{\gamma_2})^{\delta_2}}=h.$

\smallskip

\item
For $t\in\mathscr{A}^{\mathbb{N}_+}$, $i\in\{0,b-1\}$ and $n\ge 1$ we define $N_n(t)=\max\{0\le j\le n: \ \forall \ 0\le k\le j,\ t_{n-k}=i\}$. With probabilility 1, for $\mu_q$-almost every $t$ we have $N_n(t)=o(n)$. 

\smallskip

\item Let  $\varepsilon\in (0,1)$, $i\in\{0,b-1\}$ and $r\in\{0,...,b-1\}$. There exists $\alpha(\varepsilon)\in (0,\varepsilon)$ such that, with probability 1, for $\mu_q$-almost every $t$, for $n$ large enough and $n(1-\alpha(\varepsilon))\le p\le n-1$ such that $W_{n,p}(t|_{n}):=W_{r}(t|_{p-1})\prod_{k=p+1}^{n}W_i(t|_{k-1})\neq 0$, we have 
$$\log |W_{n,p}(t|_{n})|/\log O_L(t|_{n}) \ge -\varepsilon.$$
\end{enumerate}
\end{proposition}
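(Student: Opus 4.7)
The plan is to prove all three parts via a uniform Borel--Cantelli scheme tailored to the random measure $\mu_q$ built in Proposition~\ref{qk} along the approximating sequence $(q_k)_{k\ge 0}$. For each part I would isolate a family of exceptional level-$n$ sets $E_n\subset\mathscr{A}^{\mathbb{N}_+}$, bound $\mathbb{E}(\mu_q(E_n))$ via a Markov inequality that exploits the decomposition $\mu_q([w])=Q_q(w)Y_q(w)$ and the independence of these two factors, prove summability in $n$, and conclude by applying Borel--Cantelli to the normalized measure $\mu_q/Y_q$; this is legitimate because Proposition~\ref{qk} ensures that $Y_q=\mu_q(\mathscr{A}^{\mathbb{N}_+})$ is $\mathbb{P}$-a.s.\ finite and strictly positive.

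For part (1), I would first absorb the $Z_W^{(m)}(t|_n^{\gamma_1})$ and $Z_L^{(1)}(t|_n^{\gamma_2})$ factors into $e^{o(n)}$ contributions, via the same Markov technique as in the proof of Proposition~\ref{keypro}(3) together with the small-moment bounds of Proposition~\ref{xmoments}, thereby reducing the claim to $\log|Q_W(t|_n)|/\log Q_L(t|_n)\to h$ for $\mu_q$-a.e.\ $t$. The numerator and denominator are then compared to their natural deterministic asymptotics by evaluating, for suitable exponents $\eta,\lambda$,
\begin{equation*}
\mathbb{E}\Big(\sum_{w\in\mathscr{A}^n}\mu_q([w])\,|Q_W(w)|^{\eta}Q_L(w)^{-\lambda}\Big)=\mathbb{E}(Y_q)\prod_{k=1}^{n}\Phi\bigl(q_{k-1}+\eta,\tau(q_{k-1})-\lambda\bigr),
\end{equation*}
and first-order Taylor expansion around $(\eta,\lambda)=(0,0)$ combined with the identity $\xi_W(q_k)/\xi_L(q_k)=\tau'(q_k)\to h$ produces the required exponential-in-$n$ decay once $\eta$ is fixed small.

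For part (2), by the branching property and the independence of $Q_q(w)$ and $Y_q(w)$,
\begin{equation*}
\mathbb{E}\bigl(\mu_q\{t:t_{n-k}=i,\ 0\le k\le\lfloor\varepsilon n\rfloor\}\bigr)=\mathbb{E}(Y_q)\prod_{k=n-\lfloor\varepsilon n\rfloor}^{n-1}\mathbb{E}\bigl(W_{q_{k},i}\bigr),
\end{equation*}
and the standing assumption $\mathbb{P}(\sum_j\mathbf{1}_{W_j\ne 0}\ge 2)=1$ combined with the explicit form of $W_q$ and the asymptote $\tau(q)/q\to h$ yields $\limsup_k\mathbb{E}(W_{q_k,i})<1$ for $i\in\{0,b-1\}$, making the product exponentially small. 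Part (3) is a Markov estimate of the same type as part (1) applied to the weight $|W_{n,p}(t|_n)|^{\eta}O_L(t|_n)^{\varepsilon\eta}$: splitting the $\mu_q$-sum over $w$ into a product over positions $1,\dots,p-1$ (which supplies the original exponential decay) and positions $p,\dots,n$ (which contributes at most $n-p+1\le\alpha(\varepsilon)n+1$ factors of a uniformly bounded quantity), one sees that choosing $\alpha(\varepsilon)$ small enough guarantees that the bad product is dominated by the good one.

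The principal obstacle is that for $q\in\{\pm\infty\}$ one no longer has the $L^{p_K}$ control provided by Proposition~\ref{muq}, but only the weaker $L^{2}$ estimate $\|Y_q(w)\|_2=O(b^{a|w|/\log|w|})$ from Proposition~\ref{qk}. Whenever Cauchy--Schwarz is invoked in the Markov step to separate $Y_q(w)$ from the deterministic factors, this super-exponential-in-$\log n$ growth must be absorbed into the genuine exponential-in-$n$ decay furnished by the rest of the estimate; hence the Markov exponents $\eta$ in each of the three arguments must be taken uniformly small and independent of $n$, so that for each fixed choice the summability in $n$ survives with room to spare.
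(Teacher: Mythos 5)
Your overall architecture — Markov estimate on a level-$n$ exceptional set, factor $\mu_q([w])=Q_q(w)Y_q(w)$, control the deterministic product, then Borel--Cantelli with respect to the normalized $\mu_q$ — matches the paper's, and your approaches to parts (2) and (3) are recognizable (although for part (2) the paper instead passes to the Peyri\`ere measure $\mathcal{Q}_q$ and derives $\frac{1}{n}\sum_{k\le n}\mathbf{1}_{\{t_k=i\}}\to f_i$ $\mathcal{Q}_q$-a.s.\ from a square-integrable martingale and Kronecker's lemma, whereas you propose a direct Markov bound on the probability of a long trailing run of $i$'s; both routes ultimately need that $\limsup_k\mathbb{E}(W_{q_k,i})<1$, a point neither of us spells out, and your route is arguably more elementary). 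The separation of the $Z$-factors as a preliminary step is also a legitimate reorganization of what the paper does in one shot.

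The one claim I would push back on is the final sentence, which asserts that the Markov exponent $\eta$ \emph{must} be chosen uniformly small and independent of $n$. The paper does the opposite: it takes $\eta_n=1/\sqrt{\log(n_m+n)}\to 0$, and the reason is not cosmetic. Expanding $\log\Phi(q_{k-1}-\lambda\eta,\tau(q_{k-1})-\lambda h\eta-\varepsilon\eta)$ to second order produces, after summing over $k$, a leading term $\sim-\varepsilon\underline{a}\,n\eta$ and a second-order remainder $\sim\tfrac{c}{2}n\eta^2$, where $c=\sup_k\Delta(\zeta_k)$. With a \emph{fixed} $\eta$ these two terms have the same order in $n$, so the estimate only closes if one additionally knows the constant $c$ and takes $\eta<2\varepsilon\underline{a}/c$ — a quantitative input about the second derivatives of $\Phi$ along the curve $(q,\tau(q))$ near $q=\infty$ that your ``first-order Taylor expansion'' glosses over. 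Taking $\eta_n\to0$ makes the remainder $cn\eta_n^2$ automatically $o(n\eta_n)$ without needing to pin down $c$; the price is that the resulting decay $e^{-cn\eta_n}$ is only sub-exponential, which is why the slow rate $\eta_n\sim1/\sqrt{\log n}$ is needed — $n/\sqrt{\log n}$ is precisely what is required to beat the $b^{an/\log n}$ growth coming from $\|Y_q(w)\|_2$ via (doubly applied) Cauchy--Schwarz. So your fixed-$\eta$ variant can be made to work, but it is not forced, it requires an extra uniformity hypothesis on the Taylor remainder that you should state and justify, and the phrase ``must be taken \ldots independent of $n$'' reverses the logic of the paper's proof.

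Two smaller remarks: for part (1), after reducing to $\log|Q_W(t|_n)|/\log Q_L(t|_n)\to h$, your displayed expectation formula is correct but you should note that the factor $\mathbb{E}(Y_q(w)\cdot Z^{\lambda\eta})$ that survives the Markov step is evaluated at a level-$n$ word $w$, and it is there — not in the deterministic product — that Proposition~\ref{qk}'s $L^2$ bound enters and the $b^{an/\log n}$ growth appears; and you need both signs $\lambda\in\{-1,1\}$ of the Markov exponent to capture the limit (not just the $\liminf$), which in turn is what forces you to have negative-order moments of $Z_W^{(m)}$, supplied by Proposition~\ref{xmoments}(2).
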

{\it Proof.} (1) For $\gamma_1,\gamma_2\in\{-1,0,1\}$, $\delta_1,\delta_2\in\{0,1\}$, $m\ge 1$ and $w\in\mathscr{A}^*$ we simply denote
\[
\begin{cases}
O_W(w)=Q_W(w)\cdot Z_W^{(m)}(w^{\gamma_1})^{\delta_1}, \ O_L(w)=Q_L(w)\cdot Z_L^{(1)}(w^{\gamma_2})^{\delta_2},\\
Z(w)=Z_L^{(1)}(w^{\gamma_2})^{\delta_2}/Z_W^{(m)}(w^{\gamma_1})^{\delta_1}.
\end{cases}
\]
For $\varepsilon>0$, $n\geq 1$, and $\lambda\in \{-1,1\}$ we define
$$E^\lambda_{n,\varepsilon}=\{t\in \mathscr{A}^{\mathbb{N}_+}: O_W(t|_{n})\neq 0 \text{ and } O_W(t|_{n})^\lambda <O_L(t|_{n})^{\lambda h+ \varepsilon}\}.
$$
For any $\eta_n>0$ we have 
\begin{eqnarray*}
\mu_q(E^\lambda_{n,\varepsilon})&\le &\sum_{w\in\mathscr{A}^n}\mu_q(w) |Q_W(w)|^{-\lambda\eta_n}Q_L(w)^{\lambda \eta_n h+ \varepsilon\eta_n } Z(w)^{\lambda\eta_n}\\
&=&\sum_{w\in\mathscr{A}^n} Y_q(w)Z(w)^{\lambda\eta_n} \prod_{k=1}^{n} W_{w_{k}}(w|_{k-1})^{q_{k-1}-\lambda\eta_n}L_{w_k}(w|_{k-1})^{\lambda \eta_n h+ \varepsilon\eta_n-\tau(q_{k-1}) }.
\end{eqnarray*}
This  yields
\begin{equation}\label{inftyupp}
\mathbb{E}(\mu_q(E^\lambda_{n,\varepsilon})) \le \mathbb{E}(Y_q\cdot Z^{\lambda\eta_n})\prod_{k=1}^{n}\Phi(q_{k-1}-\lambda\eta_n,\tau(q_{k-1})-\lambda h\eta_n-\varepsilon\eta_n).
\end{equation}

Let us make the following observation. For any $q_k,\eta_n>0$ we can write
\begin{eqnarray}
&&\log\Phi(q_k-\lambda\eta_n,\tau(q_k)-\lambda h\eta_n-\varepsilon\eta_n)\nonumber\\
&=&\lambda\eta_n\xi(q_k) -(\lambda h\eta_n+\varepsilon\eta_n)\widetilde\xi(q_k) +\frac{1}{2}\Delta(\zeta_k) \eta_n^2 \nonumber \\
&=&-\varepsilon\eta_n \widetilde{\xi}(q_k) +\lambda\eta_n \widetilde{\xi}(q_k)(\tau'(q_k)-h)+\frac{1}{2}\Delta(\zeta_k)\eta_n^2,\label{asymphi}
\end{eqnarray}
where $\zeta_k=(\zeta_1,\zeta_2)=s(q_k,\tau(q_k))+(1-s)(q_k-\lambda\eta_n,\tau(q_k)-\lambda h\eta_n-\varepsilon\eta_n)$ for some $s\in[0,1]$, and
\begin{eqnarray*}
\Delta(\zeta_k)&=&\lambda^2\frac{\partial^2}{\partial q^2}\Phi(\zeta_k) +(\lambda h+\varepsilon)^2\frac{\partial^2}{\partial t^2}\Phi(\zeta_k)+2\lambda(\lambda h+\varepsilon)\frac{\partial^2}{\partial q \partial t}\Phi(\zeta_k).
\end{eqnarray*}
Also, we have 
\begin{eqnarray*}
&&\log\Phi(q_k-\lambda\eta_n,\tau(q_k)-\lambda h\eta_n-\varepsilon\eta_n)\\
&=&\log\Phi(q_k-\lambda\eta_n,\tau(q_k-\lambda\eta_n)+\lambda(\tau'(q_k)-h)\eta_n + \tau''(\widetilde q_k) \eta_n^2/2-\varepsilon\eta_n),
\end{eqnarray*}
where $\widetilde q_k\in [q_k-\lambda\eta_n,q_k]$. Since $\tau$ is concave and $\tau'(q)$ tends to $h$ at $\infty$, if $\eta_n$ is small enough, then for $k$ large enough we have
$$\lambda(\tau'(q_k)-h)\eta_n + \tau''(\widetilde q_k) \eta_n^2/2-\varepsilon\eta_n\le -\varepsilon\eta_n/2<0,$$
hence $\log\Phi(q_k-\lambda\eta_n,\tau(q_k)-\lambda h\eta_n-\varepsilon\eta_n)<\log\Phi(q_k-\lambda\eta_n,\tau(q_k-\lambda \eta_n))=0$. Hence, due to \eqref{asymphi}
\begin{equation}\label{asymphi2}
0\ge  -\varepsilon\eta_n \widetilde{\xi}(q_k) +\lambda\eta_n \widetilde{\xi}(q_k)(\tau'(q_k)-h)+\frac{1}{2}\Delta(\zeta_k)\cdot \eta_n^2.
\end{equation}
Moreover, under our assumptions, the multifractal analysis of the Mandelbrot  measure $\mu_L=F_L'$ achieved in~\cite{B2}) implies that for any random probability vector $\widetilde{W}$ with $\mathbb{E}(\sum_{i=0}^{b-1}\widetilde{W}_i)=1$ and $\mathbb{E}(\sum_{i=0}^{b-1}\widetilde{W}_i\log\widetilde{W}_i)<0$, $-\mathbb{E}(\sum_{i=0}^{b-1}\widetilde{W}_i\log_b L_i)$ is a H\"older exponent for $\mu_L$, so it must belong to $[\underline{a},\overline{a}]$, where 
$$0<\underline{a}=\lim_{q\to+\infty} \tau_{F_L}(q)/q \le \lim_{q\to-\infty} \tau_{F_L}(q)/q=\overline{a}<\infty.$$
Applying this with $\widetilde W=W_{q_k}$ yields $\widetilde{\xi}(q_k)/\log(b)\in[\underline{a},\overline{a}]$ for all $k\ge 0$. Also, since $q_k\nearrow \bar{q}=+\infty$ we have $\tau'(q_k)=\xi(q_k)/\widetilde{\xi}(q_k)\searrow h<\infty$, so $\sup_{k\ge0}\xi(q_k)<\infty$. These properties together with \eqref{asymphi2} yield $c=\sup_{k} \Delta(\zeta_k) <\infty$. 

For simplicity we define $\underline{a}:=\log (b)\,  \underline{a}$ and $\overline{a}:= \log (b)\, \overline{a}$.

By using again the fact that $\tau'(q_k)-h\ge 0$ as well as~\eqref{inftyupp} and~\eqref{asymphi} we get
\begin{equation}
\mathbb{E}(\mu_q(E^\lambda_{n,\varepsilon})) \le e^{-\varepsilon\underline{a}n\eta_n+\overline{a}(\sum_{k=1}^{n}\tau'(q_k)-h)\eta_n+c\eta_n^2/2} \cdot \mathbb{E}(Y_q(w)\cdot Z^{\lambda\eta_n}(w)).
\end{equation}

We take $\eta_n=1/\sqrt{ \log (n_m+n)}$ for all $n\ge 1$, where $n_m$ is an integer large enough so that for any $\lambda\in\{-1,1\}$ we have $\mathbb{E}((Z_W^{(m)})^{\frac{4\lambda}{\sqrt{\log (n_m+1)}}})<\infty$, as well as $\mathbb{E}((Z_L^{(1)})^{\frac{4\lambda}{\sqrt{\log (n_m+1)}}})<\infty$ (the existence of $n_m$ comes from Propositioin~\ref{xmoments}). Then due to Proposition~\ref{qk}, for $n\ge 1$ and $w\in\mathscr{A}^n$ we have
\begin{equation}\label{ineq1}
\mathbb{E}(Y_q(w)\cdot Z^{\lambda \eta_n})\le \|Y_q(w)\|_2 \cdot \|(Z_L^{(1)})^{\lambda \delta_2\eta_n}\|_4\cdot \|(Z_W^{(m)})^{-\lambda\delta_1 \eta_n}\|_4 =O( b^{2n/\log (n)}).
\end{equation}

Notice that  $\sum_{k=1}^{n}\tau'(q_k)-h=o(n)$, since $\tau'(q_k)- h\searrow 0$ when $k\to\infty$. Thus, due to our choice for $\eta_n$, for $n$ large enough we have  
\begin{equation}\label{ineq2}
-\varepsilon\underline{a}n\eta_n+\overline{a}(\sum_{k=1}^{n}\tau'(q_k)-h)\eta_n+c\eta_n^2=-\varepsilon\underline{a}n\eta_n+o(n\eta_n).
\end{equation}
Then (\ref{ineq1}) and  (\ref{ineq2}) together yield $\mathbb{E}(\mu_q(E^\lambda_{n,\varepsilon}))=O(b^{-\varepsilon\underline{a}n/2\sqrt{\log (n)}})$. 
\medskip

\noindent
(2) Let us recall that $\mathbb{E}(\|\mu_q\|)=\mathbb{E}(Y_q)=1$ and introduce on $\Omega\times \mathscr{A}^{\mathbb{N}_+}$ the ``Peyri\`ere" probability measure $\mathcal{Q}_q$ defined by 
$$
\mathcal{Q}_q(A)=\mathbb{E}\Big (\int \mathbf{1}_A(\omega,t)\mu_q ^\omega(\text{d}t)\, \mathbb{P}(\text{d}(\omega))\Big ), \ A\in \mathcal{B}\otimes \mathcal{S}.
$$
Notice that ``$\mathcal{Q}_q$-almost surely" means ``with probability 1, $\mu_q$-almost everywhere". 
\smallskip

\noindent
 Without loss of generality, we can assume that for $i\in\{0,b-1\}$ the sequence
 $$\big (\mathbb{E}(\mathbf{1}_{\{W_i\neq 0\}}|W_i|^{q_k} L_i^{-\tau(q_k)})\big )_{k\ge 0}$$
 has a limit $f_i$ as $k\to\infty$, since this sequence takes values in the bounded interval $[0,1]$. 

Now for $n\ge 1$, $i\in \{0,b-1\}$ and $(\omega,t)\in \Omega\times \mathscr{A}^{\mathbb{N}_+}$ set $f_{i,n}(\omega,t)=\mathbf{1}_{\{i\}}(t_n)$. It is not difficult to show that the random variables $f_{i,n}$, $n\ge 1$ are $\mathcal{Q}_q$ independent. Moreover, we have
$$
\widetilde f_{i,n}:=\mathbb{E}_{\mathcal{Q}_q}(f_{i,n})=\mathbb{E}_{\mathcal{Q}_q}(f_{i,n}^2)=\mathbb{E}(W_{q_{n-1},i})=\mathbb{E}(|W_i|^{q_{n-1}}L_i^{-\tau(q_{n-1})}).
$$
Indeed, 
\begin{eqnarray*}
&&\mathbb{E}_{\mathcal{Q}_q}(f_{i,n})\\
&=&\sum_{w\in \mathscr{A}^n,\ w_n=i} \mathbb{E}(\mu_q([w]) )=
\sum_{w\in \mathscr{A}^n,\ w_n=i}\mathbb{E}(Q_q(w|_{n-1}))\mathbb{E}(W_{q_{n-1},i}(w|_{n-1})) \mathbb{E}(Y_q(w)).
\end{eqnarray*}

Consequently, $\widetilde f_{i,n}$ converges to $f_i$ as $n\to\infty$, and on $(\Omega\times \mathscr{A}^{\mathbb{N}_+}, \mathcal{B}\otimes \mathcal{S},\mathcal{Q}_q)$, the martingale $\sum_{k=1}^n (f_{i,k}-\mathbb{E}_{\mathcal{Q}_q}(f_{i,k}))/k$ is bounded in $L^2$ norm by $\sum_{k\ge 1} \widetilde f_{i,k}(1-f_{i,k})/k^2$. It follows that the series $\sum_{k\ge 1} (f_{i,k}-\mathbb{E}_{\mathcal{Q}_h}(f_{i,k}))/k$ converges $\mathcal{Q}_q$-almost surely, and the Kronecker lemma implies that $\sum_{k=1}^nf_{i,n}/n$ converges to $f_i$ $\mathcal{Q}_q$-almost surely.  This implies (1).

\medskip

\noindent
(3) Let $\alpha\in (0,\varepsilon)$. Fix $\eta\in (0,1)$ and for $n$ large enough let $p\in [(1-\alpha)n,n-1]$ be an integer. For $(\omega,t)\in \Omega\times \mathscr{A}^{\mathbb{N}_+}$, let $X_{n,p}(\omega,t)=\log |W_{n,p}(t|_{n})|/\log O_L(t|_{n})$. We have 
\begin{eqnarray*}
\mathcal{Q}_q(X_{n,p}< -\varepsilon)&=&\sum_{w\in \mathscr{A}^n}\mathbb{E}(\mathbf{1}_{\{\log |W_{n,p}(t|_{n})|/\log O_L(t|_{n}) < -\varepsilon\}}\mu_q([w]))\\
&\le& \sum_{w\in \mathscr{A}^n} \mathbb{E}(Y_q(w)Z_L(w)^{\eta\varepsilon}) \mathbb{E}(Q_q(w|_{p-1})Q_L(w|_{p-1})^{\eta\varepsilon})\\
&&\qquad\cdot\ \mathbb{E}(W_{q_{p-1},w_p}(w|_{p-1}) L_{w_p}(w|_{p-1})^{\eta\varepsilon}|W_r(w|_{p-1})|^\eta)\\
&&\qquad\cdot \prod_{k=p+1}^{n} \mathbb{E}(W_{q_{k-1},w_k}(w|_{k-1}) L_{w_k}(w|_{k-1})^{\eta\varepsilon}|W_i(w|_{k-1})|^\eta)
\end{eqnarray*}
for any $\eta>0$. Applying the Cauchy-Schwarz inequality in the right hand side of the above inequality  yields
\begin{eqnarray*}
\mathcal{Q}_q(X_{n,p}< -\varepsilon) 
&\le& \prod_{k=1}^{p-1}\Phi(q_{k-1},\tau(q_{k-1})-\eta\varepsilon) \prod_{k=p}^{n}\Phi(2q_{k-1},2\tau(q_{k-1})-2\eta\varepsilon) \\
&&\qquad \qquad \cdot \|Y_q(w)\|_2 \cdot \|Z_L^{\eta\varepsilon}\|_2 \cdot \||W_r|^\eta\|_2 \cdot (\||W_i|^\eta\|_2)^{n-p} .
\end{eqnarray*}
Also, by using the same arguments as in the proof of (1) we can get 
$$
\begin{cases}\log \Phi(q_{k-1},\tau(q_{k-1})-\eta\varepsilon)= -\widetilde\xi(q_{k-1}) \eta\varepsilon +O(\eta\varepsilon^2)\\
{\scriptstyle \log \Phi(2q_{k-1},2\tau(q_{k-1})-2\eta\varepsilon) \le \log \Phi(2q_{k-1},\tau(2q_{k-1})-2\eta\varepsilon) =-\widetilde\xi(2q_{k-1}) \eta\varepsilon +O(\eta\varepsilon^2).}
\end{cases} 
$$
It follows that
\begin{eqnarray*}
\mathcal{Q}_q(X_{n,p}< -\varepsilon) \le C\cdot e^{n[-\underline{a}\eta\varepsilon+O(\eta^2)]}\cdot (\||W_i|^\eta\|_2)^{\alpha n}.
\end{eqnarray*}
Since $\||W_i|^\eta\|_2\to 1$ when $\eta\to 0$, then we can find $\eta$ small enough and $\alpha$ small enough such that for $n$ large enough:
$$
\mathcal{Q}_q(X_{n,p}< -\varepsilon) \le e^{-\underline{a}\eta\varepsilon n/2},\quad \forall \ (1-\alpha)n\le p\le n.
$$
Consequently, $\sum_{n\ge 1}\mathcal{Q}_q(\exists\  (1-\alpha)n\le m\le n: X_{n,p}< -\varepsilon)<\infty$, and the conclusion follows from the Borel-Cantelli lemma.

\medskip

\noindent
{\it Proof of Proposition~\ref{endpoints2}.} Due to Proposition~\ref{endpoints3}(1), with probability 1, for $\mu_q$ almost every $t\in\mathscr{A}^{\mathbb{N}_+}$, (notice that $1/h$ can be infinite since $h$ can be equal to $0$),
$$
\lim_{n\to\infty} \frac{\log |Q_W(t|_{n})|}{\log Q_L(t|_{n})}=\lim_{n\to\infty} \frac{\log |Q_W(t|_{n})|}{\log\mathrm{Osc}^{(1)}_{F_L}(t|_{n})}=h, \lim_{n\to\infty} \frac{\log Q_L(t|_{n})}{\log\mathrm{Osc}^{(m)}_{F_W}(t|_{n})}=\frac{1}{h}, 
$$
and for $\gamma\in\{-1,1\}$
$$
\lim_{n\to\infty} \frac{\log Z_W^{(1)}((t|_{n})^\gamma)}{\log\mathrm{Osc}^{(1)}_{F_L}(t|_{n})}=\lim_{n\to\infty} \frac{\log Z_L((t|_{n})^\gamma)}{\log\mathrm{Osc}^{(m)}_{F_W}(t|_{n})}=0.
$$
Also, due to the Lemma~\ref{controlI} and the fact that all the moments of $Z_L$ are finite, there exist $\varepsilon>0$ such that for  $\mu_q$-almost every $t\in\mathscr{A}^{\mathbb{N}_+}$, there exists $n_{t,\varepsilon}$ such that for all $n\ge n_{t,\varepsilon}$ we have $ Q_L(t|_{n})\in [b^{-n(\bar{a}+\varepsilon)},b^{-n(\underline{a}-\varepsilon)}] $. In particular, for $n$ large enough we have
$$
\frac{\log (Q_L(t|_{n})/Q_L(t|_{n-N_n(t)}))}{\log Q_L(t|_{n})} \in \Big [\frac{\underline{a}-\varepsilon}{\bar{a}+\varepsilon}\frac{N_n(t)}{n},\frac{\bar{a}+\varepsilon}{\underline{a}-\varepsilon}\frac{N_n(t)}{n}\Big ].
$$
Consequently, since $N_n(t)=o(n)$ for $\mu_q$-almost every $t\in\mathscr{A}^{\mathbb{N}_+}$ (Proposition~\ref{endpoints3}(2)), we have
$$
\lim_{n\to\infty}\frac{\log Q_L(t|_{n-N_n(t)})}{\log Q_L(t|_{n})}=1,
$$
and if $h\neq 0$, we have
$$
1=h\cdot 1\cdot \frac{1}{h}=\lim_{n\to\infty}\frac{\log Q_W(t|_{n-N_n(t)})}{\log Q_L(t|_{n-N_n(t)})}\cdot \lim_{n\to\infty}\frac{\log Q_L(t|_{n-N_n(t)})}{\log Q_L(t|_{n})}\cdot\lim_{n\to\infty}\frac{\log Q_L(t|_{n})}{\log Q_W(t|_{n})}$$
$$
=\lim_{n\to\infty}\frac{\log Q_W(t|_{n-N_n(t)})}{\log Q_W(t|_{n})}.
$$
Moreover, let $i\in\{0,b-1\}$ and $r\in\{0,...,b-1\}$, since $L_i\le 1$,  for any $p\le n-1$ we have $$\liminf_{n\to \infty}\frac{\log (L_{r}(t|_{p-1})\prod_{k=p+1}^{n-1}L_i(t|_{k}))}{\log\mathrm{Osc}^{(m)}_{F_W}(t|_{n})} \ge 0.$$ Then, due to Proposition~\ref{endpoints3} (1) and (3), for $\mu_q$-almost every $t\in\mathscr{A}^{\mathbb{N}_+}$, for $\gamma\in\{-1,1\}$,
$$
\text{either } \liminf_{n\to\infty} \frac{\log \mathrm{Osc}^{(1)}_{F_W}((t|_{n})^\gamma)}{\log\mathrm{Osc}^{(1)}_{F_L}(t|_{n})}=\infty \text{ or } \liminf_{n\to\infty} \frac{\log \mathrm{Osc}^{(1)}_{F_W}((t|_{n})^\gamma)}{\log\mathrm{Osc}^{(1)}_{F_L}(t|_{n})}\ge h;
$$
where the inequality is automatically true in the case $h=0$, and
$$\liminf_{n\to\infty} \frac{\log \mathrm{Osc}^{(1)}_{F_L}((t|_{n})^\gamma)}{\log\mathrm{Osc}^{(m)}_{F_W}(t|_{n})} \ge \frac{1}{h}.
$$
We conclude from the fact that due to~\eqref{omo1}, for $\widetilde{t}=F_L(\pi(t))$ we have
\begin{eqnarray*}
\liminf_{r\to 0} \frac{\log \mathrm{Osc}^{(m)}_{F}(B(\widetilde{t},r))}{\log r} &\ge& \min_{\gamma=-1,0,1} \liminf_{n\to\infty} \frac{\log \mathrm{Osc}^{(1)}_{F_W}((t|_{n})^\gamma)}{\log\mathrm{Osc}^{(1)}_{F_L}(t|_{n})}; \\
\liminf_{r\to 0} \frac{\log r}{\log \mathrm{Osc}^{(m)}_{F}(B(\widetilde{t},r))}  &\ge & \min_{\gamma=-1,0,1} \liminf_{n\to\infty} \frac{\log \mathrm{Osc}^{(1)}_{F_L}((t|_{n})^\gamma)}{\log\mathrm{Osc}^{(m)}_{F_W}(t|_{n})},
\end{eqnarray*}
where in the last inequality we have used Lemma~\ref{controlI}. Consequently,
$$\lim_{r\to 0} \frac{\log \mathrm{Osc}^{(m)}_{F}(B(\widetilde{t},r))}{\log r} =h, \text{ for }\mu_q\circ\pi^{-1}\circ F_L^{-1} \text{-almost every } \widetilde{t}.$$

\medskip

\noindent
{\it Almost sulely $\dim_H (\nu_q) \geq \tau^*(h)$.} We only need to deal with the case where $\tau^*(h)>0$. We are going to prove that, with probability 1, for $\mu_{q}$-almost every $t\in\mathscr{A}^{\mathbb{N}_+}$, 
$$\liminf_{n\to\infty} \frac{\log \mu_{q}([t|_{n}])}{\log O_L(t|_{n})} \ge \tau^*(h).$$
Then, due to the last claim of Lemma~\ref{controlI}, the mass distribution principle (see \cite{Pe}, Lemma 4.3.2 or Section 4.1 in \cite{Falc}) yields the conclusion.

We set $d=\tau^*(h)$. Fix  $\varepsilon>0$, and for $n\geq 1$ define
$$E_{n,\varepsilon}=\{t\in\supp(\mu_q): O_L(t|_{n})^{-d+\varepsilon}\cdot \mu_{q}([t|_{n}])\geq 1\}.$$
For $n\ge 1$ let $\eta_n>0$ and set $\displaystyle
S_{n,\varepsilon}=\sum_{w\in \mathscr{A}^n} \mu_{q}([w])\Big(O_L(t|_{n})^{-d+\varepsilon}\cdot \mu_{q}([w])\Big)^{\eta_n}$. For $n\geq 1$ we have $\displaystyle 
\mu_{q}(E_{n,\varepsilon})\leq S_{n,\varepsilon}= \sum_{w\in \mathscr{A}^n}Q_L(w)^{-(d-\varepsilon)\eta_n}Q_q(w)^{1+\eta_n} Y_q(w)^{1+\eta_n}Z_L(w)^{\eta_n}$ and 
$\displaystyle
\mathbb{E}(S_{n,\varepsilon})=\prod_{k=0}^{n-1}\Phi(q_k+q_k\eta_n,\tau(q_k)+(\tau(q_k)+d-\varepsilon)\eta_n)\cdot \mathbb{E}(Y_q(w)^{1+\eta_n}Z_L(w)^{\eta_n}) .$
If we show that $\sum_{n\ge 1} \mathbb{E}(S_{n,\varepsilon})<\infty$, then the series $\sum_{n\ge 1} \mu_q(E_{n,\varepsilon})$ converges almost surely and the conclusion follows from the Borel Cantelli lemma.

By an argument similar to those used in the proof of Proposition~\ref{endpoints}, we have
\begin{eqnarray*}
&&\log \Phi(q_k+q_k\eta_n,\tau(q_k)+(\tau(q_k)+d-\varepsilon)\eta_n) \\
&=&-\xi(q_k)q_k\eta_n+\widetilde{\xi}(q_k)(\tau(q_k)+d-\varepsilon)\eta_n+O(\eta_n^2)\\
&=&-\widetilde{\xi}(q_k)(\tau^*(\tau'(q_k))-d+\varepsilon)\eta_n+O(\eta_n^2)\le -\underline{a}\varepsilon\eta_n+O(\eta_n^2)
\end{eqnarray*}
Thus
$$\prod_{k=0}^{n-1}\Phi(q_k+q_k\eta_n,\tau(q_k)+(\tau(q_k)+d-\varepsilon)\eta_n) \le e^{-\underline{a}\varepsilon\eta_nn+O(n\eta_n^2)}.$$

Now since
$$
\mathbb{E}(Y_q(w)^{1+\eta_n}Z_L(w)^{\eta_n}) \le \|Y_q(w)\|_2^{1+\eta_n}\cdot \|Z_L(w)^{\frac{\eta_n}{1-\eta_n}}\|_2^{1-\eta_n},
$$
by taking $\eta_n=\frac{1}{\sqrt{n_0+n}}$ for $n_0$ large enough we will get $\mathbb{E}(S_{n,\varepsilon})=O(b^{-\frac{\underline{a}\varepsilon n}{2\sqrt{\log (n)}}})$.

\appendix
\section{Appendix}
\label{appendix}

\begin{proposition}\label{Appendix}
Let $M$ be a non-negative bounded and non-decreasing function defined over the subsets of $\mathbb{R}^d$. Let $\supp(M)=\{t: \ \forall r>0, \ M(B(t,r))>0\}$ be the closed support of $M$. Suppose that $\supp(M)$ is a non-empty compact set and define the $L^q$-spectrum associated with $M$ as the mapping namely 
$$
\tau_M(q)=\liminf_{r\to 0}\frac{\log \sup \left\{\sum_i M(B_i)^q\right \}}{\log (r)},
$$
where the supremum is taken over all the  families of  disjoint closed balls $B_i$ of radius $r$ with centers in $\supp(M)$. We have $\overline{\dim}_B(\supp(M))=-\tau_M (0)$, and for all $h\ge 0$, 
$$
\dim\, E_M(h):=\Big \{t\in\supp(M):\liminf_{r\to 0^+}\frac{\log (M(B(t,r))}{\log (r)}=h\Big \}\le \tau_M^*(h),
$$
a negative dimension meaning that $E_M(h)$ is empty. 
\end{proposition}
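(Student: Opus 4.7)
The plan is to establish the two assertions by standard multifractal-formalism arguments. The first identifies $-\tau_M(0)$ with $\overline{\dim}_B \supp(M)$ by noting that taking $q=0$ in the definition of $\tau_M$ turns $\sup\sum_i M(B_i)^0$ into the maximum cardinality $P_r$ of an $r$-packing of $\supp(M)$ with centers in $\supp(M)$, so that $-\tau_M(0)=\limsup_{r\to 0}\log P_r/(-\log r)$. The standard sandwich $P_{2r}(\supp(M))\le N_r(\supp(M))\le P_{r/2}(\supp(M))$ between packing and covering numbers, obtained by disjointification and by maximality, shows that this $\limsup$ equals $\overline{\dim}_B\supp(M)$; since $\dim_H\le\overline{\dim}_B$, this also bounds the Hausdorff dimension of $\supp(M)$.

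For $\dim_H E_M(h)\le\tau_M^*(h)$ I would split on the sign of $q$, exploiting both halves of the equality $\liminf_{r\to 0}\log M(B(t,r))/\log r=h$. For $q\le 0$, define $A_{N,\epsilon}=\{t\in\supp(M):M(B(t,r))\le r^{h-\epsilon}\text{ for every }r\le 1/N\}$, so that $E_M(h)\subset\bigcup_{N\ge 1}A_{N,\epsilon}$. At each scale $r\le 1/N$, choose a maximal $r$-packing $\{B(t_i,r)\}$ of $A_{N,\epsilon}$ with centers in $A_{N,\epsilon}$; by maximality $\{B(t_i,2r)\}$ covers $A_{N,\epsilon}$. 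Since $q\le 0$, $M(B(t_i,r))^q\ge r^{q(h-\epsilon)}$, and the definition of $\tau_M$ gives $\sum_i M(B(t_i,r))^q\le r^{\tau_M(q)-\epsilon'}$ for $r$ small (with arbitrary $\epsilon'>0$). Combining, $\#\{B_i\}\le r^{\tau_M(q)-\epsilon'-q(h-\epsilon)}$, hence $\dim_H A_{N,\epsilon}\le q(h-\epsilon)-\tau_M(q)+\epsilon'$; sending $\epsilon',\epsilon\to 0$ (noting $-q\epsilon\ge 0$) produces $\dim_H E_M(h)\le qh-\tau_M(q)$ for every $q\le 0$.

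For $q>0$ I would use the complementary side of the $\liminf$: for every $t\in E_M(h)$ there exist arbitrarily small $r_t$ with $M(B(t,r_t))\ge r_t^{h+\epsilon}$. Fix $\rho>0$, choose such an $r_t<\rho$ for each $t$, and apply Vitali's covering lemma in $\mathbb{R}^d$ to extract a countable disjoint subfamily $\{B(t_i,r_i)\}$ with $E_M(h)\subset\bigcup_i B(t_i,5r_i)$. Bin the indices by dyadic scales $I_k=\{i:2^{-k-1}\rho\le r_i<2^{-k}\rho\}$; the concentric enlargements $\{B(t_i,2^{-k}\rho):i\in I_k\}$ have multiplicity at most some dimensional constant $C_d$, so $I_k$ decomposes into at most $C_d$ disjoint $2^{-k}\rho$-packings with centers in $\supp(M)$. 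On each sub-packing, monotonicity of $M$ gives $M(B(t_i,2^{-k}\rho))^q\ge c\,(2^{-k}\rho)^{(h+\epsilon)q}$, which together with the $\tau_M$-bound yields $\#I_k\le C(2^{-k}\rho)^{\tau_M(q)-\epsilon'-(h+\epsilon)q}$. Summing the $s$-contribution $\sum_{k\ge 0}\#I_k(10\cdot 2^{-k}\rho)^s$ gives a geometric series tending to $0$ as $\rho\to 0$ whenever $s>(h+\epsilon)q-\tau_M(q)+\epsilon'$, so $\dim_H E_M(h)\le qh-\tau_M(q)$ after sending $\epsilon,\epsilon'\to 0$. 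Combining both cases yields $\dim_H E_M(h)\le\tau_M^*(h)$, with a negative value forcing $\mathcal{H}^0(E_M(h))=0$, i.e.\ $E_M(h)=\emptyset$. The main obstacle is exactly this $q>0$ situation, where the radii $r_t$ realizing $M(B(t,r))\ge r^{h+\epsilon}$ genuinely depend on $t$ while $\tau_M$ controls only packings at a single scale; the Vitali extraction, dyadic binning, and bounded-multiplicity argument is the technical device that reconciles them, after which the geometric-series bookkeeping is routine.
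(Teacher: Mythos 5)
Your argument is correct, and it proves the same inequalities as the paper, but via noticeably different covering technology, which is worth pointing out. The paper treats both signs of $q$ uniformly by a single application of the Besicovitch covering theorem (Mattila, Thm.~2.7): it picks for each $t\in E_M(h)$ a radius $r_{t,k_t}\le\delta$ satisfying the two-sided estimate $r^{h+\varepsilon}\le M(B(t,r))\le r^{h-\varepsilon}$, invokes Besicovitch to extract $N$ disjoint subfamilies, bins into dyadic annuli $2^{-(n+1)}<r\le 2^{-n}$, and then uses the lower (resp.\ upper) mass estimate when $q\ge 0$ (resp.\ $q\le 0$). Your proof instead splits the two regimes at a more structural level: for $q\le 0$ you exploit that the \emph{eventual} bound $M(B(t,r))\le r^{h-\varepsilon}$ holds at every small scale, so a single-scale maximal $r$-packing plus the $\tau_M(q)$ bound already controls the covering number and no Besicovitch-type device is needed; for $q>0$, where only a subsequence of radii realizes the lower mass bound, you replace Besicovitch with the more elementary Vitali $5r$-covering lemma followed by dyadic binning and a bounded-multiplicity/coloring argument that converts the extracted disjoint family at each scale into $O_d(1)$ genuine $2^{-k}\rho$-packings of $\supp(M)$. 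The net effect is the same: the $\tau_M$-bound is applied to packings at a \emph{single} radius at each dyadic level, the counts feed into a geometric $\mathcal{H}^s$-sum, and $\varepsilon,\varepsilon'\to 0$ gives $\dim_H E_M(h)\le hq-\tau_M(q)$ for all $q$. Your version buys a simpler $q\le 0$ case and replaces the (non-constructive, dimension-dependent) Besicovitch constant by the elementary Vitali/multiplicity mechanism; the paper's version is more uniform since one covering serves both signs of $q$. Both are complete; the only cosmetic caveat in your write-up is that when $q\le 0$, $q(h-\varepsilon)=qh+|q|\varepsilon$ is the quantity you need to bound (you do note the sign), and one should, as in the paper, be careful that the $\tau_M(q)$-bound is applied only to balls centered in $\supp(M)$ so that $M(B_i)^q$ is finite for $q<0$ and $M(B_i)^0=1$ — a condition your construction satisfies since $A_{N,\varepsilon}\subset\supp(M)$.
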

\begin{proof}
The equality $\overline{\dim}_B(\supp(M))=-\tau_M (0)$ is just the definition of the upper box dimension. 

\medskip

Let $h\ge 0$.  Fix $\varepsilon>0$.  For every $t\in E_M(h)$, let $(r_{t,k})_{k\ge 0}$ be a decreasing sequence tending to 0 such that $r_{t,k}^{h+\varepsilon}\le M(B(t, r_{t,k}))\le r_{t,k}^{h-\varepsilon}$.  

Fix $\delta>0$, and for each $t\in E_M(h)$ let $k_t$ be such that $r_{t,k_t}\le \delta$. Now, for every $n\ge 0$, let $A_n=\{t\in E_M(h): 2^{-(n+1)}<r_{t,k_t}\le 2^{-n}\}$. By the Besicovich covering theorem (see Theorem 2.7 in \cite{Mat}) there exists an integer $N$ such that for every $n\ge 0$ we can find $N$ disjoint subsets $A_{n,1},\dots A_{n,N}$ of $A_n$ such that each set $A_{n,j}$ is at most countable, the balls of the form $B(t,  r_{t,k_t})$, $t\in A_{n,j}$, are pairwise disjoint, and $\bigcup_{n\ge 0}\bigcup_{j=1}^N\bigcup_{t\in A_{n,j}}B(t,  r_{t,k_t})$ is a $\delta$-covering of $E_M(h)$. 

Suppose that $h\in [0,\tau_f'(0^+)]$. We have $\tau_M^*(h)=\inf_{q\in\mathbb{R}_+} hq-\tau_M(q)$. Fix $q\ge 0$ such that $\tau_M(q)>-\infty$ and then define $D_\varepsilon=(h+\varepsilon) q-\tau_M(q)+\varepsilon$. We have 
\begin{multline*}
\mathcal{H}^{D_\varepsilon}_\delta(E_M(h))\le \sum_{n\ge 0}\sum_{j=1}^N\sum_{t\in A_{n,j}} (2r_{t,k_t})^{D_\varepsilon}
\le 2^{D_\varepsilon}\sum_{n\ge 0}\sum_{j=1}^N\sum_{t\in A_{n,j}}  r_{t,k_t}^{(h+\varepsilon) q-\tau_M(q)+\varepsilon}\\
\le  2^{D_\varepsilon}\sum_{n\ge 0}\sum_{j=1}^N\sum_{t\in A_{n,j}} M(B(t, r_{t,k}))^q r_{t,k_t}^{-\tau_M(q)+\varepsilon}\\
\le  2^{D_\varepsilon}2^{|\tau_M(q)|}\sum_{n\ge 0}\sum_{j=1}^N\sum_{t\in A_{n,j}} M(B(t,2^{-n}))^q 2^{n(\tau_M(q)-\varepsilon)}.
\end{multline*}
For each $1\le j\le N$, the family $\{B(t,2^{-n})\}_{t\in A_{n,j}}$ can be divided into two disjoint $2^{-n}$-packing of $\supp(M)$. Consequently, by definition of $\tau_M(q)$, for $n$ large enough, 
$$
\sum_{t\in A_{n,j}} M(B(t,2^{-n}))^q\le 2\cdot 2^{-n (\tau_M(q)-\varepsilon/2)}
$$
and $\mathcal{H}^{D_\varepsilon}_\delta(E_M(h))=O\big (\sum_{n\ge 0} 2^{-n\varepsilon/2}\big )<\infty$. This yields $\dim\, E_M(h)\le D_\varepsilon$ for all $\varepsilon>0$, hence $\dim\, E_M(h)\le hq-\tau_M(q)$.

\medskip

 Now suppose that $h>\tau_f'(0^+)$. We have $\tau_M^*(h)=\inf_{q\in\mathbb{R}_-} hq-\tau_M(q)$. Fix $q\le  0$ such that $\tau_M(q)>-\infty$ and then $D_\varepsilon=(h-\varepsilon) q-\tau_M(q)+\varepsilon$. This time we have 
$$
\mathcal{H}^{D_\varepsilon}_\delta(E_M(h))\le 2^{D_\varepsilon}2^{|\tau_M(q)|}\sum_{n\ge 0}\sum_{j=1}^N\sum_{t\in A_{n,j}} M(B(t,2^{-(n+1)}))^q 2^{n(\tau_M(q)-\varepsilon)},
$$
and for each $1\le j\le N$, the family $\{B(t,2^{-(n+1)})\}_{t\in A_{n,j}}$ is a $2^{-(n+1)}$-packing of $\supp(M)$. We conclude as in the previous case. 
\end{proof}

\begin{proposition}\label{VBE}
Let $\big (V^{(n)}=(V^{(n)}_0,\dots,V^{(n)}_{b-1})\big )_{n\ge 1}$, be a sequence of random vectors taking values in $\mathbb{C}^b$, and such that $\mathbb{E}\big (\sum_{i=0}^{b-1}V^{(n)}_i\big )=1$. Let $\{V(w)\}_{w\in\mathscr{A}^*}$ be a sequence of independent vectors such that $V(w)$ is distributed as $V^{(|w|)}$ for each $w\in\mathscr{A}^*$. Define $Z_0=1$ and for $n\ge 1$
$$
Z_n=\sum_{w\in\mathscr{A}^n}\prod_{k=1}^{n}V_{w_k}(w|_{k-1}).
$$
Let $p\in (1,2]$. There exists a constant $C_p\le 2^p$ depending on $p$ only such that for all $n\ge 1$
$$
\mathbb{E}(|Z_n-Z_{n-1}|^p)\le C_p\prod_{k=1}^n\mathbb{E}\Big (\sum_{i=0}^{b-1}|V_i^{(k)}|^p\Big ).
$$
\end{proposition}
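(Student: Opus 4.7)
The plan is to realise $Z_n-Z_{n-1}$ as a sum of conditionally independent mean-zero random variables and invoke the classical von Bahr--Esseen inequality. Writing $u=w\cdot i$ for $u\in\mathscr{A}^n$ with $w\in\mathscr{A}^{n-1}$ and $i\in\{0,\dots,b-1\}$, and setting $P(w)=\prod_{k=1}^{n-1}V_{w_k}(w|_{k-1})$, one has the branching identity $Z_n=\sum_{w\in\mathscr{A}^{n-1}}P(w)\sum_{i=0}^{b-1}V_i(w)$. Since $Z_{n-1}=\sum_{w\in\mathscr{A}^{n-1}}P(w)$ and $\mathbb{E}(\sum_i V_i^{(n)})=1$, this gives
$$
Z_n-Z_{n-1}=\sum_{w\in\mathscr{A}^{n-1}}P(w)\,T(w),\qquad T(w)=\sum_{i=0}^{b-1}V_i(w)-1,
$$
where each $T(w)$ is centred.

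Next I would condition on $\mathcal{G}=\sigma(V(u):|u|\le n-2)$. All factors $P(w)$ with $|w|=n-1$ are $\mathcal{G}$-measurable, while the vectors $V(w)$ with $|w|=n-1$ are jointly independent and independent of $\mathcal{G}$; in particular the random variables $T(w)$ are, conditionally on $\mathcal{G}$, independent and identically distributed copies of a centred variable $T$ with the law of $\sum_i V_i^{(n)}-1$. The classical von Bahr--Esseen inequality applied conditionally yields
$$
\mathbb{E}\bigl[|Z_n-Z_{n-1}|^p\bigm|\mathcal{G}\bigr]\le 2\sum_{w\in\mathscr{A}^{n-1}}|P(w)|^p\,\mathbb{E}|T|^p.
$$
Taking expectations and exploiting the independence of the $V(u)$'s across distinct tree sites, a telescoping computation on the tree gives the key factorisation
$$
\mathbb{E}\Bigl(\sum_{w\in\mathscr{A}^{n-1}}|P(w)|^p\Bigr)=\prod_{k=1}^{n-1}\mathbb{E}\Bigl(\sum_{i=0}^{b-1}|V_i^{(k)}|^p\Bigr).
$$

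It then remains to bound $\mathbb{E}|T|^p$ by a constant multiple of $\mathbb{E}(\sum_i|V_i^{(n)}|^p)$. Since $T$ is centred and $\mathbb{E}(\sum_iV_i^{(n)})=1$, the elementary inequality $\mathbb{E}|X-\mathbb{E}X|^p\le 2^p\,\mathbb{E}|X|^p$ (which is immediate from $\|X-\mathbb{E}X\|_p\le\|X\|_p+|\mathbb{E}X|\le 2\|X\|_p$) applied to $X=\sum_iV_i^{(n)}$ gives $\mathbb{E}|T|^p\le 2^p\,\mathbb{E}\bigl|\sum_i V_i^{(n)}\bigr|^p$, and then an $\ell^p$-type comparison controls the right-hand side in terms of $\mathbb{E}(\sum_i|V_i^{(n)}|^p)$. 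Multiplying back through delivers the desired bound with an explicit constant.

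The main obstacle I expect lies in the final step, namely passing from $\mathbb{E}\bigl|\sum_iV_i^{(n)}\bigr|^p$ to $\mathbb{E}(\sum_i|V_i^{(n)}|^p)$ while keeping the constant $C_p\le 2^p$: the components of $V^{(n)}$ are not assumed independent, so one cannot directly apply von Bahr--Esseen to these components. The $L^p$ triangle inequality combined with a power-mean step handles this, and it is precisely here that the $2^p$ is produced; the preceding factor of $2$ coming from the conditional von Bahr--Esseen inequality has to be absorbed into this single combined constant by keeping track of the centering carefully.
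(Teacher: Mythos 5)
Your decomposition $Z_n-Z_{n-1}=\sum_{w\in\mathscr{A}^{n-1}}P(w)\bigl(\sum_iV_i(w)-1\bigr)$, the conditioning on $\mathcal{G}=\sigma(V(u):|u|\le n-2)$, the conditional von~Bahr--Esseen step, and the telescoping identity $\mathbb{E}\bigl(\sum_{w}|P(w)|^p\bigr)=\prod_{k=1}^{n-1}\mathbb{E}\bigl(\sum_i|V_i^{(k)}|^p\bigr)$ are all correct and are exactly the right skeleton for this statement; this is the standard route in the cascade literature (the paper simply cites Barral's 2001 paper for the proof).

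The genuine gap is precisely the last step you flag as the ``main obstacle'', and the fix you sketch does not close it. You need an inequality of the form $\mathbb{E}\bigl|\sum_iV_i^{(n)}-1\bigr|^p\le c_p\,\mathbb{E}\bigl(\sum_i|V_i^{(n)}|^p\bigr)$ with $c_p$ depending only on $p$, but your chain --- centering at the cost of $2^p$, then ``$L^p$ triangle inequality combined with a power-mean step'' --- turns $\mathbb{E}|\sum_iV_i|^p$ into $\mathbb{E}(\sum_i|V_i|^p)$ only at the price of a factor $b^{p-1}$ (this is what the power-mean step \emph{is}), so your final constant is $2\cdot 2^p\cdot b^{p-1}$, which depends on $b$. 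Worse, no $b$-uniform constant of the form claimed exists: take $V_0=\cdots=V_{b-1}=U/b$ with $U\in\{0,2\}$ equiprobably, so $\mathbb{E}\sum_iV_i=1$. Then $Z_1-Z_0=U-1$ gives $\mathbb{E}|Z_1-Z_0|^p=1$, while $\mathbb{E}\bigl(\sum_i|V_i|^p\bigr)=b^{1-p}\,2^{p-1}$, so the best possible constant at $n=1$ is $(b/2)^{p-1}$, which exceeds $2^p$ as soon as $b>2^{(2p-1)/(p-1)}$ (e.g.\ $b\ge 9$ when $p=2$). So the sharp numerical claim $C_p\le 2^p$ ``depending on $p$ only'' cannot be reached by your argument, nor by any argument, under the hypotheses as stated; you should either accept a $b$-dependent constant (which is all that is used later in the paper, since the relevant application has $\prod_kM_k=M^n$ with $M<1$ and only needs $C_p<\infty$), or locate the additional hypothesis in the cited reference under which the stronger constant holds. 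Separately, note that the von~Bahr--Esseen constant $2$ is proved for real-valued summands; for the complex $V_i$ here you should say a word about why it carries over (e.g.\ by treating real and imaginary parts, which costs another harmless factor), since the vectors take values in $\mathbb{C}^b$.
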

See the proof of Theorem 1 in \cite{B3}. 

\begin{proposition}\label{xmoments} We work under the assumptions of Theorem~A or B. Let $m\ge 1$ and $U\in\{W,L\}$.

(1) If $q>1$ and $\varphi_U(q)>0$ then $\mathbb{E}((Z_U^{(m)})^q)<\infty$. Moreover, if $W$ satisfies the assumptions of Theorem~B(2) then  ${\rm ess\, sup}\, {\rm Osc}^{(m)}_{F_W}([0,1])<\infty$.

(2) Define $\psi_U^{(m)}(t)=\mathbb{E}(e^{-tZ_U^{(m)}})$ for $t\geq 0$. Let $A_U=\max_{0\le i\le b-1} |U_i|$.

If $q>0$ and $\mathbb{E}(A_U^{-q})<\infty$ then $\psi_U^{(m)}(t) = O(t^{-p})$ for all $p\in(0,q)$. Consequently, $\mathbb{E}((Z_U^{(m)})^{-p})<\infty$ for all $p\in (0,q)$. 
\end{proposition}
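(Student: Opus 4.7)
The plan splits along the two assertions.

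For part~(1), the starting observation is the elementary bound $\mathrm{Osc}^{(m)}_f(J)\le 2^m\,\mathrm{Osc}^{(1)}_f(J)$, proved by centering: since $\sum_{k=0}^m(-1)^{m-k}\binom{m}{k}=0$ for $m\ge1$, one has $\Delta_h^m f(t)=\sum_{k=0}^m(-1)^{m-k}\binom{m}{k}(f(t+kh)-f(t))$, whence $|\Delta_h^m f(t)|\le 2^m\,\mathrm{Osc}^{(1)}_f([t,t+mh])$. Applied to $F_U$ (recall $F_U(0)=0$), this gives $Z_U^{(m)}\le 2^{m+1}\|F_U\|_\infty$. Theorem~A (resp.\ Theorem~B(1)) asserts uniform convergence of $(F_{U,n})_{n\ge1}$ to $F_U$ in $L^q$-norm under the standing hypothesis, so $\mathbb{E}(\|F_U\|_\infty^q)<\infty$ and hence $\mathbb{E}((Z_U^{(m)})^q)<\infty$. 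For the essential-sup bound in the critical case (Theorem~B(2)), the combinatorial constraint~\eqref{critical} forces $|W_i|\le1$ almost surely with $W_i\in\{-1,0,+1\}$ whenever $|W_i|=1$ (because of the $(0,1)/(1,0)$ partial-sum condition). A direct induction on $n$ along the cascade then shows $\|F_{W,n}\|_\infty\le 1$, and this bound passes to the uniform limit.

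For part~(2), I would exploit the exact scaling $\mathrm{Osc}^{(m)}(F_U,[i/b,(i+1)/b])=|U_i|\,Z_U^{(m)}(i)$, which follows from the self-similarity $F_U=F_U(i/b)+U_i F_U^{[i]}\circ S_i^{-1}$ on $[i/b,(i+1)/b]$ and the affine invariance $\Delta_h^m(f\circ S_i)(t)=\Delta_{h/b}^m f(S_i(t))$. Taking the maximum over $i$ and using that $\{Z_U^{(m)}(i)\}_{0\le i\le b-1}$ are i.i.d.\ copies of $Z_U^{(m)}$ independent of $U$ yields the stochastic lower bound $Z_U^{(m)}\ge A_U Z'$ with $Z'$ an iid copy of $Z_U^{(m)}$ independent of $A_U$, and therefore $\psi_U^{(m)}(t)\le \mathbb{E}(\psi_U^{(m)}(tA_U))$. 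Iterating $n$ times, $\psi_U^{(m)}(t)\le\mathbb{E}(\psi_U^{(m)}(tB_n))$ where $B_n=\prod_{k=0}^{n-1}A_U^{(k)}$ is a product of iid copies of $A_U$. Fix $p\in(0,q)$ and a threshold $K>0$; splitting along $\{tB_n\ge K\}$ and using monotonicity of $\psi_U^{(m)}$ gives
\[
\psi_U^{(m)}(t)\le \psi_U^{(m)}(K)+\mathbb{P}(B_n<K/t)\le \psi_U^{(m)}(K)+(K/t)^\lambda\,\mathbb{E}(A_U^{-\lambda})^n
\]
for any $\lambda\in(0,q)$, by Markov. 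Since $\psi_U^{(m)}(K)\to 0$ as $K\to\infty$ (because $Z_U^{(m)}>0$ a.s., by the non-degeneracy of Proposition~\ref{nonpol}), a joint choice of $K=K(t)$ and $n=n(t)\asymp\log t$ balancing the two contributions delivers $\psi_U^{(m)}(t)=O(t^{-p})$ for each $p\in(0,q)$. The final negative-moment statement follows from the identity $\mathbb{E}((Z_U^{(m)})^{-p})=\Gamma(p)^{-1}\int_0^\infty t^{p-1}\psi_U^{(m)}(t)\,\mathrm{d}t$, which converges whenever $\psi_U^{(m)}(t)=O(t^{-p'})$ for some $p'>p$, a condition our bound supplies for any $p<p'<q$.

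The main obstacle is that the iteration $\psi(t)\le \mathbb{E}(\psi(tA_U))$ has no built-in contraction: $\mathbb{E}(A_U^{-p})\ge 1$ holds for every $p\ge 0$, with equality only in trivially degenerate situations, so a naive ansatz $\psi(t)\le Ct^{-p}$ is not stable under the iteration alone. The sharpness of the exponent $p\in(0,q)$ must therefore come from the careful coupling of the iterated Laplace inequality with the Chernoff-type tail estimate on $B_n$, which exploits the full strength of $\mathbb{E}(A_U^{-q})<\infty$ at exponents approaching $q$ via the choice of $n(t)\asymp\log t$; the cutoff $K(t)$ is then tuned so that both $\psi_U^{(m)}(K(t))$ and $(K(t)/t)^\lambda\mathbb{E}(A_U^{-\lambda})^{n(t)}$ are simultaneously $O(t^{-p})$.
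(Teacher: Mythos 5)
Part (1) is essentially the paper's argument and is fine: the key bound $\mathrm{Osc}^{(m)}_f \le 2^{m-1}\,\mathrm{Osc}^{(1)}_f \le 2^m\|f\|_\infty$ combined with the $L^q$-convergence from Theorems~A/B. Your sketch for the essential-sup bound in the critical case is plausible in spirit but glosses over the induction; the paper simply defers this to \cite{BJMpartII}.

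For part (2), however, there is a genuine gap, and you have put your finger on it yourself without resolving it. The one-factor inequality $\psi(t)\le \mathbb{E}(\psi(tA_U))$ carries no contraction: if $A_U\le 1$ a.s.\ (as happens in the critical case) then the right-hand side is $\ge\psi(t)$ and the iteration yields no information whatsoever. Your proposed fix — iterate to $\psi(t)\le\psi(K)+(K/t)^\lambda\mathbb{E}(A_U^{-\lambda})^n$ and then tune $K(t)$ and $n(t)\asymp\log t$ — is circular: making the first term $O(t^{-p})$ requires an a priori decay rate for $\psi$ along $K(t)\to\infty$, which is exactly what you are trying to establish, and a bootstrap computation shows that optimizing $K$ under an ansatz $\psi(s)\le Cs^{-\alpha}$ produces the strictly \emph{worse} exponent $\lambda\alpha/(\alpha+\lambda)<\alpha$ (and worse still once the factor $\mathbb{E}(A_U^{-\lambda})^n\ge1$ is taken into account). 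No choice of $n(t)$ repairs this.

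The paper's route is structurally different and this is where the assumption $\mathbb{P}(\sum_i\mathbf{1}_{\{W_i\neq 0\}}\ge 2)=1$ is used. From $Z_U^{(m)}\ge b^{-1}\sum_i|U_i|Z_U^{(m)}(i)$ one retains \emph{two} factors, giving a bound of the form $\psi(t)\le\mathbb{E}\bigl(\psi(A_0t)\psi(A_1t)\bigr)$ with $A_0=A_U$ and $A_1$ the modulus of a second non-vanishing component. Setting $\phi=\psi^r$ and applying H\"older, $\phi(t)\le\mathbb{E}(\phi(A_0t))\cdot\mathbb{E}\bigl(\psi(A_1t)^{r/(r-1)}\bigr)^{r-1}$; since $Z_U^{(m)}>0$ a.s.\ (Proposition~\ref{nonpol}), $\psi(t)\to0$ and by bounded convergence the second factor tends to $0$, hence for any $\gamma\in(0,1)$ there is $t_0$ with $\phi(t)\le\gamma\,\mathbb{E}(\phi(A_0t))$ for $t\ge t_0$. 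Choosing $\gamma\mathbb{E}(A_0^{-p})<1$ turns the iteration into a genuine contraction, and the geometric series argument yields $\phi(t)=O(t^{-p})$, hence $\psi(t)=O(t^{-p/r})$ for all $r>1$ and $p<q$. It is this second multiplicative factor — absent from your iteration — that supplies the missing contraction.
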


\noindent
{\it{Proof of Proposition $\ref{xmoments}$}}  (1) Since ${\rm Osc}^{(m)}_{F_U}([0,1])\le 2^{m-1}{\rm Osc}^{(1)}(F_U,[0,1])\le 2^m\|F_U\|_\infty$, this is a direct consequence of Theorems A and B (that ${\rm ess\, sup}\, {\rm Osc}^{(m)}_{F_W}([0,1])<\infty$ when $W$ satisfies the assumptions of Theorem~B(2) is not stated in \cite{BJMpartII} but established in the proof of this theorem).

\noindent
(2) Since ${\rm Osc}^{(m)}_{F_U}([0,1])\ge{\rm Osc}^{(m)}_{F_U}(I_i)$ for all $0\le i\le b-1$, by using (\ref{self-sim2}) we get
\begin{equation}\label{xineq}
Z_U^{(m)} \geq  b^{-1}\sum_{i=0}^{b-1} |U(i)|\cdot Z_U^{(m)}(i),
\end{equation}
where the $Z_U^{(m)}(i)$ are independent copies of $Z$ and they are independent of $W$. 

Moreover, thanks to Proposition \ref{nonpol} applied to $F_U$, we know that $Z_U^{(m)}>0$ almost surely for all $m\in\mathbb{N}_+$. Also, with probability 1, we can define $i_0=\max\{0\le i\le b-1:|U_i|=\max_{0\le k\le b-1}|U_k|$ and  $i_1=\inf\{0\le i \le b-1: i\neq i_0,\ U_i\neq 0\}$, $A_0=|U_{i_0}|$ and $A_1=|U_{i_1}|$. 

Suppose that $\mathbb{E}(A_0^{-q})<\infty$. This clearly holds if $\varphi_U(-q)>-\infty$  or if there exists $a>0$ such that $ \max_{0\le k\le b-1}|U_k|\ge a$ almost surely (for instance  $a=1/b$ is convenient when $U$ is conservative). 

 Set $\psi=\psi_U$. By definition of $\psi$, we deduce from (\ref{xineq}) and the fact that $Z_U^{(m)}$ is almost surely positive that $\psi(t)\le \mathbb{E}\big (\psi(A_0 t)\psi(A_1 t)\big )$ and $\lim_{t\to\infty} \psi(t)=0$. Suppose that we have shown that $\psi(t)=O(t^{-p})$ at $+\infty$, for all $p\in(0,q)$. Then, for $x>0$ we have $\mathbb{P}(Z^{(m)}\le x)\le e^{tx}\psi(t)$ and choosing $t=p/x$ yields $\mathbb{P}(Z\le x)=O(x^{p})$ at $0^+$. Hence $\mathbb{E}(Z^{-p})<\infty$ if $h\in (0,q)$.  

Now we essentially use the elegant approach of \cite{Liu:1} for the finiteness of the moments of negative orders of $F_U(1)$, when the components of $W$ are non-negative (see also the references in \cite{Liu:1}  for this question). Let $r>1$ and $\phi=\psi^r$. Due to the bounded convergence theorem we have $\lim_{t\to\infty} \mathbb{E}(\psi(A_1t)^{r/(r-1)})=0$, so the H\"older inequality yields $\phi(t)=o(\mathbb{E}(\phi(t A_0))$ at $\infty$. Let $\gamma\in (0,1)$ small enough to have $\gamma\mathbb{E}(A_0^{-p})<1$, and let $t_0>0$ such that 
\begin{equation}\label{momneg}
\phi(t)\le \gamma\mathbb{E}(\phi(t A_0)),\  t\ge t_0. 
\end{equation}
Let $(\widetilde A_i)_{i\ge 1}$ be a sequence of independent copies of $A_0$ Since $\phi\le 1$, for $t\ge t_0$ we can prove by induction using (\ref{momneg}) the following inequalities valid for all $n\ge 2$:
\begin{eqnarray*}
\phi(t)&\le & \gamma\mathbb{P}(A_0t<t_0)+\gamma\mathbb{E}\big (\mathbf{1}_{\{A_0t\ge t_0\}}\phi(A_0 t)\big)\\
&\le &\gamma\mathbb{E}(A_0^{-p}) (t_0/t)^p+\gamma^2 \mathbb{E}\big ( \mathbf{1}_{\{A_0t\ge t_0\}}\phi(A_0 \widetilde A_1t)\big)\\
&\le &\gamma\mathbb{E}(A_0^{-p}) (t_0/t)^p +\gamma^2 \mathbb{E}\big (\phi(A_0 \widetilde A_1t)\big)\\
&\le & \gamma\mathbb{E}(A_0^{-p}) (t_0/t)^p+\gamma^2(\mathbb{E}(A_0^{-p}))^2 (t_0/t)^p+\gamma^2 \mathbb{E}( \mathbf{1}_{\{A_0\widetilde A_1t\ge t_0\}}\phi(A_0 \widetilde A_1t)\big)\\
&\le & (t_0/t)^p\sum_{k=1}^n(\gamma\mathbb{E}(A_0^{-p}))^k+\gamma^n\mathbb{E}( \mathbf{1}_{\{A_0\widetilde A_1\cdots \widetilde A_{n-1}t\ge t_0\}}\phi(A_0 \widetilde A_1\cdots \widetilde A_{n-1}t)\big).
\end{eqnarray*}
Since $\psi\le 1$, and both $\gamma$ and $\gamma\mathbb{E}(A_0^{-p})$ belong to $(0,1)$, letting $n$ tend to $\infty$ yields $\phi(t)=\psi(t)^r=O(t^{-p})$. Since $r$ and $p$ are arbitrary respectively in $(1,\infty)$ and $(0,q)$, we have the desired result.

%
%

\end{document}